\newcommand*{\bs}[1]{ \boldsymbol{#1} }
\newcommand{\indep}{\perp \!\!\! \perp}
\newcounter{pkt}
{\begin{list}{(\roman{pkt})}{\usecounter{pkt}\parskip0ex\parsep0ex\itemsep0ex\topsep0ex}}%
	{\end{list}}
\definecolor{jblue}{RGB}{0,191,255}
\definecolor{jgold}{RGB}{212,202,58}
\definecolor{jmagenta}{RGB}{255,109,174}
\definecolor{jgreen}{RGB}{0,183,141}
\definecolor{jgreen2}{RGB}{23,139,23}
\definecolor{jorange}{RGB}{255,165,0}
\theoremstyle{plain}
\newtheorem{theorem}{Theorem}
\newtheorem*{theorem*}{Theorem}
\newtheorem{lemma}{Lemma}
\newtheorem*{lemma*}{Lemma}
\newtheorem{assumption}{Assumption}
\theoremstyle{definition}
\newtheorem{remark}{Remark}
\newtheorem*{remark*}{Remark}
\newtheorem{example}{Example}
\newtheorem{extension}{Extension}
\numberwithin{theorem}{section}
\numberwithin{lemma}{section}
\numberwithin{remark}{section}
\newcommand{\blackqed}{\hfill \ensuremath{\blacksquare}}
\newcommand{\argmin}{\operatorname*{argmin}}
\newcounter{saveeqn}
\begin{document}

    \title{\vspace*{-2cm} \bf 
        Large-Scale Estimation under Unknown Heteroskedasticity\footnote{I thank my doctoral advisors Xu Cheng, Frank Schorfheide and Petra Todd for their guidance. I also thank participants at numerous seminars, in particular the seminar series at University of Pennsylvania and SMU, as well as referees, for their valuable comments that have improved the paper. All errors are my own.}
    }

    \author{Sheng Chao Ho\thanks{Correspondence: School of Economics, Singapore Management University, 90 Stamford Road, Singapore, S178903. scho@smu.edu.sg. }}

    \date{This Version: \today}

    \maketitle

    \begin{abstract} 	
        This paper studies nonparametric empirical Bayes methods in a heterogeneous parameters framework that features unknown means and variances. We provide extended Tweedie’s formulae that express the (infeasible) optimal estimators of heterogeneous parameters, such as unit-specific means or quantiles, in terms of the density of certain sufficient statistics. These are used to propose feasible versions with nearly parametric regret bounds of the order of $(\log n)^\kappa / n$. The results rely on a distributional assumption, and thus a misspecification analysis is also presented. The estimators are employed in a study of teachers' value-added, where we find that allowing for heterogeneous variances across teachers is crucial for delivering optimal estimates of teacher quality and detecting low-performing teachers.
    \end{abstract}

    \section{Introduction}
\label{sec:intro}

The central task of this paper is the large-scale, optimal estimation of unit-specific parameters under unknown heteroskedasticity. 
Motivating examples\footnote{Examples of the following three applications are \cite{Chetty2018}, \cite{Gilraine2020}, and \cite{Liu2020} respectively.} include estimating intergenerational mobility of neighborhoods to guide families toward areas with the highest potential for upward mobility, or teacher value-added for use as inputs in high-stakes labor decisions. 
We work within the following model in which, or variations of it, the estimation problem of these applications are framed\footnote{We focus on \eqref{eq:disagg} for common $J_i=J$ and discuss variations such as heterogeneous $J_i$ subsequently in Section~\ref{sec:ext}. This is because the central feature of our analysis is the unknown heteroskedasticity.}:
\begin{equation}
	y_{ij} = \mu_i + \sigma_i \cdot \epsilon_{ij}, 
	\quad 
	\epsilon_{ij} \mid \mu_i,\sigma_i
	\sim_{\text{iid}} p_\epsilon
	\label{eq:disagg}
\end{equation}
for $j=1,\dots,J$, where $p_\epsilon$ has zero mean and unit variance and the unobserved $(\mu_i,\sigma_i) \sim_{\text{iid}} G_0$ for $i=1,\dots,n$.
For now, let us consider estimating teacher value-added as the running example: the outcome $y_{ij}$ represents the test score of the $j^{\text{th}}$ student in teacher $i$'s class, often after partialling out student-specific controls such that the independence of $\epsilon_{ij}$ is reasonable. 
A common objective is to estimate $\left\{\mu_i\right\}_{i\in[n]}$, interpretable as how the average student (i.e., $\epsilon_{ij}=0$) will perform under each teacher and commonly termed as the teachers' mean value-added.

Relative to existing large-scale estimation methods that impose homoskedasticity, we develop optimal estimators allowing for unknown heteroskedasticity, which matters in two respects. First, $\sigma_i$ is a key component of the optimal estimator of $\mu_i$: restricting $\sigma_i=\sigma$ inflates risk and can systematically under- or over-estimate certain subsets of $\{\mu_i\}_{i\in[n]}$. Furthermore, $\sigma_i$ is important for the estimation of unit-specific quantiles
\begin{equation}
	q_{\alpha,i}:= \inf \left\{ \epsilon\in\mathbb{R} : \alpha \leq F_\epsilon(\epsilon \mid \mu_i,\sigma_i) \right\} 
	\text{ for some $\alpha\in(0,1)$}
	\label{eq:quantiles}
\end{equation}
that are of policy-relevance, and a generalization of the commonly targeted mean value-added $\mu_i$. 
For example, in the teacher value-added context, this means estimating and comparing how the bottom (say) 10\% student performs under each teacher rather than only the average student -- a more relevant object when policymakers care about the lower tail of student performance.

In many applications $J \ll n$, with two implications. First, the few observations per unit make estimates of $\mu_i$ or $q_{\alpha,i}$ based on unit $i$ alone imprecise, and hence high-risk. Second, the large $n$ lets us borrow information across units to sharpen individual estimates and reduce average risk.    
More precisely, it is known that the estimator of $\mu_i$ or $q_{\alpha,i}$ minimizing the average (across $i$) risk under quadratic loss is its posterior mean that takes $G_0$, the distribution of $(\mu_i,\sigma_i)$, as the prior. 
This forms the oracle estimator\footnote{See the end of this Section for notational details.} $\mathbb{E}_{G_0}[\, \cdot \mid \mathcal{Y}_i ]$, where $\mathcal{Y}_i$ is the set of observations associated with unit $i$.

Under a sampling-distributional assumption, the oracle is still infeasible because $G_0$ is unknown in practice, and feasible empirical Bayes versions that exploit the cross-sectional information to mimic the oracle $\mathbb{E}_{G_0}[\, \cdot \mid \mathcal{Y}_i ]$ have to be proposed.
However, nonparametric estimation of $G_0$ using the cross-section is known to have slow rates of convergence because we are essentially estimating a density of unobservables. This suggests that emulating the oracle will be a difficult task. In response, this paper proves 
an extension of the Tweedie's formula\footnote{We review this in Section~\ref{sec:lit}.} to the setting of unknown heteroskedasticity, where the oracle is shown to depend on $G_0$ only through the density of certain sufficient statistics. 
Thus the problem of emulating the oracle is dramatically simpler, since the performance of feasible versions now only hinge on their ability to estimate a density of observables.
We exploit this insight to provide regret bounds for the proposed estimators that are of the order $(\log n)^\kappa /n$ for some $\kappa>0$. 
The upshot is that there is little cost to practitioners adopting a nonparametric empirical Bayes approach where both the means and variances are unknown and heterogeneous, whereas the benefits in terms of risk reduction is often substantial relative to the commonly used parametric modeling of $G_0$.

The remaining paper is organized as follows. Section~\ref{sec:lit} reviews the literature and Section~\ref{sec:model} defines the model, risk and optimal estimators. Section~\ref{sec:theory} introduces feasible versions of the oracles and proves their regret convergence rates, and Section~\ref{sec:ext} discusses extensions including heterogeneous $J_i$ and optimal forecasting in large-$n$ short-$T$ panels. Section~\ref{sec:missp} studies misspecification of the sampling distribution, Section~\ref{sec:simul} evaluates finite-sample performance via numerical experiments, and Section~\ref{sec:appl} demonstrates the importance of unknown heteroskedasticity in an application to teacher quality. Section~\ref{sec:concl} concludes.

\noindent\textit{Notations.} 
The symbol $[n]$ refers to the set $\{1,\dots,n\}$. The symbol
$\mathcal{Y}_i$ refers to the set of observations associated with unit $i$ (i.e., $\{y_{ij}\}_{j\in[J]}$) and $\mathcal{Y}$ is taken to refer to the full set of observations (i.e., $\{y_{ij}\}_{i\in[n],j\in[J]}$).
A subscript on an expectation denotes the (prior) distribution of the latent variables, while a superscript makes explicit the object being integrated out; for instance $\mathbb{E}_{G}^{\mu}[\mu \mid \mathcal{Y}_i]$ integrates out the unknown $\mu$.
Finally let $\partial_y f(y,x)$ denote the partial derivative of $f(y,x)$ with respect to $y$.
    \section{Related Work}
\label{sec:lit}
Our paper belongs to the empirical Bayes methodology, which goes back at least to \cite{Efron1973}'s interpretation of the James-Stein estimator as the posterior mean under a normal-normal hierarchical model. This is an instance of parametric empirical Bayes, where a class of priors is posited and feasible estimators emulate the optimal Bayes estimate within the implied class; recent advances include \cite{Xie2012} and \cite{Kwon2026}, with applications in \cite{Chetty2014}, \cite{Finkelstein2016} and others. Such methods can be restrictive, however, since the choice of prior may induce undesirable behavior -- for instance, the normal-normal model imposes equal shrinkage on all units with the same number of observations. Nonparametric empirical Bayes instead targets the posterior mean without a parametric prior, affording more flexible shrinkage, but raises finite-sample concerns because nonparametric estimation of $G_0$, a deconvolution problem, has slow convergence rates (as slow as $(\log n)^{-1}$; see \cite{Fan1991}).

Under homoskedasticity ($\sigma_i=\sigma_0$ for all $i$), progress on the nonparametric route was enabled by the Tweedie representation of the posterior mean\footnote{In the homoskedastic case $G_0$ denotes the univariate distribution of $\mu_i$.} under a normal sampling model (see \cite{Efron2011}):
\begin{equation}
	\mathbb{E}_{G_0}[\mu_i\mid \sigma_0,\mathcal{Y}_i]
	=
	y_i + \frac{\sigma_0^2}{J} \cdot \partial_y\log f_{G_0}(y_i \mid \sigma_0)
	\label{eq:tweed.hom}
\end{equation}
where $y_i$ is the sample mean and $f_{G_0}(y_i \mid \sigma_0)$ its mixture density. The key insight is that the Bayes correction requires only $f_{G_0}(y_i \mid \sigma_0)$, not the prior $G_0$ itself. This was instrumental for \cite{Jiang2009}, who use a nonparametric plug-in $\hat{G}$ to establish fast regret convergence rates for $\mathbb{E}_{\hat{G}}[\mu_i\mid \sigma_0,\mathcal{Y}_i]$ in estimating $\{\mu_i\}_{i\in[n]}$. This was extended by \cite{Jiang2020} to known heteroskedasticity with the heteroskedasticity independent of $\mu_i$, and by \cite{Chen2026} to allow dependence between $\mu_i$ and the known heteroskedasticity through the first two moments.\footnote{Remark~\ref{rmk:het} elaborates on our setting and contributions relative to these papers.}

Our paper also takes the nonparametric route but in the more general setting of unknown heteroskedasticity. This substantially complicates estimation of $\mu_i$, since $\sigma_i$ -- which regulates the Bayes correction in \eqref{eq:tweed.hom} -- is now unknown, and the insight above no longer applies because the form of $\mathbb{E}_{G_0}^\sigma [\sigma_i^2 \cdot \partial_y\log f_{G_0}(y_i | \sigma_i) \mid \mathcal{Y}_i]$, and whether it depends on $G_0$ only through a density of observables, is not known. The papers closest in framework are \cite{Gu2017}, \cite{Gu2017a} and \cite{Banerjee2023}. \cite{Gu2017} and \cite{Gu2017a} study $\mathbb{E}_{\hat{G}}[\mu_i\mid \mathcal{Y}_i]$ and $\mathbb{E}_{\hat{G}}[\sigma_i^2\mid\mathcal{Y}_i]$ for a nonparametric $\hat{G}$ via simulations with promising results. We bolster these through a substantial extension of Tweedie's formula accommodating unknown heteroskedasticity and use it to study the theoretical properties of $\mathbb{E}_{\hat{G}}[\mu_i \mid \mathcal{Y}_i]$, crucially without restricting the dependence within $(\mu_i,\sigma_i)$.

The same model is considered by \cite{Banerjee2023}, who study optimal estimation of $\{\mu_i\}_{i\in[n]}$ but under a precision-weighted loss that, roughly speaking, transforms the heteroskedastic problem back into a homoskedastic one -- less appropriate for our applications, where each unit's loss is equally important. \cite{Ignatiadis2025} considers an identical setting in a partially empirical Bayes framework (means fixed, variances random with a common distribution), targeting multiple testing on the means with guarantees such as FDR control; we instead adopt a fully empirical Bayes framework and study optimal estimation.

More generally, our problem relates to learning the conditional expectation of a latent variable given observables. \cite{Ignatiadis2023} estimate means in a similar setting but with possibly non-normal sampling models and available microdata, using the $J$ replicates and sample-splitting to recast the problem as a regression problem. Our empirical Bayes approach requires only the sufficient statistics, which as aggregates are more readily available with confidential microdata (see e.g. \cite{Chetty2014}); and our near-parametric regret results assume a Gaussian sampling model while restricting the prior little, whereas their parametric rates require assumptions on the joint distribution of observed and latent variables (e.g. exponential families with conjugate priors), so the two are non-nested and complementary.

Finally, beyond the means $\{\mu_i\}_{i\in[n]}$, we study estimators for objectives such as the quantiles $\{q_{\alpha,i}\}_{i\in[n]}$ -- a policy-relevant generalization of the means -- and the variances $\{\sigma_i^2\}_{i\in[n]}$ or standard deviations $\{\sigma_i\}_{i\in[n]}$ that may matter in other applications.

    \section{Model and Risk}
\label{sec:model}

\subsection{Model}
\label{subsec:model}

We maintain the following model throughout: 
\begin{equation}
	\label{eq:disagg.model}
	y_{ij} = \mu_i + \sigma_i \cdot \epsilon_{ij}, 
	\quad 
	\epsilon_{ij} \mid \mu_i,\sigma_i
	\sim_{\text{iid}} p_\epsilon
	\quad\text{ and }\quad
	(\mu_i,\sigma_i) \sim_{\text{iid}} G_0
\end{equation}
for $j=1,\dots,J$, and $i=1,\dots,n$, and where $p_\epsilon$ has zero mean and unit variance. Extensions are discussed in Section~\ref{sec:ext}.
Matched datasets, such as the student-teacher data of our empirical application, fit into \eqref{eq:disagg.model} as follows. Suppose that there are $\bar{J}$ students in total, with each student $j$ represented by $\tilde{\epsilon}_j$ drawn from a common distribution, who are matched to $n$ teachers. Let $i(j): [\bar{J}] \mapsto [n]$ be a function that returns the teacher identity of student $j$ resulting from this matching process. The outcomes of the students are realized after the matching process as 
$\tilde{y}_j = \mu_{i(j)} + \sigma_{i(j)} \tilde{\epsilon}_j$.
This fits into our framework \eqref{eq:disagg.model} by defining $J := | \{j\in[\bar{J}]:i(j)=i\} | $, and
$\{\epsilon_{ij}\}_{j\in[J]} := \{\tilde{\epsilon}_{j}\}_{i(j)=i}$ and similarly for the $\{y_{ij}\}_{j\in[J]}$. The conditional independence then requires that the matching did not depend on $\{\tilde{\epsilon}_{j}\}_{j=1}^{\bar{J}}$.
More realistically, $\tilde{y}_{j}$ is taken to be the raw test score with covariates partialled out, $\tilde{y}_{j} = \tilde{y}_{j}^* - x_{j}^\prime \beta_0$, where in the teacher value-added literature $x_{j}$ typically includes student demographics or lagged test scores such that the independence assumption is reasonable. We abstract from estimation of the common vector $\beta_0$ as is typical in the literature.
% , see e.g., \cite{Gilraine2020} and \cite{Kwon2023b}. 

An essential difference between \eqref{eq:disagg.model} and models commonly studied in the empirical Bayes literature is the heterogeneous and unknown variances $\{\sigma_i^2\}_{i\in[n]}$. Beyond being empirically realistic, it also allows for meaningful comparisons of units beyond their mean parameters. We provide two motivating examples of this and Section~\ref{sec:ext} further motivates the unknown heteroskedasticity in optimal forecasting. 
%Furthermore, a common policy study is to analyze the effects of removing the bottom $p$ percent of teachers on marginal and average student outcomes, such as in \cite{Gilraine2020}. The standard of measure has been the mean outcome of teachers i.e., $\mu_i$. However, if the interest is in the policy benefits of removing the tails of the teacher distribution, it may be of more relevance to focus on the teacher-specific quantiles for low $\alpha$.

\begin{example}[Teacher Value-added]
	A policymaker concerned about lower-ability students may want to compare teachers by the lower end of their value-added rather than the mean, which only reflects the average student. The relevant measure is then $q_{\alpha,i}$ of \eqref{eq:quantiles} for a low $\alpha$, which represents the performance of the $\alpha$-quantile student (in terms of $\tilde{\epsilon}_j$) under teacher $i$.
	\blackqed
\end{example}
\begin{example}[Neighborhood Effects]
	There is growing interest in measuring the mean causal outcome of growing up in county $i$ on adulthood income.  
	A motivation given in \cite{Chetty2018} is to ``construct forecasts of the causal effect of growing up in each county that can be used to guide families seeking to move to better areas''.
	However, even if a neighborhood produces good outcomes on average (i.e. $\mu_i$ is large), this may be an imprecise measure for a given child -- for instance in neighborhoods with highly unequal outcomes. The $\{q_{\alpha,i}\}_{i\in[n]}$, which convey the likelihood of different future income levels, are thus useful complements to $\{\mu_i\}_{i\in[n]}$.
	\blackqed
\end{example}
Therefore, we also derive optimal estimators of $\{q_{\alpha,i}\}_{i\in[n]}$, and also of $\{\sigma_i\}_{i\in[n]}$ and $\{\sigma_i^2\}_{i\in[n]}$ that may be of separate interest.
To make progress, we augment \eqref{eq:disagg.model} with a distributional restriction and summarize our modeling choice in Assumption~\ref{ass:disagg.model}.
\begin{assumption}
	\label{ass:disagg.model}
	Suppose that (i) equation \eqref{eq:disagg.model} holds and (ii)
	$p_\epsilon(x) = \phi(x)$ where $\phi(\cdot)$ denotes the standard normal density.
\end{assumption} 
Several remarks now contrast this with the model under which existing optimal estimation results are typically derived (summarized in Assumption~\ref{ass:agg.model}) and also reveal their complementary nature. 
\begin{assumption}
	\label{ass:agg.model}
	Suppose that (i) equation \eqref{eq:disagg.model} holds with $\sigma_i=\sigma_0$ for a constant $\sigma_0$ 
	and (ii) $p_{z}(x) = \phi(x)$ where $p_{z}(\cdot)$ denotes the density of $\frac{1}{\sqrt{J}}\sum_{j=1}^J \epsilon_{ij}$.
\end{assumption}
\begin{remark}[Microdata normality]
	\label{rmk:agg.model}
	Assumption~\ref{ass:agg.model}(ii) holds approximately by the CLT. While we impose normality at the microdata level (Assumption~\ref{ass:disagg.model}(ii)), these are conceptually equivalent given i.i.d. microdata. In practice neither is likely to hold exactly, and Section~\ref{sec:missp} studies the effect of violations on optimal estimation. Briefly, the optimal estimators under Assumption~\ref{ass:disagg.model} exploit normality more fully than those under Assumption~\ref{ass:agg.model} and are thus more sensitive to it, but both incur misspecification errors scaling as $\frac{1}{J}$ and $\frac{1}{J^2}$ respectively and so are relatively benign for moderate $J$. Such violations is of course application-dependent; in our empirical illustration we find no evidence against microdata normality (Section~\ref{sec:appl}).
	\blackqed
\end{remark}
\begin{remark}[Microdata heteroskedasticity]
	\label{rmk:hom}
	Existing empirical Bayes analyses commonly add a homoskedasticity restriction (Assumption~\ref{ass:agg.model}(i)), whereas our model allows heteroskedasticity and is substantially richer: integrating $\sigma_i$ out of \eqref{eq:disagg.model} using $G_0$ yields scale mixtures of normals -- including the student-$t$ with arbitrary degrees of freedom and any symmetric stable distribution -- a larger class than the normal family obtained by \textit{a priori} restricting $\sigma_i$ to a point mass at $\sigma_0$. An alternative approach to the heteroskedasticity is simply plugging in sample variances as if they were the true variances; our simulations and empirical illustration show that doing so, or assuming homoskedasticity, can be substantially sub-optimal. Our results thus complement the literature by delivering optimality where heteroskedasticity is warranted at the expense of taking microdata normality more seriously. This ultimately enlarges the relevance of the nonparametric empirical Bayes toolkit to practitioners.
	\blackqed
\end{remark}
\begin{remark}[Other channels of heteroskedasticity]
	\label{rmk:het}
	Starting from an aggregated model (modeling the $i$-specific sample mean rather than the microdata) under Assumption~\ref{ass:agg.model}(ii), \cite{Jiang2020} and \cite{Chen2026} provide regret bounds where the heteroskedasticity of the sample means operates through heterogeneous $J_i$, with $\sigma_i$ homogeneous or known and varying dependence allowed within $(J_i,\mu_i,\sigma_i)$. We instead let heteroskedasticity operate through the unknown $\sigma_i$ with homogeneous $J_i=J$ and with no restriction on the dependence within $(\mu_i,\sigma_i)$. When the heterogeneity in $J_i$ is limited, as in our teacher value-added illustration, one can bin units by class size and estimate within bins; the results from Section~\ref{sec:theory} suggest this retains good finite-sample performance. Alternatively, if $J_i$ is independent of $(\mu_i,\sigma_i)$, our results apply directly at this most general level of heterogeneity.
	\blackqed
\end{remark}

\subsection{Risk}
\label{subsec:compoundrisk}

The discussion in this section will be framed in terms of the mean value-added $\{\mu_i\}_{i\in[n]}$, because these are the common object of interest. Our theoretical results also apply to other objects of interest including $\{q_{\alpha,i}\}_{i\in[n]}$ as we show in Section~\ref{sec:theory}.

An estimator $\hat{\mu}$, which maps the observations $\mathcal{Y}$ to $\mathbb{R}^n$, is evaluated under the loss function
$l(\hat{\mu},\mu) := \frac{1}{n}\sum_{i=1}^{n} [\hat{\mu}_i-\mu_i]^2$. 
Define the (integrated) risk
\begin{equation}
	R_{G_0}(\hat{\mu},\mu)
	:=
	\mathbb{E}_{G_0}^{(\mu,\sigma)}
	\mathbb{E}_{(\mu,\sigma)}^{\mathcal{Y}}
	l(\hat{\mu},\mu).
	\label{eq:risk.mse}
\end{equation}
Assumption~\ref{ass:disagg.model} together with the Rao-Blackwell Theorem imply that $\hat\mu^*$, the minimizer of $R_{G_0}(\hat{\mu},\mu)$ over all borel mappings of the data $\mathcal{Y}$ to $\mathbb{R}^n$, satisfies
\begin{equation}
		\hat{\mu}_i^* 
		= 
		\hat{\mu}_{G_0}(y_i,s_i^2) 
		:= 
		\mathbb{E}_{G_0}\left[\mu_i \mid y_i,s_i^2\right]
	\label{eq:oracle.est}
\end{equation}
and $(y_i,s_i^2)$ are the unit-specific sample mean and variance.
Similar statements also hold for the decision problems involving $\{\sigma_i\}_{i\in[n]}$ or $\{\sigma_i^2\}_{i\in[n]}$, and we denote their corresponding optimal estimators as $\hat{\sigma}^{*}$ and $\hat{\sigma}^{2*}$.

Finally, even though our interest in $\hat{\mu}^*$ stems from the decision problem \eqref{eq:oracle.est}, there are other reasons why $\hat{\mu}^*$, and thus our proposed estimator, may be of interest. One such setting is a decision problem where we wish to identify units with $\mu_i$ below a certain threshold $c$ under an absolute loss function. 
\begin{lemma}[Identifying units below a fixed threshold]
	\label{lm:threshold.detect}
	Suppose that Assumption~\ref{ass:disagg.model} holds. 
	Let $c$ be a fixed constant and define 
	\begin{equation}
		l_{\text{d}}(\hat{\mu},\mu,c) 
		:=
		\frac{1}{n}\sum_{i=1}^n
		\bigg(
		\left\{\hat{\mu}_i \leq c\right\}
		\left\{\mu_i > c\right\}
		+
		\left\{\hat{\mu}_i > c\right\}
		\left\{\mu_i \leq c\right\}
		\bigg)
		\cdot 
		\lvert \mu_i-c \rvert,
		\label{eq:loss.detect}
	\end{equation}
	and $R_{G_0,\text{d}}(\hat{\mu},\mu,c) := \mathbb{E}_{G_0}^{(\mu,\sigma)}
	\mathbb{E}_{(\mu,\sigma)}^{\mathcal{Y}} l_{\text{d}}(\hat{\mu},\mu,c) $.
	Then $\hat{\mu}^*$ 
	solves the following problem: 
	\begin{equation}
		\argmin_{\hat{\mu}}
		R_{G_0,\text{d}}(\hat{\mu},\mu,c)
		\label{eq:oracle.est.aux}
	\end{equation}
	where the minimization is over all borel mappings of the data $\mathcal{Y}$ to $\mathbb{R}^n$.
	Similarly, $\hat{q}_{\alpha}^*$ minimizes $R_{G_0,\text{d}}(\hat{q}_{\alpha},q_\alpha,c)$ over all borel mappings $\hat{q}_{\alpha}:\mathcal{Y}\mapsto\mathbb{R}^n$. 
\end{lemma}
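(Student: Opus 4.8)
The plan is to exploit that the detection loss $l_{\text{d}}$ depends on each $\hat\mu_i$ only through the sign of $\hat\mu_i-c$, which turns the minimization into a per-unit binary classification solvable by a textbook Bayes-rule argument. First I would rewrite the integrated risk by interchanging the order of integration (Fubini, justified by finiteness of $\mathbb{E}_{G_0}|\mu_i|$), obtaining $R_{G_0,\text{d}}(\hat\mu,\mu,c)=\mathbb{E}^{\mathcal{Y}}\big[\tfrac{1}{n}\sum_{i=1}^n \rho_i(\mathcal{Y})\big]$, where $\rho_i(\mathcal{Y})$ is the posterior expectation of the $i$-th summand of $l_{\text{d}}$ given the full data. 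Each $\rho_i$ is nonnegative and depends on $\mathcal{Y}$ only through $\hat\mu_i(\mathcal{Y})$, so the risk is minimized by minimizing every $\rho_i$ pointwise in $\mathcal{Y}$. Because the units are independent and $(y_i,s_i)$ is sufficient for $(\mu_i,\sigma_i)$, the posterior law of $\mu_i$ given $\mathcal{Y}$ equals its law given $(y_i,s_i)$, so the per-unit minimizer can be taken to be a function of $(y_i,s_i)$ alone, consistent with \eqref{eq:oracle.est}.

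Fixing a unit and conditioning on $(y_i,s_i)$, I would observe that $\{\hat\mu_i\le c\}$ and $\{\hat\mu_i>c\}$ partition the line, so the only decision-relevant quantity is the binary $d:=\{\hat\mu_i>c\}$. Choosing $d=0$ incurs posterior loss $\mathbb{E}_{G_0}[\{\mu_i>c\}(\mu_i-c)\mid y_i,s_i]$ and choosing $d=1$ incurs $\mathbb{E}_{G_0}[\{\mu_i\le c\}(c-\mu_i)\mid y_i,s_i]$. The crux is the cancellation
\begin{align*}
&\mathbb{E}_{G_0}[\{\mu_i>c\}(\mu_i-c)\mid y_i,s_i]-\mathbb{E}_{G_0}[\{\mu_i\le c\}(c-\mu_i)\mid y_i,s_i]\\
&\qquad=\mathbb{E}_{G_0}[(\mu_i-c)(\{\mu_i>c\}+\{\mu_i\le c\})\mid y_i,s_i]=\hat\mu_i^*-c,
\end{align*}
where I used $-\{\mu_i\le c\}(c-\mu_i)=\{\mu_i\le c\}(\mu_i-c)$ so that the two indicators combine to one.

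It follows that the posterior loss of declaring $\hat\mu_i>c$ is smaller than that of declaring $\hat\mu_i\le c$ precisely when $\hat\mu_i^*=\mathbb{E}_{G_0}[\mu_i\mid y_i,s_i]>c$. Since $\hat\mu^*$ induces exactly this rule through $\{\hat\mu_i^*>c\}$, it attains the pointwise minimum of each $\rho_i$ for every data realization and hence solves \eqref{eq:oracle.est.aux}; the boundary event $\hat\mu_i^*=c$ is immaterial because both decisions then carry equal posterior loss. The quantile statement follows verbatim upon replacing $\mu_i$ by $q_{\alpha,i}$ and $\hat\mu_i^*$ by $\hat q_{\alpha,i}^*=\mathbb{E}_{G_0}[q_{\alpha,i}\mid y_i,s_i]$. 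I do not expect a deep technical obstacle: the argument is a Bayes-rule computation, and the only points requiring care are the reduction to a single per-unit binary decision via sufficiency and the sign bookkeeping in the cancellation, which is exactly where the posterior mean---rather than, say, the posterior median---surfaces as the correct threshold statistic.
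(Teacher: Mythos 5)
Your proposal is correct and follows essentially the same route as the paper's proof: both reduce the problem to a per-unit binary decision and exploit the cancellation $\lvert \mu_i - c\rvert\left(\{\mu_i>c\}-\{\mu_i\le c\}\right) = \mu_i - c$, so that the sign of $\hat{\mu}_i^* - c$ determines the optimal declaration. The paper organizes this by rewriting the loss as $(\mu_i-c)\,d_i$ plus a decision-free term and then minimizing pointwise, while you compare the two posterior losses directly, but the key identity and conclusion are identical.
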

Thus $\hat{\mu}^*$ is also of interest for identifying teachers with poor performance relative to an absolute threshold (e.g., mean value-added below a fixed value), where the cost of misidentification is symmetric and increasing in the absolute loss. This complements existing results such as \cite{Gu2023}, which identify units within a relative threshold (say the bottom $10\%$), and \cite{Kline2024}, which assign ranks to units under alternative loss functions.

    \section{Theoretical Results}
\label{sec:theory}
\subsection{Tweedie Representation of Optimal Estimators}
\label{subsec:tweed}

The optimal estimators are the Bayes estimators $\mathbb{E}_{G_0}[\,\cdot\, | y_i,s_i]$ taking $G_0$ as the prior. These are oracles in the sense that $G_0$ is unknown in practice. As a first step to studying the regrets of the feasible versions, we prove a Tweedie-type representation of the oracles. Its proof is in Appendix~\ref{subsec:app.tweed}.
\begin{theorem}[Tweedie's formula under unknown heteroskedasticity]
	\label{thm:main}
	Suppose that Assumption~\ref{ass:disagg.model} holds and $J>3$. Then,
	\begin{alignat*}{2}
		(i)&\qquad
		\mathbb{E}_{G_0} [\mu_i \mid y_i,s_i] 
		&&=
		y 
		+ 
		\frac{\mathbb{E}_{G_0} [\sigma_i^2 \mid y_i,s_i]}{J} \cdot \partial_y \log f_{G_0}(y_i \mid s_i) 
		+ 
		\frac{1}{J}\partial_y \mathbb{E}_{G_0} [\sigma_i^2 \mid y_i,s_i],
		%\label{eq:Bayesmu}
		\\
		(ii)&\qquad
		\mathbb{E}_{G_0} [\sigma_i^2 \mid y_i,s_i]
		&&=
		k\cdot
		\int_{s_i^2}^\infty
		\left[\frac{s_i^2}{t}\right]^{k-1}\frac{f_{G_0}(t \mid y_i)}{f_{G_0}(s_i^2 \mid y_i)}dt,
		%\label{eq:Bayessigmasq}
		\\
		(iii)&\qquad
		\mathbb{E}_{G_0} [\sigma_i \mid y_i,s_i]
		&&=
		\frac{k}{\Gamma(0.5)}\cdot
		\int_{s_i^2}^\infty
		\frac{1}{\sqrt{k(t-s_i^2)}}
		\left[\frac{s_i^2}{t}\right]^{k-1}\frac{f_{G_0}(t \mid y_i)}{f_{G_0}(s_i^2 \mid y_i)}dt,
		%\label{eq:Bayessigma}
	\end{alignat*}
	where $\Gamma(\cdot)$ is the gamma function, $k:=\frac{1}{2}(J-1)$, and $(y_i,s_i^2)$ are the sufficient statistics for $(\mu_i,\sigma_i^2)$, and $f_{G_0}(y,s^2)$ is the mixture density of $(y_i,s_i^2)$ under $G_0$.
\end{theorem}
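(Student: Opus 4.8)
The plan is to work throughout at the level of the joint mixture density of the sufficient statistics, exploiting that conditional on $(\mu,\sigma)$ the pair $(y,s^2)$ is independent, with $y\sim\mathcal{N}[\mu,\sigma^2/J]$ and $(J-1)s^2/\sigma^2\sim\chi^2_{2k}$, so that $s^2$ has the gamma density $g(t\mid\sigma)=\frac{k^k}{\Gamma(k)}\sigma^{-2k}t^{k-1}e^{-kt/\sigma^2}$ with $k=(J-1)/2$. Writing $\phi(y\mid\mu,\sigma)$ for the normal density of $y$, every quantity in the theorem is a ratio $\int h(\mu,\sigma)\,\phi(y\mid\mu,\sigma)g(s^2\mid\sigma)\,dG_0 \big/ f_{G_0}(y,s^2)$ with $h\in\{\mu,\sigma^2,\sigma\}$ and denominator $f_{G_0}(y,s^2)=\int\phi\,g\,dG_0$. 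The whole proof then amounts to moving the parameter-dependent factor $h$ outside these integrals.

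For part (i) I would use the Gaussian score identity $\partial_y\phi(y\mid\mu,\sigma)=\frac{J(\mu-y)}{\sigma^2}\phi(y\mid\mu,\sigma)$, which rearranges to $\mu\,\phi=y\,\phi+\frac{\sigma^2}{J}\partial_y\phi$. Multiplying by $g(s^2\mid\sigma)\,dG_0$ and integrating gives, for the numerator of $\mathbb{E}_{G_0}[\mu\mid y,s]$,
\[
\int\mu\,\phi\,g\,dG_0 = y\,f_{G_0}(y,s^2)+\tfrac1J\,\partial_y\!\int\sigma^2\phi\,g\,dG_0 = y\,f_{G_0}(y,s^2)+\tfrac1J\,\partial_y\big[\mathbb{E}_{G_0}[\sigma^2\mid y,s]\,f_{G_0}(y,s^2)\big],
\]
where pulling $\partial_y$ outside the integral is justified by dominated convergence (finiteness holds under $J>3$, see below). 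Dividing by $f_{G_0}(y,s^2)$, applying the product rule, and noting that $\partial_y\log f_{G_0}(y,s^2)=\partial_y\log f_{G_0}(y\mid s)$ since the marginal of $s^2$ does not depend on $y$, yields exactly the three terms in (i).

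Parts (ii) and (iii) are the heart of the result, and I would handle both by reducing to a pointwise-in-$(\mu,\sigma)$ identity on the gamma density and then lifting by Fubini. For (ii) I would verify the elementary identity $\sigma^2 g(s^2\mid\sigma)=k\int_{s^2}^\infty[s^2/t]^{k-1}g(t\mid\sigma)\,dt$: the factor $[s^2/t]^{k-1}$ cancels the $t^{k-1}$ in $g(t\mid\sigma)$, leaving the tail integral $\int_{s^2}^\infty e^{-kt/\sigma^2}dt=\frac{\sigma^2}{k}e^{-ks^2/\sigma^2}$, which manufactures precisely the missing $\sigma^2$. Multiplying by $\phi(y\mid\mu,\sigma)\,dG_0$, integrating, and interchanging the $t$- and $(\mu,\sigma)$-integrals turns $\int\sigma^2\phi\,g\,dG_0$ into $k\int_{s^2}^\infty[s^2/t]^{k-1}f_{G_0}(y,t)\,dt$; dividing by $f_{G_0}(y,s^2)$ and rewriting $f_{G_0}(y,t)/f_{G_0}(y,s^2)=f_{G_0}(t\mid y)/f_{G_0}(s^2\mid y)$ gives (ii). Part (iii) is identical except that the odd power $\sigma$ forces a half-order kernel: the analogous identity is $\sigma g(s^2\mid\sigma)=\frac{k}{\Gamma(1/2)}\int_{s^2}^\infty\frac{[s^2/t]^{k-1}}{\sqrt{k(t-s^2)}}g(t\mid\sigma)\,dt$, which after the substitution $v=t-s^2$ collapses to the gamma integral $\int_0^\infty v^{-1/2}e^{-kv/\sigma^2}dv=\Gamma(1/2)(\sigma^2/k)^{1/2}$, producing a single power of $\sigma$; the same Fubini lift then delivers (iii).

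The main obstacle is not the manipulation but the discovery and verification of the $\sigma$-free kernels in (ii)--(iii): because the moment $\sigma^2$ (resp.\ $\sigma$) regulating the Bayes correction is itself an unknown function of the parameter, it cannot be pulled through the mixture directly, and the content of the theorem is that multiplication by such a moment can instead be realized as a tail-integral operator acting on the observable density $f_{G_0}(\cdot\mid y)$ with an explicit kernel independent of $G_0$. I expect the remaining subtlety to be bookkeeping the integrability and Fubini conditions: the hypothesis $J>3$, i.e.\ $k>1$, is exactly what guarantees $\mathbb{E}_{G_0}[\sigma^2\mid y,s]<\infty$, since the full likelihood $\phi(y\mid\mu,\sigma)g(s^2\mid\sigma)$ decays like $\sigma^{-(2k+1)}$ as $\sigma\to\infty$, so that the extra $\sigma^2$ leaves $\sigma^{1-2k}$, integrable at infinity precisely when $k>1$; this finiteness is in turn what justifies both the Fubini interchanges in (ii)--(iii) and the differentiation under the integral sign used in (i).
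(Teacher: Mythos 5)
Your proof is correct. Part (i) is essentially the paper's own argument in compressed form: the paper first derives the conditional-on-$\sigma$ Tweedie formula by differentiating $\int p(\mu \mid y,s^2,\sigma^2)\,d\mu = 1$ under the integral and then integrates out $\sigma$ via iterated expectations, interchanging $\partial_y$ with the integral and applying the product rule; your one-step version via the score identity $\mu\phi = y\phi + \tfrac{\sigma^2}{J}\partial_y\phi$ performs the same computation directly at the mixture level and lands on the same intermediate representation $y + \tfrac{1}{J}f_{G_0}^{-1}\partial_y\{f_{G_0}\,\mathbb{E}_{G_0}[\sigma^2\mid y,s]\}$ that the paper records in \eqref{eq:alt.mu}.

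For parts (ii) and (iii), however, you take a genuinely different route. The paper proceeds indirectly: it identifies the posterior of $(\tau\mu,\tau)$ with $\tau := \sigma^{-2}$ as an exponential family, computes the posterior MGF $M_{\tau\mid y,s}(t) = [s^2/(s^2 - tk^{-1})]^{k-1}\,p(y,s^2-tk^{-1})/p(y,s^2)$ (Lemma~\ref{lm:mgftau}, following \cite{Banerjee2023}), and then invokes Proposition 6 of \cite{Cressie1986}, which recovers the negative moments $\mathbb{E}[\tau^{-1}\mid y,s]$ and $\mathbb{E}[\tau^{-1/2}\mid y,s]$ as $\int_0^\infty M(-t)\,dt$ and $\Gamma(1/2)^{-1}\int_0^\infty t^{-1/2}M(-t)\,dt$; verifying the two integrability conditions required by that proposition occupies most of the paper's proof. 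You instead verify the pointwise kernel identities $\sigma^2 g(s^2\mid\sigma) = k\int_{s^2}^\infty [s^2/t]^{k-1} g(t\mid\sigma)\,dt$ and $\sigma g(s^2\mid\sigma) = \tfrac{k}{\Gamma(1/2)}\int_{s^2}^\infty [k(t-s^2)]^{-1/2}[s^2/t]^{k-1} g(t\mid\sigma)\,dt$ (both check out: they reduce to the elementary integrals $\int_{s^2}^\infty e^{-kt/\sigma^2}dt$ and $\int_0^\infty v^{-1/2}e^{-kv/\sigma^2}dv$) and lift them through the mixture. A change of variables $u = s^2 + tk^{-1}$ in the paper's MGF integral shows the two routes produce literally the same formula, so your kernels are exactly what Cressie's proposition delivers here — but your derivation is self-contained (no exponential-family structure, no external fractional-moment result), and because all integrands are nonnegative the interchange of integrals is automatic by Tonelli, so you never need the integrability verifications at all. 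What the paper's route buys in exchange is the entire posterior MGF of $\tau$, hence in principle every fractional negative moment at once, whereas your route requires discovering the right kernel separately for each target moment.

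One small imprecision in your closing discussion: your accounting of where $J>3$ bites is stricter than necessary. The mixing is against the probability measure $G_0$, so finiteness of $\mathbb{E}_{G_0}[\sigma^2\mid y,s]$ only requires the integrand $\sigma^2\phi(y\mid\mu,\sigma)g(s^2\mid\sigma) \lesssim \sigma^{1-2k}$ to be bounded in $\sigma$, not Lebesgue-integrable in $\sigma$, and your Fubini steps need no finiteness precondition (Tonelli). Your argument therefore goes through under the stated hypothesis with room to spare; this does not affect its validity, but the sentence claiming that $k>1$ is ``exactly'' what guarantees finiteness overstates the role of that hypothesis in your own proof.
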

\begin{remark}[Proof Outline]
	\cite{Efron2011} provides a formula, originally due to \cite{Robbins1956} for deriving expressions similar to Theorem~\ref{thm:main} when the sampling model lies within the exponential family. The derivation essentially recognizes the posterior density as a member of the exponential family, and moments of the sufficient statistic(s) can then be obtained by repeatedly differentiating the cumulant generating function.
	The extension to the heteroskedastic case is non-trivial, because the sufficient statistics are sufficient for $(\mu_i,\mu_i/\sigma_i^2)$ and not $\sigma_i$ nor $\sigma_i^2$.
	To handle this, we take a more direct route to the first expression (see Appendix~\ref{subsec:app.tweed}) by differentiating under the integral and applying the law of iterated expectations, and the latter two expressions are obtained because by \cite{Cressie1986}, the moment generating function (MGF) contains information on the fractional moments -- and the posterior MGF of $\sigma_i^{-2}$ is known by following \cite{Efron2011} described above.
	\blackqed
\end{remark}
Theorem~\ref{thm:main} provides representations of the oracle estimators that depend on $G_0$ only through $f_{G_0}(y,s^2)$, the mixture density of sufficient statistics. 
In particular, (i) is an extension of the well-known Tweedies formula
%  in the homoskedastic or known variance case:
\begin{equation}
	\mathbb{E}_{G_0}[\mu_i \mid y_i,\sigma_i]
	=
	y_i + \frac{\sigma_i^2}{J}\partial_y \log f_{G_0}(y_i \mid \sigma_i).
	\label{eq:tweed.hom.2} 
\end{equation}
where now the unknown $\sigma_i^2$ is replaced by its Bayes estimator together with an additional Bayes correction term $\frac{1}{J}\partial_y \mathbb{E}_{G_0}[\sigma_i^2 \mid y_i,s_i]$. 
That the oracles depend on $f_{G_0}$ rather than $G_0$ \textit{per se} is important: the deconvolution error is far less consequential for learning the super-smooth $f_{G_0}$ than for learning $G_0$, which has convergence rate as slow as $\frac{1}{\log n}$ for super-smooth mixtures (see \cite{Fan1991}). This insight, and the particular forms of the representations, underlie the near-parametric regret convergence rates subsequently established for an estimator $f_{\hat{G}}$.

% The main insight of Theorem~\ref{thm:main} and Lemma~\ref{lm:main.quant} is that the Bayes estimators depend on $G_0$ only through $f_{G_0}$. As a result, emulating the oracles requires only estimating $f_{G_0}$ and not $G_0$. This avoids the deconvolution step which can be ill-posed in practice. The representations are also central for expressing the regret of using a feasible $f_{\hat{G}}$ in terms of the Hellinger distance between $f_{\hat{G}}$ and $f_{G_0}$ -- super-smooth densities in our model -- which is then key to our results on the near-parametric regret convergence rates.  

A separate application of Theorem~\ref{thm:main} is in the $f$-modeling approach, where the representations motivate directly estimating $f_{G_0}$ with a nonparametric estimator $\hat{f}$. This is computationally simple but is not guaranteed properties of the oracles, such as monotonicity, because it ignores the structure of $f_{G_0}$ as a mixture density.
However, the simplicity of this approach may become attractive in more complex settings, such as covariate-assisted estimation discussed in Section~\ref{sec:ext}.

The representation of $\mathbb{E}_{G_0} [\sigma_i \mid y_i,s_i]$ in Theorem~\ref{thm:main} is also relevant for the unit-specific quantiles \eqref{eq:quantiles}, which in our model (Assumption~\ref{ass:disagg.model}) take the simple form $q_{\alpha,i}= \mu_i + \sigma_i\cdot \Phi^{-1}(\alpha)$, with $\Phi(\cdot)$ the standard normal CDF -- which is linear in $(\mu_i,\sigma_i)$.
Theorem~\ref{thm:main} then provides us with a Tweedie-type representation of the optimal estimator of $\{q_{\alpha,i}\}_{i\in[n]}$,
\begin{equation}
	\hat{q}_\alpha^*
	:=
	\argmin_{\hat{q}} R_{G_0}(\hat{q},q_\alpha),
	\label{eq:oracle.est.q}
\end{equation}
where the minimization is over all borel mappings of the observations $\mathcal{Y}$ to $\mathbb{R}^n$.
\begin{lemma}[Optimal Quantile Estimators]
	\label{lm:main.quant}
	Suppose Assumption~\ref{ass:disagg.model} hold and $J>3$. Let $\alpha\in(0,1)$. Then, 
	\begin{align*}
		\hat{q}_{\alpha,i}^*
		&= 
		\mathbb{E}_{G_0} [\mu_i \mid y_i,s_i]
		+
		\mathbb{E}_{G_0} [\sigma_i \mid y_i,s_i] \cdot \Phi^{-1}(\alpha)
	\end{align*}
	\label{eq:compoundopt.quant}
	which depends on $G_0$ only through the density $f_{G_0}(y,s^2)$.
\end{lemma}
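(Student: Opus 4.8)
The plan is to show that Lemma~\ref{lm:main.quant} is essentially a corollary of Theorem~\ref{thm:main} combined with the same decision-theoretic reasoning that underlies \eqref{eq:oracle.est}. The key structural fact is that $q_{\alpha,i}$ is an affine function of $(\mu_i,\sigma_i)$, so its posterior mean splits additively into the posterior means of $\mu_i$ and $\sigma_i$, both of which Theorem~\ref{thm:main} has already expressed through the mixture density $f_{G_0}(y,s^2)$.

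First I would argue that $\hat{q}_{\alpha,i}^*$ equals the posterior mean $\mathbb{E}_{G_0}[q_{\alpha,i}\mid\mathcal{Y}_i]$. Because the quadratic loss \eqref{eq:loss.mse}, applied to $q_\alpha$, is additive across $i$ and the units are iid, the integrated risk $R_{G_0}(\hat{q},q_\alpha)$ decomposes into a sum of per-unit terms, each minimized pointwise over Borel mappings by the conditional mean of $q_{\alpha,i}$ given the data. This is precisely the Rao-Blackwell argument already invoked for $\hat{\mu}^*$ in \eqref{eq:oracle.est}. Since $(y_i,s_i^2)$ are sufficient for $(\mu_i,\sigma_i^2)$ under \eqref{eq:disagg.model}, the posterior of $(\mu_i,\sigma_i)$ given $\mathcal{Y}_i$ coincides with the posterior given $(y_i,s_i)$, and hence $\hat{q}_{\alpha,i}^* = \mathbb{E}_{G_0}[q_{\alpha,i}\mid y_i,s_i]$.

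Next I would substitute $q_{\alpha,i}=\mu_i+\sigma_i\,\Phi^{-1}(\alpha)$ and use linearity of the conditional expectation, pulling the fixed constant $\Phi^{-1}(\alpha)$ out of the integral, to obtain
\begin{equation*}
	\mathbb{E}_{G_0}[q_{\alpha,i}\mid y_i,s_i]
	=
	\mathbb{E}_{G_0}[\mu_i\mid y_i,s_i]
	+
	\Phi^{-1}(\alpha)\,\mathbb{E}_{G_0}[\sigma_i\mid y_i,s_i].
\end{equation*}
Applying Theorem~\ref{thm:main}(i) to the first term and Theorem~\ref{thm:main}(iii) to the second then yields both the displayed formula in the lemma and the claim that the estimator depends on $G_0$ only through $f_{G_0}(y,s^2)$, since each posterior mean has already been written in terms of that density.

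The only genuine obstacle is a regularity issue rather than a conceptual one: for the posterior mean to be the risk-minimizer and to be well-defined via the $L^2$-projection characterization of conditional expectation, I need $q_{\alpha,i}$ to be square-integrable, i.e.\ $\mathbb{E}_{G_0}[\sigma^2]<\infty$. I would carry this integrability along as a maintained condition, consistent with the moment assumptions already implicit in the treatment of $\hat{\mu}^*$. Everything else is a direct transcription of Theorem~\ref{thm:main}, so I expect the argument to be short.
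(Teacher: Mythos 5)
Your proposal is correct and follows essentially the same route as the paper: the paper's own proof is the one-line observation that the result "follows by Theorem~\ref{thm:main} and the linearity of expectations," which is exactly your argument of reducing $\hat{q}_{\alpha,i}^*$ to the posterior mean via the Rao--Blackwell/sufficiency reasoning already used for \eqref{eq:oracle.est}, writing $q_{\alpha,i}=\mu_i+\sigma_i\Phi^{-1}(\alpha)$, and invoking parts (i) and (iii) of Theorem~\ref{thm:main}. Your added remark about square-integrability of $q_{\alpha,i}$ is a reasonable regularity caveat that the paper leaves implicit, but it does not change the substance of the argument.
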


\begin{comment}
	\begin{remark}[Alternative Tweedie Expressions]
		An expression for $\mathbb{E}_{G_0}[\sigma^2 \mid s]$ -- note the integrating out of $y$ -- is given in Proposition 3 of \cite{Gu2017a} credited to \cite{Robbins1982}, and is not in general equivalent to $\mathbb{E}_{G_0}[\sigma^2 \mid y,s]$ of Theorem~\ref{thm:main}.
		Similar expressions of optimal estimators for $\sigma^2$ under alternative loss functions are given in \cite{Kwon2023}, which were then exploited for constructing empirical Bayes estimates of $\sigma^2$.
		\blackqed
	\end{remark}
\end{comment}

\subsection{Regret Convergence Rates}
\label{subsec:rates}

The oracles' representations from the previous section are key to establishing the fast regret convergence results of their feasible counterparts. In particular, we consider $(\hat{\mu},\hat{\sigma},\hat{\sigma}^2,\hat{q}_{\alpha})$, taken to denote feasible versions of the oracles $(\hat{\mu}^*,\hat{\sigma}^*,\hat{\sigma}^{*2},\hat{q}_{\alpha}^*)$, where $G_0$ is replaced by $\hat{G}_n$ of Assumption~\ref{ass:npmle} below.  
In preparation for what follows, we clarify notations and state the assumptions underlying the results. 
The notation $G(A \times B)$ for a  distribution $G$ and sets $A,B\subset\mathbb{R}$ denotes the probability that $G$ places on the event $(\mu\in A,\sigma\in B)$.
The notation $X\lesssim Y$ means there exists a constant $C$, possibly dependent on the constants of Assumption~\ref{ass:subexp}, such that $X \leq C Y $.

The first assumption is on $G_0$, the distribution of latent effects $\{(\mu_i,\sigma_i)\}_{i\in[n]}$.
\begin{assumption}[Class of Distributions]
	\label{ass:subexp}
	$G_0\in\mathcal{G}:=\{ G: G(\mathbb{R} \times [\underline{\sigma},\infty)) =1 \} $
	where the marginals of every $G\in\mathcal{G}$ have 
	sub-exponential tails: for every $G\in\mathcal{G}$ and $t > 0$,
	\begin{equation}
		P_{G}(|\mu| > t) \leq C_1 \exp\{-\lambda_1 t^{1/\gamma_1}\},
		\quad
		P_{G}(\sigma > t) \leq C_2 \exp\{-\lambda_2 t^{1/\gamma_2}\}
		\label{eq:subexp.tails}
	\end{equation}
	where $C_1, C_2, \lambda_1, \lambda_2 , \gamma_1 , \gamma_2 > 0$ are constants for the class $\mathcal{G}$.
\end{assumption}
Assumption~\ref{ass:subexp} accommodates distributions with sub-exponential tails such as Gaussian mixtures for $\mu_i$ and Gamma mixtures (suitably truncated away from 0) for $\sigma_i^2$.
For example with $\gamma_1=0.5$, we obtain distributions for $\{\mu_i\}_{i\in[n]}$ with sub-Gaussian tails. 
The restrictions are on the tails of the marginals and no restrictions are imposed on the dependence within $(\mu,\sigma)$.
The lower-bound $\underline{\sigma}$ on the support of $\sigma$ is crucial for our proofs, which is used to guarantee sufficient smoothness of the mixture density $f_{G_{0}}$.

\begin{assumption}[Approximate MLE]
	\label{ass:npmle}
	$\hat{G}_n$ satisfies
	\begin{equation}
		\prod_{i=1}^n f_{\hat{G}_n}(y_i,s_i^2)
		\geq 
		\sup_{D\in\mathscr{D}}\prod_{i=1}^n f_{G}(y_i,s_i^2) - \eta_n
	\end{equation}
	where $\mathscr{D}$ is the set of all distributions, and $\eta_n \asymp \tfrac{1}{n}$. Furthermore $\hat{G}_n$ also satisfies 
	\begin{equation}
		\hat{G}_n\left(\left[-\max_{i\leq n} |y_i| \,,\, \max_{i\leq n} |y_i|\right] \times \left[\underline{\sigma} \,,\, \max_{i\leq n} s_i\right]\right)=1.
	\end{equation}
%	Furthermore, $\hat{G^{(i)}}$ satisfies
%	\begin{equation}
%		\prod_{j\neq i}^n f_{\hat{G}_n^{(i)}}(y_j,s_j^2)
%		\geq 
%		\sup_{G}\prod_{j\neq i}^n f_{G}(y_j,s_j^2)
%		- \eta_n
%	\end{equation}
%	where $\eta_n \asymp \tfrac{1}{n}$ and the supremum is over all bivariate distributions on $\mathbb{R}\times\mathbb{R}_{++}$.
\end{assumption}
Assumption~\ref{ass:npmle} requires $\hat{G}_n$ to be an approximate MLE with support within the boundaries of the data.
\begin{remark}[Implementation]
	\label{rmk:implementation}
	Under a sampling density assumption (i.e. Assumption~\ref{ass:disagg.model}), $\hat{G}_n$ is typically computed using NPMLE, for instance with $n$ uniformly spaced support points within the boundaries of the data and then casting the problem as a convex optimization problem; see \cite{Koenker2014}. We use EM algorithm to estimate the sieve MLE $\hat{D}(k_n)\in\mathscr{D}(k_n)$ with $k_n \asymp \log n$: 
	\begin{equation}
		\prod_{i=1}^n f_{\hat{D}(k_n)}(y_i,s_i^2)
		\geq 
		\sup_{D \in \mathscr{D}(k_n)}\prod_{i=1}^n f_{G}(y_i,s_i^2),
	\end{equation}
	where $\mathscr{D}(k_n)$ is the set of distributions with at most $k_n$ support points. 
	We find little difference between these two implementations within the numerical experiments\footnote{Note the difference between our EM algorithm and that from, for instance \cite{Jiang2009}, which uses $n$ uniformly spaced support points: we use only $\log n$ support points that are adaptively selected.}. 
	\blackqed
\end{remark}

We now state the regret convergence results. The proofs are in Appendix~\ref{sec:app.reg}.
\begin{theorem}[Regret Bounds]
	\label{thm:eb.opt}
	Suppose Assumptions \ref{ass:disagg.model}, \ref{ass:subexp} and \ref{ass:npmle} hold, and $J>5$.
	% For any $\gamma_1, \gamma_2 > 1$, define
	% .
	Then
	\begin{align*}
		\sup_{G_0\in\mathcal{G}} \left[R_{G_0}(\hat{\mu},\mu) - R_{G_0}(\hat{\mu}^{*},\mu)\right]
		&\,\lesssim\,
		\tfrac{1}{n} \cdot (\log n)^{6\tilde{\gamma}_1+8\tilde{\gamma}_2+2}
		\\
		\sup_{G_0\in\mathcal{G}} \left[R_{G_0}(\hat{\sigma},\sigma) - R_{G_0}(\hat{\sigma}^{*},\sigma)\right]
		&\,\lesssim\,
		\tfrac{1}{n} \cdot (\log n)^{4\tilde{\gamma}_1+8\tilde{\gamma}_2+1}
		\\
		\sup_{G_0\in\mathcal{G}} \left[R_{G_0}(\hat{\sigma}^2,\sigma^2) - R_{G_0}(\hat{\sigma}^{2*},\sigma^2)\right]
		&\,\lesssim\,
		\tfrac{1}{n} \cdot (\log n)^{4\tilde{\gamma}_1+8\tilde{\gamma}_2+1}
	\end{align*}
	where $\tilde{\gamma}_1 := \gamma_1 \vee [\gamma_2 + \tfrac{1}{2}]$ and $\tilde{\gamma}_2 := \gamma_2 + \tfrac{1}{2}$.
\end{theorem}

\begin{remark}[Proof Outline]
	\label{rmk:proof.outline}
	We sketch the argument for $\{\mu_i\}_{i\in[n]}$. The choice of Bayes risk and the iid assumption across $i$ reduce the regret to
	\begin{equation}
		\mathbb{E}_{G_{0}}^{\mathcal{Y}} 
		\left[
		\hat{\mu}_{\hat{G}_n}(y_1,s_1^2 ) 
		-  
		\hat{\mu}_{G_{0}}(y_1,s_1^2 ) 
		\right]^2,
		\label{eq:eb.opt.outline.1}
	\end{equation}
	which is within $o(1/n)$ of
		\begin{equation}
		\mathbb{E}_{G_{0,n}}^{\mathcal{Y}} 
		\left[
		\hat{\mu}_{\hat{G}_n}(y_1,s_1^2 ) 
		-  
		\hat{\mu}_{G_{0,n}}(y_1,s_1^2 ) 
		\right]^2,
		\label{eq:eb.opt.outline.2}
	\end{equation}
	where $G_{0,n}$ truncates $G_{0}$ to a logarithmically growing support. Conditioning on the event $A_n$ that $f_{\hat{G}_n}$ is within $\varepsilon_n$ of $f_{G_{0,n}}$ in Hellinger distance (we later show $\mathbb{P}_{G_{0,n}}[A_n^c] = o(\tfrac{1}{n})$), \eqref{eq:eb.opt.outline.2} is in turn within $\delta_n\asymp\frac{1}{n}$ of
	\begin{equation}
		\max_{H_n\in\mathcal{H}_n} \mathbb{E}_{G_{0}}^{\mathcal{Y}_1} 
		\left[
		\hat{\mu}_{H_n}(y_1,s_1^2 ) 
		-  
		\hat{\mu}_{G_{0,n}}(y_1,s_1^2 ) 
		\right]^2,
		\label{eq:eb.opt.outline.3}
	\end{equation}
	where $\mathcal{H}_n\subseteq\mathcal{G}_n$ ($\mathcal{G}_n$ being the subset of $\mathcal{G}$ with logarithmically growing supports) is a deterministic class of densities within $\varepsilon_n$ of $f_{G_{0,n}}$ in Hellinger distance. Theorem~\ref{thm:main}, which extends Proposition~3 of \cite{Jiang2009} to unknown heteroskedasticity, is central in reducing \eqref{eq:eb.opt.outline.3} to the Hellinger distance between $f_{H_n}$ and $f_{G_{0,n}}$ which is bounded by $\varepsilon_n$. The main difficulty in the extension is relating the integrals in $\hat{\sigma}^{2*}$ and $\hat{\sigma}^2$ to this distance. Because $\{f_{G}:G\in\mathcal{G}_n\}$ is super-smooth, we obtain $\mathbb{P}_{G_{0,n}}[A_n^c] = o(\tfrac{1}{n})$ for $\varepsilon_n\asymp n^{-1/2}$ up to logarithmic factors. Finally, $\tilde{\gamma}_1$ and $\tilde{\gamma}_2$ appear instead of $\gamma_1$ and $\gamma_2$ because we do not assume knowledge of the latter and instead rely on Assumption~\ref{ass:npmle} to restrict the support of the MLE.
	\blackqed
\end{remark}

Theorem~\ref{thm:eb.opt} provides regret bounds for the estimation of $\{\mu_i\}_{i\in[n]}$, $\{\sigma_i\}_{i\in[n]}$ and $\{\sigma_i^2\}_{i\in[n]}$, which all converge at the parametric rate (up to logarithmic factors). 
This suggests that the cost of taking a nonparametric approach will be small in finite-samples. 
In contrast, the benefits in terms of risk reduction can be large relative to assuming a particular parametric form for $G_{0,n}$, as documented in \cite{Gilraine2020}. 
Similar results are well-known in the cases of homoskedasticity or known heteroskedasticity, and Theorem~\ref{thm:eb.opt} can thus be seen as an extension to the setting where the heteroskedasticity is unknown and has to be estimated. 

\begin{comment}
	\begin{remark}[Rate Comparisons]
		With $\gamma_1=\gamma_2=0$, such that $\mathcal{G}$ has bounded support, the regret convergence rate for $\{\hat{\mu}_G(y_i,s_i)\}_{i\in[n]}$ reduces to $\frac{1}{n}(\log n)^{8.5}$. This is in contrast to $\frac{1}{n}(\log n)^{5}$ of Theorem 5 of \cite{Jiang2009}, where the larger logarithmic factor here arises from the second half of Assumption~\ref{ass:npmle} to accommodate the unknown heteroskedasticity.  
		\blackqed
	\end{remark}
\end{comment}

Theorem~\ref{thm:eb.opt} also extends ratio-optimality type results, established by \cite{Jiang2009} for homoskedastic models\footnote{
	These results were similarly established by \cite{Brown2009} and \cite{Liu2020} for $f$-modeling estimators.
},
to accommodate unknown heteroskedasticity and more general estimation objectives. For example, supposing
\begin{equation}
	\inf_{G_0\in\mathcal{G}}
	\frac{
		n R_{G_{0}}(\hat{\mu}^{*},\mu)
	}{
 		(\log n)^{6\tilde{\gamma}_1+8\tilde{\gamma}_2+\frac{3}{2}}
	}
	\rightarrow \infty
\end{equation}
such that the oracle risk is not too small, then Theorem~\ref{thm:eb.opt} implies the following ratio-optimality result:
\begin{equation}
	\sup_{G_0\in\mathcal{G}}
	\left[\frac{
		R_{G_{0}}(\hat{\mu},\mu) 
	}{
		R_{G_{0}}(\hat{\mu}^{*},\mu)
	}\right]
	\leq
	1 + o(1).
	\label{eq:ratio.opt}
\end{equation}
Ratio-optimality, which targets relative risk, is a much stronger notion than the usual empirical Bayes optimality, which only requires that 
$
	R_{G_{0}}(\hat{\mu},\mu)  = 
	R_{G_{0}}(\hat{\mu}^{*},\mu) + o(1)
$.
 
Finally recall that by Lemma~\ref{lm:main.quant}$, \hat{q}_{\alpha,i}^*$ (i.e., the oracle estimator of $q_{\alpha,i}$) is a linear function of $\hat{\mu}_i^*$ and $\hat{\sigma}_i^*$. Theorem~\ref{thm:eb.opt} therefore also provides bounds on the regret of using $\{\hat{q}_{\alpha,i}\}_{i\in[n]}$ to estimate $\{q_{\alpha,i}\}_{i\in[n]}$.
\begin{lemma}[Regret Bounds for $\hat{q}$]
	Suppose that Assumptions \ref{ass:disagg.model}, \ref{ass:subexp} and \ref{ass:npmle} hold, and $J>5$. Then for any fixed $\alpha\in(0,1)$
	\begin{equation*}
		\sup_{G_0\in\mathcal G}\left[R_{G_{0}}(\hat{q}_\alpha,q_\alpha) - R_{G_{0}}(\hat{q}_\alpha^{*},q_\alpha)\right]
		\lesssim 
		\tfrac{1}{n} \cdot (\log n)^{6\tilde{\gamma}_1+8\tilde{\gamma}_2+2}.
		\\
	\end{equation*}
\end{lemma}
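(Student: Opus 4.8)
The plan is to treat this lemma as a direct corollary of Theorem~\ref{thm:eb.opt}, exploiting the linearity of both the feasible quantile estimator and its oracle in the mean and standard-deviation components. The starting point is the observation, implicit in the proof outline of Theorem~\ref{thm:eb.opt}, that because $\hat{q}_{\alpha,i}^*$ is the posterior mean $\mathbb{E}_{G_{0,n}}[q_{\alpha,i}\mid y_i,s_i]$ (this is exactly the content of Lemma~\ref{lm:main.quant}, since $q_\alpha=\mu+\sigma\Phi^{-1}(\alpha)$ and conditional expectation is linear), the quadratic risk admits a Pythagorean decomposition. Conditioning on the full data $\mathcal{Y}$ and using that $(\mu_i,\sigma_i)$ together with $\mathcal{Y}_i$ are independent of $\mathcal{Y}^{(i)}$ across units, the cross term $\mathbb{E}[(\hat{q}_{\alpha,i}-\hat{q}_{\alpha,i}^*)(\hat{q}_{\alpha,i}^*-q_{\alpha,i})]$ vanishes, because $\mathbb{E}[q_{\alpha,i}\mid \mathcal{Y}]=\mathbb{E}[q_{\alpha,i}\mid y_i,s_i]=\hat{q}_{\alpha,i}^*$. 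Hence the regret collapses to the mean squared deviation from the oracle,
\[
R_{G_{0,n}}(\hat{q}_\alpha,q_\alpha)-R_{G_{0,n}}(\hat{q}_\alpha^{*},q_\alpha)=\frac{1}{n}\sum_{i=1}^n\mathbb{E}\big[(\hat{q}_{\alpha,i}-\hat{q}_{\alpha,i}^{*})^2\big],
\]
and the same identity holds with $(\hat{\mu},\mu)$ and $(\hat{\sigma},\sigma)$ in place of $(\hat{q}_\alpha,q_\alpha)$.

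Next I would split the quantile deviation into its two constituent pieces. By the definition of $\hat{q}_\alpha$ in \eqref{eq:est} and of $\hat{q}_\alpha^*$ in Lemma~\ref{lm:main.quant}, both are the same linear combination of the mean and standard-deviation estimators, so
\[
\hat{q}_{\alpha,i}-\hat{q}_{\alpha,i}^{*}=(\hat{\mu}_i-\hat{\mu}_i^{*})+\Phi^{-1}(\alpha)\,(\hat{\sigma}_i-\hat{\sigma}_i^{*}).
\]
Applying the elementary bound $(a+b)^2\leq 2a^2+2b^2$ (equivalently, Cauchy--Schwarz on the cross term), taking expectations, averaging over $i$, and invoking the Pythagorean identity once more for $\mu$ and for $\sigma$, I obtain
\[
R_{G_{0,n}}(\hat{q}_\alpha,q_\alpha)-R_{G_{0,n}}(\hat{q}_\alpha^{*},q_\alpha)\leq 2\big[R_{G_{0,n}}(\hat{\mu},\mu)-R_{G_{0,n}}(\hat{\mu}^{*},\mu)\big]+2\,\Phi^{-1}(\alpha)^2\big[R_{G_{0,n}}(\hat{\sigma},\sigma)-R_{G_{0,n}}(\hat{\sigma}^{*},\sigma)\big].
\]

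Finally I would substitute the two regret bounds from Theorem~\ref{thm:eb.opt}. Since $\alpha$ is fixed, $\Phi^{-1}(\alpha)^2$ is a finite constant that is absorbed into $\lesssim$. The mean term contributes $\tfrac{1}{n}(\log n)^{4\tilde{\gamma}_2+2\tilde{\gamma}_1+1+\kappa_n}$, while the standard-deviation term contributes the strictly smaller $\tfrac{1}{n}(\log n)^{4\tilde{\gamma}_2+\kappa_n}$ (smaller because $2\tilde{\gamma}_1+1>0$); the former dominates, so the sum is of order $\tfrac{1}{n}(\log n)^{4\tilde{\gamma}_2+1}(\log n)^{2\tilde{\gamma}_1+\kappa_n}$, which is exactly the claimed bound.

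There is essentially no serious obstacle here: once Theorem~\ref{thm:eb.opt} is in hand the lemma is a short combination of linearity and the triangle inequality, and the only points needing a line of justification are (i) that the oracle $\hat{q}_\alpha^*$ is a genuine posterior mean, so the regret collapses to the squared distance to the oracle, and (ii) that $\Phi^{-1}(\alpha)^2$ is harmless for fixed $\alpha$. The one thing to keep in mind is that the crude $(a+b)^2\le 2a^2+2b^2$ split is adequate precisely because we only need a bound of the correct logarithmic order; a sharper treatment of the cross term would not improve the exponent.
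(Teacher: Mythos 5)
Your proof is correct and takes essentially the same route the paper intends: the paper treats this lemma as an immediate corollary of Lemma~\ref{lm:main.quant} (linearity of $\hat{q}_\alpha$ and $\hat{q}_\alpha^*$ in the mean and standard-deviation components) combined with Theorem~\ref{thm:eb.opt}, and offers no separate appendix proof. The details you supply---the Pythagorean collapse of the regret to $\frac{1}{n}\sum_i \mathbb{E}[(\hat{q}_{\alpha,i}-\hat{q}_{\alpha,i}^*)^2]$ (the same identity underlying Lemma~\ref{lem:loo.regret}), the $(a+b)^2 \le 2a^2+2b^2$ split, and the observation that the $\mu$-regret term dominates the $\sigma$-regret term so that the exponents $4\tilde{\gamma}_2+2\tilde{\gamma}_1+1+\kappa_n$ match the stated bound---are exactly the steps the paper leaves implicit.
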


\section{Extensions}
\label{sec:ext}

The regret convergence rates of Theorem~\ref{thm:eb.opt} demonstrate that feasible versions of the oracles incur little cost in terms of regret even under unknown heteroskedasticity. 
This section considers three extensions to the model of Assumption~\ref{ass:disagg.model}, and we focus on the estimation of $\left\{\mu_i\right\}_{i\in[n]}$ to keep the discussion succinct.

\begin{extension}[Heterogeneous sample sizes]
	The first extension consists of settings where the number of observations per unit, $J_i\in \mathbb{N}$, is heterogeneous and furthermore informative about the underlying parameters $(\mu_i,\sigma_i)$. For instance, teachers have differing class sizes and a teacher with higher value-added $\mu_i$ may be systemically assigned smaller classes. Failure to take this into account will result in under- or over-estimation of $\left\{\mu_i\right\}_{i\in[n]}$; see \cite{Chen2026} for fuller arguments and especially in the setting of estimating neighborhood effects. 
	In this case, we augment the model to be 
	\begin{equation}
		y_{ij} \mid \mu_i,\sigma_i,J_i\sim \mathcal{N}\left[\mu_i,\sigma_i^2\right]
		\label{eq:disagg.model.hetj}
	\end{equation}
	for $j=1,\dots,J_i$ and $(\mu_i,\sigma_i,J_i) \sim_{\text{iid}} H_0$ for $i=1,\dots,n$, and modify the decision problem to 
	\begin{equation}
		\check{\mu}^*
		:=
		\argmin_{ \check{\mu} }
		\mathbb{E}_{H_0}^{(\mu,\sigma)}
		\mathbb{E}_{(\mu,\sigma)}^{(\mathcal{Y},\mathcal{J})}
		\frac{1}{n}\sum_{i=1}^n
		\left( \check{\mu} - \mu_i \right)^2
	\end{equation}
	%where we integrate out $\mathcal{J}:=\{J_i\}_{i\in[n]}$ in addition to the observations $\mathcal{Y}$, and 
	where $\mathcal{J}:=\{J_i\}_{i\in[n]}$ and the minimization is over all borel mappings of $(\mathcal{Y},\mathcal{J})$ to $\mathbb{R}^n$. 
	The oracle $\check{\mu}^*$ takes a similar form as Theorem~\ref{thm:main} but conditioning on $J_i$ everywhere:
	\begin{lemma}
		\label{lem:main.hetj}
		Under \eqref{eq:disagg.model.hetj} where $(\mu_i,\sigma_i,J_i) \sim_{\text{iid}} H_0$ and $\mathbb{P}_{H_0}\{J_i > 3\} = 1$, we have 
		\begin{equation*}
			\check{\mu}_i^* 
			=
			y_i 
			+ 
			\frac{\mathbb{E}_{H_0}\left[\sigma^2\mid y_i,s_i,J_i\right]}{J_i}
			\partial_y \log f_{H_0}\left(y_i,s_i^2 \mid J_i\right)
			+
			\frac{1}{J_i}
			\partial_y \mathbb{E}_{H_0}\left[\sigma^2\mid y_i,s_i,J_i\right].
		\end{equation*}
	\end{lemma}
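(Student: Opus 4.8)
The plan is to obtain Lemma~\ref{lem:main.hetj} as a corollary of Theorem~\ref{thm:main}(i), by conditioning on the realized sample size $J_i$ throughout. First I would establish, exactly as in the derivation of \eqref{eq:oracle.est}, that the integrated-risk-minimizing estimator is the posterior mean. Since the quadratic loss is additively separable across $i$ and the $(\mu_i,\sigma_i,J_i)$ are i.i.d., the Rao-Blackwell theorem gives $\check{\mu}_i^* = \mathbb{E}_{H_0}[\mu_i \mid \mathcal{Y}_i, J_i]$ --- the conditioning set now includes the observed $J_i$, since the estimator is permitted to depend on $\mathcal{J}$. Conditional on $J_i=J$ and $(\mu_i,\sigma_i)$, the observations $\{y_{ij}\}_{j=1}^{J}$ are i.i.d.\ $\mathcal{N}[\mu_i,\sigma_i^2]$, so the statistics $(y_i,s_i^2)$ of \eqref{eq:suff.stats} remain sufficient for $(\mu_i,\sigma_i^2)$; hence $\check{\mu}_i^* = \mathbb{E}_{H_0}[\mu_i \mid y_i, s_i^2, J_i]$.

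The key step is to recognize that conditioning on $\{J_i=J\}$ returns us to the baseline model \eqref{eq:disagg.model} with sample size $J$ and prior $G_0^{J} := \mathrm{law}(\mu_i,\sigma_i \mid J_i = J)$. This hinges on the factorization in \eqref{eq:disagg.model.hetj}: because the sampling density of $\{y_{ij}\}_{j=1}^{J}$ given $(\mu_i,\sigma_i,J_i=J)$ is exactly Gaussian and does not otherwise depend on $J_i$, the informativeness of $J_i$ about $(\mu_i,\sigma_i)$ enters only through the conditional prior $G_0^{J}$, not through the likelihood. Consequently the joint mixture density of $(y_i,s_i^2)$ given $J_i=J$ equals $f_{G_0^{J}}(y,s^2)$ in the notation of Theorem~\ref{thm:main}, i.e.\ $f_{H_0}(y,s^2 \mid J_i=J) = f_{G_0^{J}}(y,s^2)$, and likewise $\mathbb{E}_{H_0}[\sigma^2 \mid y,s,J_i=J] = \mathbb{E}_{G_0^{J}}[\sigma^2 \mid y,s]$.

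With this identification in hand, I would apply Theorem~\ref{thm:main}(i) to the prior $G_0^{J}$ with sample size $J$ (valid because $\mathbb{P}_{H_0}\{J_i>3\}=1$ guarantees $J>3$ for every value in the support of $J_i$), yielding
\begin{equation*}
	\mathbb{E}_{G_0^{J}}[\mu \mid y,s]
	=
	y
	+
	\frac{\mathbb{E}_{G_0^{J}}[\sigma^2 \mid y,s]}{J}\frac{\partial}{\partial y}\log f_{G_0^{J}}(y,s^2)
	+
	\frac{1}{J}\frac{\partial}{\partial y}\mathbb{E}_{G_0^{J}}[\sigma^2 \mid y,s],
\end{equation*}
where I have used that $\frac{\partial}{\partial y}\log f_{G_0^{J}}(y \mid s) = \frac{\partial}{\partial y}\log f_{G_0^{J}}(y,s^2)$, since the $s^2$-marginal is free of $y$ and its logarithm is annihilated by $\frac{\partial}{\partial y}$. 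Substituting the realized value $J=J_i$ and translating back through the identifications of the second paragraph delivers the claimed expression. The argument is essentially bookkeeping; the only point requiring care is the reduction in the second paragraph --- verifying that conditioning on the (endogenous) $J_i$ leaves the Gaussian likelihood intact so that $J_i$ acts purely as a selector of the prior --- and this is precisely what the conditional specification \eqref{eq:disagg.model.hetj} encodes.
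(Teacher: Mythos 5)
Your proposal is correct and follows essentially the same route as the paper: the paper's proof also conditions on $J_i = J$, replaces $H_0$ by the conditional law of $(\mu_i,\sigma_i)$ given $J_i=J$ (your $G_0^{J}$, the paper's $H_{(\mu_i,\sigma_i)|J}$), and then invokes Theorem~\ref{thm:main} conditioned on $J$ throughout, using $\mathbb{P}_{H_0}\{J_i>3\}=1$. Your write-up merely makes explicit the bookkeeping (sufficiency, the likelihood's invariance to $J_i$, and $\tfrac{\partial}{\partial y}\log f(y\mid s)=\tfrac{\partial}{\partial y}\log f(y,s^2)$) that the paper leaves implicit.
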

	Thus, when the heterogeneity in the discrete $J_i$ is limited such that there are sufficient observations for each value $J$, one may pool observations of teachers with identical $J_i$ to estimate the density $f_{H_0}(y,s | J_i=J)$ for each $J$. The fast regret convergence rates of Theorem~\ref{thm:eb.opt} suggest that this will work well in practice. 
	\blackqed
\end{extension}
\begin{extension}[Covariate-assisted estimation]
	The discussion above applies equally to the setting where the researcher has access to covariate(s) $\bs{x}_i\in\mathbb{R}^m$ that is potentially informative about the parameters $(\mu_i,\sigma_i)$ and that will help sharpen estimation. For instance in the teacher value-added literature, the years of experience and education level of teacher $i$ may be positively correlated with her value-added $\mu_i$. In such settings, by suitably modifying the model and decision problem as above, we can show a Tweedie-representation: the oracles depend on $G_0$, now the distribution of $(\mu_i,\sigma_i,\bs{x}_i)$, only through the conditional density $f_{G_0}(y,s^2 \mid \bs{x})$. 

	With the latent distribution now involving $\bs{x}_i$, taking the $g$-modeling approach (i.e. plugging in an estimate of $G_0( \cdot \mid \bs{x}_i)$ into $f_{G_0}(y,s^2 \mid \bs{x})$ for each of $\{\bs{x}_i\}_{i\in[n]}$) can be difficult to implement. In contrast, the Tweedie representations imply that $f$-modeling estimators are straightforward, where off-the-shelf estimators can be plugged in for the unknown $f(y,s^2 \mid \bs{x})$, with existing results on density estimation available to  combine with Theorem~\ref{thm:eb.opt} (see proof logic in Remark~\ref{rmk:proof.outline}) for a regret analysis. \blackqed
\end{extension}
\begin{extension}[Dynamic panel forecasting]
	Consider the problem of constructing optimal forecasts for a large collection of short time series, taking the basic dynamic panel model studied in \cite{Liu2020} as an example:
	\begin{equation}
		y_{it} = \rho \cdot y_{it-1} + \mu_i + \sigma \cdot \epsilon_{it}, 
		\quad \epsilon_{it}\mid \mu_i,y_{i,0} \sim \mathcal{N}[0,1],
		\quad (\mu_i,y_{i,0})\sim G_0
	\end{equation} 
	for $i=1,\dots,n$ and $t=1,\dots,T$.
	The motivating application is regulatory stress testing of bank-holding companies, where one must first forecast bank balance-sheet variables under given macroeconomic conditions; as detailed in \cite{Liu2020}, frequent mergers and post-2008 regulatory changes make a large-$n$ small-$T$ framework suitable here.
	In such a model, the optimal one-step-ahead forecast is 
	$\mathbb{E}_{G_0}[y_{it} \mid \mathcal{Y}_{i,t-1}] = \rho y_{it-1} + \mathbb{E}_{G_0}[\mu_i \mid \mathcal{Y}_{i,t-1}]$, where $\mathcal{Y}_{i,t-1}$ is the vector of observations up to time period $t-1$ associated with unit $i$. 
	The conditional mean $\mathbb{E}_{G_0}[\mu_i \mid \mathcal{Y}_{i,t-1}]$ takes the form \eqref{eq:tweed.hom.2} with $f_{G_0}(y_i | \sigma)$ replaced by $f_{G_0}(y_i| \sigma,y_{i,0})$ where $y_i$ is the sufficient statistic for $\mu_i$, which reveals that the oracle forecasts' reliability hinges upon the homoskedasticity assumption, because $\sigma^2$ regulates the Bayes correction in estimating each and every $\mu_i$. 
	
	Within this context, our model adds a layer of heterogeneity through the unit-specific $\sigma_i$, replacing the common $\sigma$ above with $\sigma_i$ and letting $(\mu_i,\sigma_i,y_{i,0})\sim G_0$.
	Indeed it is reasonable to expect in this application that the heterogeneity in $\{\sigma_i^2\}_{i\in[n]}$ to be comparable to that within $\{\mu_i\}_{i\in[n]}$. 
	As a result, the oracle forecasts that allow for unknown heteroskedasticity can substantially improve over those that impose homoskedasticity by allowing for unit-specific Bayes correction.  
	In such a model, emulating the oracles again reduces to emulating $\mathbb{E}_{G_0}[\mu_i \mid \mathcal{Y}_{i,t-1}]$, which now takes the form in Theorem~\ref{thm:main} with $f_{G_0}(y_i,s_i^2)$ replaced by $f_{G_0}(y_i,s_i^2| y_{i,0})$.
	If one is willing to accept independence between $y_{i,0}$ and $(\mu_i,\sigma_i^2)$, then Section~\ref{sec:theory} delivers fast regret\footnote{Regret that is relative to the oracle forecasts, i.e., that knows the distribution $G_0$.} convergence rates of using feasible forecasts that plug in approximate MLEs of the density of $(\mu_i,\sigma_i^2)$. 
	In the case of unrestricted dependence between $y_{i,0}$ and $(\mu_i,\sigma_i^2)$, the results developed here, leveraging on the expressions derived in Theorem~\ref{thm:main}, may be extended under suitable smoothness conditions on the density of $y_i,s_i^2 \mid y_{i,0}$.  
	\blackqed
\end{extension}

    \section{Misspecification of the Sampling Model}
\label{sec:missp}

The existing optimal estimation results under homoskedasticity, as well as those newly established in Theorem~\ref{thm:eb.opt} for unknown heteroskedasticity, rely on an assumption of normal sampling model;
see Assumptions~\ref{ass:agg.model}~and~\ref{ass:disagg.model} respectively. 
In particular the fast regret convergence rates are due to the optimal (i.e. Bayes) estimators depending on $G_0$ only through the super-smooth mixture density, and in a way where the regret of the feasible versions can be quantified in terms of the Hellinger distance between the true and estimated densities. When this assumption is inexact, i.e. maintaining \eqref{eq:disagg.model} but where neither $p_\epsilon$ nor $p_z$ is necessarily normal, then Bayes estimators derived under a normal sampling model are no longer the true Bayes estimators, and these optimal estimation results no longer hold. 

This section quantifies the error from such misspecification, specifically\footnote{Similar results can be derived for the estimation of $\{\sigma_i\}_{i\in[n]}$ or $\{\sigma^2_i\}_{i\in[n]}$, though not for $\{q_{\alpha,i}\}_{i\in[n]}$ which relies on the linearity of the $q_{\alpha,i}$ in $\mu_i$ and $\sigma_i$ in a Gaussian sampling model. This suggests that the optimality results for $\{q_{\alpha,i}\}_{i\in[n]}$ are more fragile to deviations from microdata normality. } for the estimation of $\{\mu_i\}_{i\in[n]}$ which is the most common objective. 
In a nutshell, provided the Gaussian mixture density approximates the true one well enough, this error is of order $O(\frac{1}{J^2})$ and $O(\frac{1}{J})$ in the homoskedastic and heteroskedastic settings respectively (improving to $O(\frac{1}{J^3})$ and $O(\frac{1}{J^2})$ when the microdata is symmetric). The results of Theorem~\ref{thm:eb.opt} are thus more sensitive to non-normality than those under homoskedasticity; but since both errors scale with $\frac{1}{J}$, nonparametric empirical Bayes remains justified for moderate $J$ even under mild misspecification.
% For convenience, we restate the model \eqref{eq:disagg.model}:
% \begin{equation}
% 	y_{ij} = \mu_i + \sigma_i \cdot \epsilon_{ij}, \quad \epsilon_{ij} \mid \mu_i,\sigma_i\sim_{\text{iid}} p_\epsilon
% 	\quad\text{ and }\quad
% 	(\mu_i,\sigma_i) \sim_{\text{iid}} G_0
% 	\label{eq:missp.model.0}
% \end{equation}
% for $j=1,\dots,J$ and $i=1,\dots,n$, and where $p_\epsilon$ has mean 0 and variance 1 is \emph{not} necessarily normal.  
Because the Bayes estimators are different in the homoskedastic versus the heteroskedastic settings, we split the analysis accordingly. 
This section's proofs are located in Appendix~\ref{subsec:app.missp}.
\subsection{The Homoskedastic Setting}

We first look at the error from misspecification in the homoskedastic setting, where $\sigma_i=\sigma_0$ a known constant. Then $G_0$ collapses to a univariate distribution on $\mu_i$. Let $f_p(y | \mu_i)$ denote the sampling density of $y_i$, where the subscript ``$p$'' emphasizes its dependence on $p_\epsilon$.
The model \eqref{eq:disagg.model} implies a distribution on $(y_i,\mu_i)$ as
\begin{equation}
	f_{p}( y \mid \mu_i) = \tfrac{\sqrt{J}}{\sigma_0} \cdot p_{z}\left(z_i\right) 
	\quad\text{ and }\quad \mu_i \sim_{\text{iid}} G_0
	\quad\text{ for }\quad i=1,\dots,n,
	\label{eq:missp.model.1}
\end{equation}
where $z_i:= \frac{y-\mu_i}{\sigma_0/\sqrt{J}}$. 
% and $p_{z}$ is the density of $z_i=\frac{1}{\sqrt{J}}\sum_{j=1}^J \epsilon_{ij}$.
The Bayes estimator of $\mu_i$ is $\mathbb{E}_{p,G_0}[\mu_i\mid y_i]$ with the subscript ``$p,G_0$'' denoting an expectation taken under \eqref{eq:missp.model.1}. 

The misspecified estimator we study is $\mathbb{E}_{\phi,B_0}[\mu \mid y]$, where the subscript ``$\phi,B_0$'' denote an expectation taken under the normal sampling model (let $f_\phi(y|\mu_i)$ denote the sampling density of $y_i$ under normality):
\begin{equation}
	f_\phi(y \mid \mu_i) = p^\phi(z_i) := \tfrac{\sqrt{J}}{\sigma_0}  \phi(z_i)
	\quad\text{ and }\quad \mu_i \sim_{\text{iid}} B_0
	\quad\text{ for }\quad i=1,\dots,n,
	\label{eq:missp.model.2}
\end{equation}
where $B_0$ is a distribution on $\mathbb{R}$.
A finite-sample perspective of this, which corresponds to how empirical Bayes estimators are implemented in practice, is where the $G_0$ is estimated under a normal sampling model (e.g., via NPMLE)
% the mixture density of the sufficient statistic $y_i$ is estimated using NPMLE\footnote{Equivalently, a finite (homoskedastic) Gaussian mixture model.}, and the implied prior for $\mu_i$ (call it $\hat{B}$) is 
and plugged into the posterior mean to form the feasible estimator. This is precisely $\mathbb{E}_{\phi,\hat{B}}[\mu_i \mid y_i]$, the finite-sample analogue to $\mathbb{E}_{\phi,B_0}[\mu_i \mid y_i]$.

The first half of \eqref{eq:missp.model.2} is only approximately correct by the CLT. Thus $\mathbb{E}_{\phi,B_0}[\mu_i \mid y_i]$ is not the same as the optimal estimator $\mathbb{E}_{p,G_0}[\mu_i\mid y_i]$ derived under the true model \eqref{eq:missp.model.1}.
The following lemma quantifies the error from such an approximation. 
\begin{lemma}
	\label{lem:missp.hom}
	Under homoskedasticity, the misspecification error is
	\[
	\mathbb{E}_{p,G_0}[\mu_i \mid y_i] - \mathbb{E}_{\phi,B_0}[\mu_i \mid y_i] = 
	\tfrac{\sigma_0^2}{J}\nabla_0
	- 
	\tfrac{\sigma_0^2}{J}\mathbb{E}_{p,G_0}[ \Delta_0 \mid y_i ]
	\]
	where $\Delta_0 :=  \tfrac{\sqrt{J}}{\sigma_0}[\partial_z\log p_z(z_i) + z_i]$, $\nabla_0 := \partial_y \log f_{p,G_0}(y_i)  - \partial_y \log f_{\phi,B_0}(y_i)$, and $f_{p,G_0}(y_i)$ and $f_{\phi,B_0}(y)$ are the mixture densities under \eqref{eq:missp.model.1} and \eqref{eq:missp.model.2} respectively.
\end{lemma}
An application of Jensen's inequality with Lemma~\ref{lem:missp.hom} yields 
\begin{equation}
	\mathbb{E}_{p,G_0}\big[ \mathbb{E}_{p,G_0}[\mu_i \mid y_i] - \mathbb{E}_{\phi,B_0}[\mu_i \mid y_i] \big]^2 
	\leq 
	2\tfrac{\sigma_0^4}{J^2} \cdot \mathbb{E}_{p,G_0}[ \nabla_0^2+\Delta_0^2 ] 
\end{equation}
where $\Delta_0$ as defined is a measure of the discrepancy between the true and Gaussian models in terms of the sampling densities' log-derivatives. 
An Edgeworth expansion of $p_z$ yields 
\begin{equation}
	\Delta_0 = \frac{\kappa_3}{2\sigma_0}[z_i^2 - 1] + O_p(J^{-1/2})
\end{equation}
where $\kappa_3:=\mathbb{E}_{p}[\epsilon_{ij}^3]$ measures the skewness of the microdata. As a result, $\Delta_0^2 = O_p(1)$ and, subject to control of $\nabla_0$, the misspecification error in the homoskedastic setting will thus be of order $O(\frac{1}{J^2})$. This improves to $O(\frac{1}{J^3})$ if the microdata were symmetric.
% Furthermore if the microdata were symmetric so that $\mathbb{E}_p[\epsilon_{ij}^3]=0$, then the misspecification error will fall by a further order to $O(\frac{1}{J^3})$.

$\nabla_0$ is a measure of the discrepancy between the mixture densities implied by the true and the Gaussian models, and will be small if the Gaussian mixture density is able to approximate the true mixture density reasonably well. This term will be exactly zero if the following holds.
\begin{assumption}
	\label{ass:missp.gmm}
	$B_0$ satisfies
	$f_{p,G_0}(y) = f_{\phi,B_0}(y) \text{ for all } y\in\mathbb{R}$.
\end{assumption}
As an example, suppose that $G_0$ takes the form of a Gaussian mixture: $G_0=\mathcal{N}[0,v]\ast Q$ for some $v > \sigma_0^2/J$ and a distribution $Q$. Then for any $p_\epsilon$, Assumption~\ref{ass:missp.gmm} will be met when $B_0$ is the distribution of $\tfrac{\sigma_0}{\sqrt{J}}Z \ast \mathcal{N}[0,v-\tfrac{\sigma_0^2}{J}] \ast Q$ for $Z\sim \mathcal{N}[0,1]$. 
More generally, Assumption~\ref{ass:missp.gmm} is a joint restriction on $(p_\epsilon,G_0)$ and allows for a reasonably broad class of models.

\subsection{The Heteroskedastic Setting}

We next turn to the heteroskedastic setting, where we recycle notations from the homoskedastic setting to highlight objects that are parallel in the two settings.
The model \eqref{eq:disagg.model} implies a distribution for $(y_i,s_i,\mu_i,\sigma_i)$ as  
\begin{equation}
	f_{p}(y,s^2 \mid \mu_i,\sigma_i) 
	= \tfrac{J}{\sigma_i^3 \sqrt{2}} p_{z,w}\left( z_i,w_i \right) 
	\quad\text{ and }\quad (\mu_i,\sigma_i) \sim_{\text{iid}} G_0
	\text{ for $i=1,\dots,n$ }
	\label{eq:missp.model.3}
\end{equation}
where $p_{z,w}$ is the joint density of $z_i $ and $  w_i := \sqrt{\tfrac{J}{2}}\left(\tfrac{s_i^2}{\sigma_i^2}-1\right)$.
The Bayes estimator under this model is denoted as $\mathbb{E}_{p,G_0}[\mu_i\mid y_i,s_i] $. 
% As before, we assume for this section that the mixture density $f_{p,G_0}(y,s^2)$ lies in a particular class of Gaussian mixture densities. 
% \begin{assumption}
% 	\label{ass:missp.het.gmm}
% 	There exists a distribution $B_0$ on $\mathbb{R}\times\mathbb{R}_{++}$ such that
% 	$$f_{p,G_0}(y,s^2) = f_{\phi,B_0}(y,s^2) \text{ for all } (y,s^2)\in\mathbb{R}\times\mathbb{R}_{++}$$
% 	where
% 	$f_{\phi,B_0}(y,s^2)$ is the mixture density of $(y,s^2)$ taken under a Gaussian microdata sampling model and $(\mu_i,\sigma_i)\sim_{\text{iid}}B_0$; i.e, \eqref{eq:missp.model.4} with $G_0$ replaced by $B_0$.
% \end{assumption}

As before, the misspecified estimator we study is $\mathbb{E}_{\phi,B_0}[\mu_i \mid y_i,s_i]$, the Bayes estimator derived under a model with microdata normality:
\begin{equation}
	f_\phi(y,s^2 \mid \mu_i,\sigma_i) = \tfrac{J}{\sigma_i^3 \sqrt{2}} p_{z,w}^\phi(z_i,w_i)
	\quad\text{ and }\quad (\mu_i,\sigma_i) \sim_{\text{iid}} B_0
	\text{ for } i=1,\dots,n.
	\label{eq:missp.model.4}
\end{equation}
where $p_{z,w}^\phi(\cdot)$ denotes the sampling density of $(z_i,w_i)$ under microdata normality. 
Where the normality is inexact, $\mathbb{E}_{\phi,B_0}[\mu_i\mid y_i,s_i] $ is not the same as $ \mathbb{E}_{p,G_0}[\mu_i\mid y_i,s_i] $. The next lemma quantifies the error from misspecification under heteroskdasticity.
\begin{lemma} 
	\label{lem:missp.het}
	The misspecification error
	$\mathbb{E}_{\phi,B_0}[\mu_i\mid y_i,s_i]  -
		\mathbb{E}_{p,G_0} [ \mu_i \mid y_i,s_i ]$ is expressible as
	\begin{align*}
		&
		\tfrac{1}{J}\nabla_1 +
		\tfrac{1}{J}\mathbb{E}_{p,G_0}
		\bigg[
			\sigma_i^2 \Delta_1 
				- 
				\Delta_2 \cdot 
				\tfrac{\sqrt{J}}{\sigma_i} \partial_z\log p_{z|w}(z_i|w_i)
				% \underbrace{\tfrac{\sqrt{J}}{\sigma} \partial_z\log p_{z|w}(z|w)}_{\partial_y \log f_p(y| \mu,\sigma,s )}
				- 
				\tfrac{\sqrt{J}}{\sigma_i} \partial_z \Delta_2
			\,\Bigm|\, y,s
		\bigg] \\
	    \text{where} \qquad 
		&\Delta_1 := \tfrac{\sqrt{J}}{\sigma_i}[ \partial_z \log p_{z|w}(z_i\mid w_i) + z_i ] \\
		&\Delta_2 := \int_0^\infty
			e^{-t/\sigma_i^2} \frac{r(z_i,w_i)-r(z_i,w_i+\delta_t)}{r(z_i,w_i)} dt \\
		&r(z,w) := \tfrac{p_{z,w}(z,w)}{p_{z,w}^\phi(z,w)} \\ 
		&\delta_t := \tfrac{t / k}{\sigma_i^2} \sqrt{\tfrac{J}{2}} = \tfrac{t}{\sigma_i^2}\tfrac{1}{\sqrt{J}} \cdot \sqrt{2}\tfrac{J}{J-1} \\
		&\nabla_1 := \int_0^\infty \left[\tfrac{s_i^2}{s_i^2+tk^{-1}}\right]^{k-1} 
		\left[\frac{\partial_yf_{p,G_0}(y_i,s_i^2+tk^{-1})}{f_{p,G_0}(y_i,s_i^2)} -  
		\frac{\partial_yf_{\phi,B_0}(y_i,s_i^2+tk^{-1})}{f_{\phi,B_0}(y_i,s_i^2)}\right]
		dt,
	\end{align*}
	and $f_{p,G_0}(y,s^2)$ and $f_{\phi,B_0}(y,s^2)$ are the mixture densities under \eqref{eq:missp.model.3} and \eqref{eq:missp.model.4}.
\end{lemma}
Lemma~\ref{lem:missp.het} implies the bounding of the misspecification error by four different terms:
\begin{align}
	\mathbb{E}_{p,G_0}
	\left[\mathbb{E}_{\phi,B_0}[\mu_i\mid y_i,s_i]  -
		\mathbb{E}_{p,G_0} [ \mu_i \mid y_i,s_i ] \right]^2 
	\leq&
	\tfrac{1}{J^2}\mathbb{E}_{p,G_0}\left[ \sigma_i^2\Delta_1 \right]^2 
	\nonumber\\+&
	\tfrac{1}{J^2}\mathbb{E}_{p,G_0}\left[ \Delta_2 \cdot \tfrac{\sqrt{J}}{\sigma_i} \partial_z \log p_{z|w}(z_i|w_i) \right]^2  
	\nonumber\\+&
	\tfrac{1}{J^2}\mathbb{E}_{p,G_0}\left[ \tfrac{\sqrt{J}}{\sigma_i} \partial_z \Delta_2\right]^2
	\nonumber\\+&
	\tfrac{1}{J^2}\mathbb{E}_{p,G_0}\left[\nabla_1\right]^2.
	\label{eq:missp.model.5}
\end{align}
For large $J$, $p_{z,w}(z_i,w_i)$ is close to $\phi(z_i,w_i)$, a bivariate normal density with variances equal $1$ and $1+\tfrac{1}{2}\mathbb{E}_p[\epsilon_{ij}^4]$ and covariance that is zero if and only if the microdata is symmetric.
As a result, $\partial_z \log p_{z|w}(z_i\mid w_i) + z_i$ is $O_p(1)$ and consequently $\Delta_1$ is $O_p(\sqrt{J})$. The first term of \eqref{eq:missp.model.5} is thus $O(\tfrac{1}{J})$. This implies the misspecified Bayes estimator incurs a larger error order in the heteroskedastic setting than under homoskedasticity, where the equivalent term with $\Delta_0$ is $O(\tfrac{1}{J^2})$. 
The intuition is both Lemmas~\ref{lem:missp.hom}~and~\ref{lem:missp.het} use the Tweedies' representations of the Bayes estimators, but \eqref{eq:tweed.hom.2}, the representation under homoskedasticity, requires only the normality of $y_i$ whereas Theorem~\ref{thm:main} additionally exploits the independence of $y_i$ and $s_i^2$ implied by microdata normality.  
In fact, $\Delta_1$ becomes the same order as $\Delta_0$ when the microdata is symmetric -- i.e. when $y_i$ and $s_i^2$ are asymptotically independent so $p_{z|w}(z_i\mid w_i) = p_{z}(z_i) $. 

We now briefly argue that the second and third terms of \eqref{eq:missp.model.5} are both $O(\tfrac{1}{J^2})$, and so conclude that the misspecification error in the heteroskedastic setting -- subject to control of the fourth term -- is generally of order $O(\tfrac{1}{J})$.  
Under regularity conditions\footnote{These conditions include a polynomial-bounded growth of $\partial_w^2 r(z,w+u)/r(z,w)$ in $u$. With the exponential weight $e^{-t/\sigma_i^2}$, we are then able to confine the region of integration in $\Delta_2$ to $\{t:\,t\leq C\log J\}$ for some large constant $C$ (the tail region is negligible to order $J^{-1}$). This is a region where $\delta_t \asymp t/\sqrt{J}$ is uniformly small enough for a Taylor expansion of $r(z,w+\delta_t)$ about $\delta_t=0$.} we can write 
\begin{equation}
\begin{aligned}[b]
	\Delta_2 =& -\int_0^\infty e^{-t / \sigma_i^2} \cdot \delta_t \cdot \partial_w\log r(z_i,w_i) dt + O_p(J^{-1}) 
	\\
	=&  \sqrt{2}\tfrac{J}{J-1} \cdot  [-\sigma_i^2 \tfrac{1}{\sqrt{J}}] \cdot \partial_w\log r(z_i,w_i) + O_p(J^{-1}). 
\end{aligned}
\end{equation}
For large $J$, $r(z_i,w_i)$ is approximately $\frac{\phi(z_i,w_i)}{\phi(z_i)\phi(w_i)}$  and thus $\Delta_2$ will be $O_p(\tfrac{1}{\sqrt{J}})$. Furthermore $\partial_z \log p_{z|w}(z_i|w_i) $ will be $O_p(1)$ and thus the second term of \eqref{eq:missp.model.5}  is $\tfrac{1}{J^2}\cdot O(1)$ = $O(\tfrac{1}{J^2})$.
By the same argument we can show 
\begin{equation}
	\partial_z \Delta_2 = -\sigma_i^2\sqrt{\tfrac{2}{J}}\cdot \partial_z\partial_w\log r(z_i,w_i) + O_p(J^{-1})
	= -\sigma_i \sqrt{\tfrac{2}{J}}\partial_z\partial_w\log r(z_i,w_i) + O_p(J^{-1}) 
\end{equation}
and $\partial_z\partial_w\log r(z_i,w_i)=O_p(1)$. As a result, the third term of \eqref{eq:missp.model.5} is  of order $O(\tfrac{1}{J^2})$. 

Now for the fourth term of \eqref{eq:missp.model.5}, this is exactly zero if the mixture density generated under \eqref{eq:missp.model.4} is able to match the mixture density of the true model \eqref{eq:missp.model.3} in the same spirit as Assumption~\ref{ass:missp.gmm}. More generally, the condition that $\mathbb{E}_{p,G_0}[\nabla_1]^2 = O(J)$ is sufficient for the misspecification bound to remain as $O(\frac{1}{J})$.

% The first involving $\Delta$ is similar to $\Delta_0$ of the homoskedastic setting, where its magnitude depends on the discrepancy between the model and the Gaussian approximation in terms of their densities' log-derivatives (w.r.t. $y$). The second and third terms are new and depend on $\Delta_2$. Roughly speaking and by inspecting the integrand within $\Delta_2$, these measure the misspecification in terms of the relative shapes of the density of the sample variance $s^2$ at different points, and how these differ from those implied by a Gaussian sampling model.  
% Thus in the heteroskedastic setting there are more directions in which the misspecification of a Gaussian model hurts the misspecified estimator $\mathbb{E}_{\phi,B_0}[\mu\mid y,s]$.
% Nonetheless, these three terms are fixed for a given model and crucially, scale as $\frac{1}{J^2}$. Thus the misspecification error in the heteroskedastic setting will be of order $O(\frac{1}{J^2})$ as in the homoskedastic setting, and relatively benign for a moderate $J$.

\section{Simulations}
\label{sec:simul}

% This section utilizes numerical experiments to evaluate the finite-sample performance of the proposed estimators and alternative sets of estimators. 

\subsection{Data Generating Process}
\label{subsec:simul.dgp}

The DGP for the simulations proceeds in two stages, where conditional on $\left\{(\mu_i,\sigma_i)\right\}_{i=1}^n$ the second stage generates observations $y_{ij}$ according to Assumption~\ref{ass:disagg.model} with $J=15$.
In the first stage, we generate $(\mu_i,\sigma_i)\sim_{\text{iid}}G_0$ by coupling Gaussian and Gamma random variables\footnote{$\gamma(\kappa,\lambda)$ denotes the Gamma distribution with shape $k$ and scale $\lambda$, and $F_{\gamma(\kappa,\lambda)}^{-1}(\cdot)$ denotes the quantile function of $\gamma(\kappa,\lambda)$. $\Phi(\cdot)$ denotes the standard normal CDF.}: 
$\mu_i = \alpha + \nu^{1/2} \cdot \xi_{i,1}$ and $\sigma_i^2 = F_{\gamma(\kappa,\lambda)}^{-1} \big(\Phi(\xi_{i,2})\big)$, where
\begin{equation}
	\begin{bmatrix}
		\xi_{i,1} \\ \xi_{i,2}
	\end{bmatrix} 
	\sim \,\mathcal{N}\left[
	\begin{bmatrix}
		0 \\ 0
	\end{bmatrix},
	\begin{bmatrix}
		1 & \rho \\ \rho & 1
	\end{bmatrix}
	\right].
	\label{eq:simul.dgp}
\end{equation}
Thus $\mu_i$ is a Gaussian variable with mean $\alpha$ and variance $\nu$, whereas $\sigma_i^2$ is a Gamma variable with shape $\kappa$ and scale $\lambda$. 
The parameter $\rho$ controls the correlation between $(\mu_i,\sigma_i^2)$.
In the calibrated DGP, we set $(\alpha,\nu,\kappa,\lambda,\rho)$ such that the first five moments of $(\mu_i,\sigma_i^2)$ matches those estimated from the dataset of our application: 
\begin{equation}
	\mathbb{E}[\mu_i] = 0 \, , \mathbb{V}[\mu_i] = 0.018\, , 
	\mathbb{E}[\sigma_i^2] = 0.26\, , \mathbb{V}[\sigma_i^2] = 0.003\, , \text{cor}(\mu_i,\sigma_i^2) = -0.38.
\end{equation}
% \begin{figure}[t!]
% 	\caption{DGP Calibrated to Teacher Value-added Application}
% 	\label{fig:simul.dens}	
% 	\begin{center}
% 		\begin{tabular}{ccc}
% 			\includegraphics[width=.3\textwidth]{simul_fig1.pdf} 
% 			&
% 			\includegraphics[width=.3\textwidth]{simul_fig2.pdf}
% 			&
% 			\includegraphics[width=.3\textwidth]{simul_fig3.pdf}
% 		\end{tabular} 
% 	\end{center}
% 	{\footnotesize {\em Notes:} The DGP calibrated to fit the moments of the teacher value-added application. From the left to the right panel: the density of $\mu_i$, the density of $\sigma_i^2$, and finally the bivariate density of $(\mu_i,\sigma_i^2)$ with $\mu_i$ on the $x$-axis.}\setlength{\baselineskip}{4mm}
% \end{figure}
% Figure~\ref{fig:simul.dens} displays the marginal and joint densities of $(\mu_i,\sigma_i^2)$ under this calibration. 
We then perturb the DGP along a single dimension (e.g., changing $\mathbb{V}[\mu_i]$ while maintaining the other four moments) and observe the performances of the estimators.

\subsection{Estimators}
\label{subsec:simul.est}
In implementing the feasible versions of the oracles, we adopt a $n^{1/2}$-order sieve MLE $f_{\hat{G}(n^{1/2})}$-- see Remark~\ref{rmk:implementation}.
We call this set of estimators \textsc{het}.
%two orders: $\log n$ and $n^{3/4}$. In terms of sieves, vdv2001's theoretical results suggest that a sieve MLE of order $\log n$ has the same asymptotic performance as the exact NPMLE in estimating $f_{G_0}$. On the other hand the order of $n^{3/4}$, being much closer to $n$ than $\log n$ is, is meant to be a closer approximation of the exact NPMLE.  
%This will provide insights on how the choice of sieves impacts the finite-sample performance of \textsc{het} estimators under differing DGPs. 
As alternatives, 
we include a `known-heteroskedasticity' version of \textsc{het} that treats $s_i^2$ as if it were $\sigma_i^2$, with the prior on $\mu_i$ estimated using a sieve MLE as above\footnote{Thus, such an estimator implicitly assumes independence between $\mu_i$ and $\sigma_i$.}. This we refer to as $\textsc{het-s}$. 
We also include a homoskedastic version of \textsc{het}, i.e., imposing homoskedasticity $\sigma_i^2 = \sigma^2$ estimated using $\tfrac{1}{n}\textstyle\sum_{i=1}^ns_i^2$, and where the prior on $\mu_i$ is similarly estimated as above. We call this $\textsc{hom}$. 
Finally, the set of naive estimators, which uses $y_i$ to directly estimate $\mu_i$ and $y_i + \Phi^{-1}(\alpha)\cdot s_i$ to estimate $q_{i,\alpha}$, is included and referenced as \textsc{naive}.

\subsection{Results} 

\subsubsection{Relative Regrets I}
\begin{table}[!ht]
	\centering
	\caption{Relative Regrets For MSE Loss Function}
	\small 
	\begin{spacing}{1}
	    \begin{tabular}{ llcccc }
		\toprule
%		(I) & (I) & (I) & (I) & (I) & (I) & (I) & (I) & (I)
%		\\
		$n$ & Estimator & \phantom{1}Calibrated\phantom{1} & 
		\phantom{1}$\mathbb{V}[\sigma_i^2]=$ \phantom{1} & 
		\phantom{1}$\text{cor}(\mu_i,\sigma_i^2)$ \phantom{1} &
		\phantom{1}$J=7$ \phantom{1} \\
		& & DGP &
		0.018 & = -0.76 & \\
		\cmidrule(lr){3-6}
		& & 
		\multicolumn{4}{c}{Estimation Target: $\{\mu_i\}_{i\in[n]}$} \\
		\midrule
		$200 $
		& $\textsc{het}$  				& .12 & .19 & .12 & .17 \\
		& $\textsc{het-s}$                & .13 & .17 & .20 & .20 \\
		& $\textsc{hom}$                & .12 & .25 & .19 & .13 \\
		& $\textsc{naive}$              & 1.0 & 1.2 & 1.2 & 2.1 \\
		[0.5ex]
		$5,000 $
		& $\textsc{het}$  			    & .0089 & .019 & .0087 & .011 \\
		& $\textsc{het-s}$                & .053 & .080 & .12 & .15 \\
		& $\textsc{hom}$                & .031 & .14 & .11 & .027 \\
		& $\textsc{naive}$              & 1.0 & 1.2 & 1.1 & 2.1 \\
		\cmidrule(lr){3-6}
		& & 
		\multicolumn{4}{c}{Estimation Target: $\{q_{0.1,i}\}_{i\in[n]}$} \\
		\midrule
		$200 $
		& $\textsc{het}$    			  & .13 & .15 & .11 & .18 \\
		& $\textsc{het-s}$                  & .70 & .21 & .50 & 1.5 \\
		& $\textsc{hom}$                  & .21 & 1.3 & .30 & .18 \\
		& $\textsc{naive}$                & 2.0 & 1.2 & 1.7 & 3.3 \\
		[0.5ex]
		$5,000 $
		& $\textsc{het}$    			  & .0097 & .016 & .0079 & .012 \\
		& $\textsc{het-s}$                  & .64 & .17 & .46 & 1.4 \\
		& $\textsc{hom}$                  & .16 & 1.2 & .25 & .076 \\
		& $\textsc{naive}$                & 2.0 & 1.2 & 1.7 & 3.3 \\
		\bottomrule
	\end{tabular}
	\end{spacing}
	\label{tab:simul.mse}
	
	{\footnotesize {\em Notes:} 
		Calibrated DGP satisfies $\mathbb{E}[\mu] = 0$, $\mathbb{V}[\mu] = 0.018$, $\mathbb{E}[\sigma^2] = 0.26$, $\mathbb{V}[\sigma^2] = 0.003$, and $\text{cor}(\mu,\sigma^2) = -0.38$, with $J=15$.
	}\setlength{\baselineskip}{4mm}
\end{table}

Table~\ref{tab:simul.mse} displays
% \footnote{Appendix~\ref{sec:simul.supp} explores further numerical experiments, when the estimation objective is $\{\sigma_i\}_{i\in[n]}$ and $\{\sigma_i^2\}_{i\in[n]}$ instead.} 
the relative regrets involving the MSE loss as defined in \eqref{eq:risk.mse}:
\begin{equation}
	\frac{
		\hat{\mathbb{E}}[l(\hat{\mu},\mu)] - 
		\hat{\mathbb{E}}[l(\hat{\mu}^*,\mu)]
	}{
		\hat{\mathbb{E}}[l(\hat{\mu}^*,\mu)]
	}
	\qquad 
	\text{ and }
	\qquad 
	\frac{
		\hat{\mathbb{E}}[l(\hat{q}_{0.1},q_{0.1})] - 
		\hat{\mathbb{E}}[l(\hat{q}_{0.1}^*,q_{0.1})]
	}{
		\hat{\mathbb{E}}[l(\hat{q}_{0.1}^*,q_{0.1})]
	}
\end{equation}
for varying estimators $\hat{\mu}$ or $\hat{q}_{0.1}$ and DGPs, and where the expectation $\hat{\mathbb{E}}$ is taken across 1,000 simulation rounds. 
The 3\textsuperscript{rd} column displays results for the calibrated DGP, whereas the DGP for the 4\textsuperscript{th} column onward each perturb the calibrated DGP along one dimension. 
For example, the 4\textsuperscript{th} column displays results for the DGP with the same moments, except that now $\mathbb{V}[\sigma_i^2] = 0.018$.
The top and bottom half of Table~\ref{tab:simul.mse} are for the estimation targets of $\{\mu_i\}_{i\in[n]}$ and $\{q_{0.1,i}\}_{i\in[n]}$ respectively.

In the calibrated DGP, $\{\sigma_i^2\}_{i\in[n]}$ is relatively homogeneous compared to $\{\mu_i\}_{i\in[n]}$, so \textsc{hom} applies a nearly optimal Bayes correction (see $(i)$ of Theorem~\ref{thm:main} and \eqref{eq:tweed.hom.2}) for $\{\mu_i\}_{i\in[n]}$ and is competitive with \textsc{het}. \textsc{het-s} is also comparable, so plugging in $s_i$ for $\sigma_i$ is only mildly sub-optimal here. As we show later, however, this similarity in MSE performance masks differences in detecting units with low $\mu_i$, a policy-relevant exercise. For large $n$ the relative regret of \textsc{het} vanishes, whereas \textsc{hom} retains a non-negligible regret due to the (limited) heterogeneity in $\{\sigma_i^2\}_{i\in[n]}$.

When the estimation target shifts to $\{q_{0.1,i}\}_{i\in[n]}$, the performance of \textsc{hom} worsens relative to \textsc{het}. This is because $\sigma_i^2$ not only matters for the Bayes correction in estimating $\mu_i$, but also directly as part of the estimation target: $q_{0.1,i}=\mu_i+\Phi^{-1}(0.1)\cdot\sigma_i$ where $\Phi^{-1}(0.1)=-1.28$. The sub-optimality induced by \textsc{hom}'s homoskedasticity assumption turns out to be large even for a limited heterogeneity in $\{\sigma_i^2\}_{i\in[n]}$. 
Even more striking is \textsc{het-s} where, for the same reason, simply plugging in the imprecise $s_i$ for $\sigma_i$ incurs a large regret. 

The relative-regret differential between \textsc{het} and both \textsc{hom} and \textsc{het-s} is also especially large in the case of heterogeneous $\{\sigma_i^2\}_{i\in[n]}$ (see 4\textsuperscript{th} column) or strong dependence between $\mu_i $ and $\sigma_i^2$ (see 5\textsuperscript{th} column). For instance recall that \textsc{hom}, which assumes homogeneous $\sigma_i=\sigma$, implicitly assumes no dependence between $(\mu_i,\sigma_i)$. When there is meaningful heterogeneity in $\{\sigma_i^2\}_{i\in[n]}$ and dependence between $(\mu_i,\sigma_i)$, \textsc{hom} is unable to take advantage of the information in $s_i^2$ regarding $\sigma_i^2$ to sharpen the estimation of $\mu_i$ and so incurs a larger relative regret. On the other hand, \textsc{het} optimally exploits this information for estimating $\{\mu_i\}_{i\in[n]}$ and its relative regret remains similar from the 3\textsuperscript{rd} column to 4\textsuperscript{th} or 5\textsuperscript{th} columns.

The final point of comparison is between columns 3 and 6, where $J$, the number of observations per unit, is reduced from 15 to 7. In this case, the relative regrets of \textsc{het-s} increase significantly more so than \textsc{het} and \textsc{hom}. This illustrates the point that plugging in $s_i$ for $\sigma_i$ is most sub-optimal when $J$ is small, which are the types of setting where empirical Bayes methods are commonly employed.   

In sum, we conclude that \textsc{het} dominates \textsc{hom}, \textsc{het-s} and \textsc{naive} for both targets $\{\mu_i\}_{i\in[n]}$ and $\{q_{0.1,i}\}_{i\in[n]}$ across all specifications  and in particular even for small sample sizes. Thus, there is little finite-sample cost to adopting \textsc{het} in practice, and much to be gained in terms of the relative regrets especially when $\{\sigma_i^2\}_{i\in[n]}$ is hetergeneous or when there exists strong dependence between $\mu_i$ and $\sigma_i^2$.

\subsubsection{Relative Regrets II}

\begin{figure}[t!]
	\caption{Relative Regrets For Detecting Low $\mu_i$ or $q_{0.1,i}$}
	\label{fig:simul.detect}	
	\begin{center}
		\begin{tabular}{cc}
			Target: $\{\mu_i\}_{i=1}^n$
			&
			Target: $\{q_{0.1,i}\}_{i=1}^n$
			\\ [1.5ex]
			\includegraphics[width=.30\textwidth]{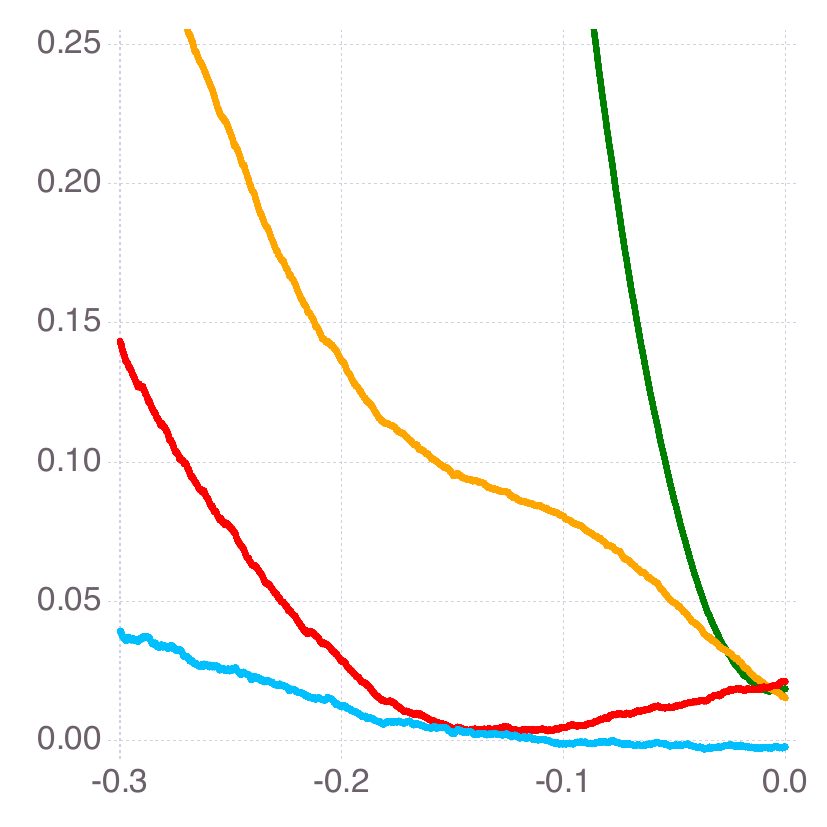} 
			&
			\includegraphics[width=.30\textwidth]{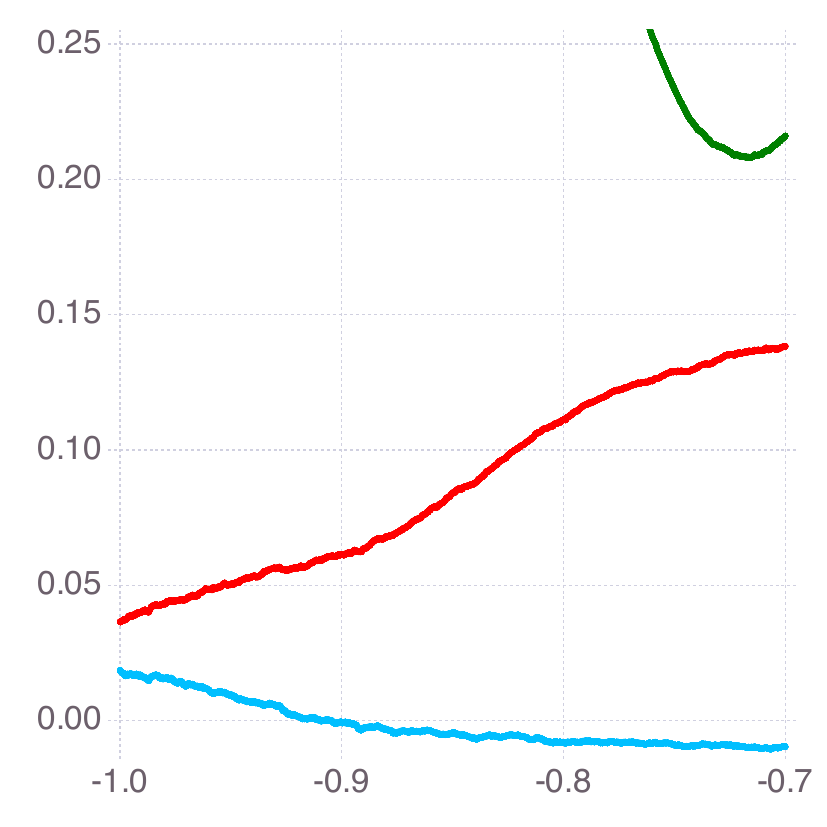}
		\end{tabular} 
	\end{center}
	{\footnotesize {\em Notes:} 
		The color coding is {\color{jblue} \textsc{het}}, {\color{red} \textsc{hom}}, {\color{jorange} \textsc{het-s}}, and {\color{jgreen2} \textsc{naive}} respectively. The underlying DGP is the calibrated DGP, with $n=5000$.
		The threshold value $c$ is varied on the $x$-axis. 
		The values $(-0.3, 0.0)$ and $(-1.0, -0.7)$ corresponds to the $(0.02, 0.5)$ population quantiles of $\mu_i$ and $q_{0.1,i}$ respectively.
	}\setlength{\baselineskip}{4mm}
\end{figure}

Figure~\ref{fig:simul.detect} displays the relative regrets involving the loss function $l_{\text{d}}(\cdot,\cdot,c)$ as defined in \eqref{eq:loss.detect}: 
\begin{equation}
	\frac{
		\hat{\mathbb{E}}[l_{\text{d}}(\hat{\mu},\mu,c)] - 
		\hat{\mathbb{E}}[l_{\text{d}}(\hat{\mu}^*,\mu,c)]
	}{
		\hat{\mathbb{E}}[l_{\text{d}}(\hat{\mu}^*,\mu,c)]
	}
	\qquad 
	\text{ and }
	\qquad 
	\frac{
		\hat{\mathbb{E}}[l_{\text{d}}(\hat{q}_{0.1},q_{0.1},c)] - 
		\hat{\mathbb{E}}[l_{\text{d}}(\hat{q}_{0.1}^*,q_{0.1},c)]
	}{
		\hat{\mathbb{E}}[l_{\text{d}}(\hat{q}_{0.1}^*,q_{0.1},c)]
	}
\end{equation}
for the left and the right panels of the figure respectively. 
Intuitively, the loss functions underlies a decision problem of identifying units with $\mu_i$ or $q_{0.1,i}$ below a fixed threshold $c$, where the loss from each unit increases symmetrically in the magnitude of misidentification. 
The DGP utilized is the calibrated DGP with $n=5000$, and
the threshold value $c$ is varied on the $x$-axis of Figure~\ref{fig:simul.detect}.

From the left panel, we observe that \textsc{het} dominates \textsc{hom} for all values of $c$. Furthermore, even though the heterogeneity in $\{\sigma_i^2\}_{i\in[n]}$ is limited under this DGP, the relative regret differential between \textsc{het} and \textsc{hom} widens considerably for small values of $c$. 
This is likely due to the negative dependence between $\mu_i$ and $\sigma_i^2$: units with the smallest $\mu_i$ tend to have the largest $\sigma_i^2$, and \textsc{hom}, which assumes a common $\sigma^2$ for all units, will suffer the largest regret at detecting units with small $\mu_i$ because it is precisely at these regions that the homoskedasticity assumption is most misplaced.

The story is reversed for detecting units with small $q_{i,0.1}$, in that the regret differential between \textsc{het} and \textsc{hom} decreases for smaller values of $c$. 
It turns out that for \textsc{hom} in this case, for small values of $\mu_i$, the (negative) bias in the estimation of $\mu_i$ arising from the homoskedasticity assumption cancels out the (negative) bias in the estimation of $\sigma_i^2$ through the quantile's expression: $q_{0.1,i}= \mu_i - 1.28\cdot \sigma_i$.\footnote{To elaborate in detail: in this DGP the units with small $\mu_i$ tend to have larger $\sigma_i$, and so the homoskedasticity assumption falsely imputes small $\sigma_i$s to these units; thus \textsc{hom} estimators of $\sigma_i$ for these units will be too small. As a result of this, the Bayes correction for the estimation of $\mu_i$ is smaller than optimal and the \textsc{hom} estimates of these $\mu_i$ will also be too small.}

Finally, \textsc{het-s} and \textsc{naive}, which simply plug in $s_i$ and $(y_i,s_i)$ for the unknown $\mu_i$ and $(\mu_i,\sigma_i)$  respectively, perform poorly at all values of $c$. In fact, they have relative regret that are uniformly above $0.2$ in the right panel. This is because they identify many units as having $\mu_i$ or $q_{0.1,i}$ below $c$, but the vast majority of these are likely due to noise in the sufficient statistics than a true reflection of the parameters $(\mu_i,\sigma_i^2)$.

\section{An Empirical Illustration}
\label{sec:appl}

% We conclude with an illustration of our methodology by estimating teacher quality using a matched student-teacher dataset. 

\subsection{Data}
The dataset is a panel of elementary school students in grades 3--5 in North Carolina. Each observation is a (student, year, grade, subject) cell with covariates such as ethnicity, gender, and economic status. 

\subsection{Estimation}
For simplicity we restrict attention to 2019 observations with class size between 16 and 19, to excludes abstract from special-education classes and class-size heterogeneity. This leaves 25,410 observations, each a unique student, across $n=1,458$ teachers.

We take a simple model of teacher value-added:
\begin{equation}
	y_{j}^* = \mu_{i(j)} + x_j^\prime \beta_0 + \sigma_{i(j)}\epsilon_{j},
	\quad 
	\epsilon_{j} \mid \mu_{i(j)},\sigma_{i(j)},x_j \sim\mathcal{N}[0,1]
\end{equation}
where $y_{j}^*$ is student $j$'s 2019 reading test score, and $i(\cdot)$ is the matching function that returns student $j$'s teacher identity in 2019. The microdata normality assumption underlying \eqref{eq:app.model} is examined in the Supplementary Appendix (Section~\ref{subsec:app.normality}) with residual-based specification tests and we find no evidence against normality.
The covariates $x_j$ include student $j$'s 2018 reading score, ethnicity, gender, family economic status, english-language-learner status, and age; test scores are standardized to mean 0 and standard deviation 1 within each (year,grade) cell. The identifying assumption is that, conditional on $x_j$, the student effect $\epsilon_j$ is independent of her matched teacher's profile $\left(\mu_{i(j)},\sigma_{i(j)}\right)$. Following the literature, we first purge the covariate effects by regression\footnote{We estimate $\beta_0$ using within-teacher variation and move $x_j^\prime \hat{\beta}$ to the left-hand side, defining $y_j:=y_j^*-x_j^\prime\hat{\beta}$; $\hat{\beta}$ is consistent for $\beta_0$ since the relevant sample size $n$ is large.}, leading to the model of Assumption~\ref{ass:disagg.model} in matching-function notation:
\begin{equation}
	y_{j} = \mu_{i(j)} + \sigma_{i(j)}\epsilon_j, 
	\quad 
	\epsilon_{j} \mid \mu_{i(j)},\sigma_{i(j)},x_j \sim \mathcal{N}[0,1]
	\label{eq:app.model}
\end{equation} 
for $j=1,\dots,J_i$ and $i=1,\dots,n$, where $J_i:=\lvert\left\{j:i(j)=i\right\}\rvert$.

The difference with the standard model in the literature is that we allow for teacher-level heteroskedasticity. 
When teachers differ in both $\mu_i$ and $\sigma_i$, teacher quality can be defined in several ways; for two teachers with the same $\mu_i$, whether a higher $\sigma_i$ is preferred depends on the policymaker's preferences. The most common measure is the mean value-added $\mu_i$, how the average student ($\epsilon_j=0$) would perform under each teacher. We also consider $q_{0.1,i}=\mu_i-1.28\cdot\sigma_i$, how the 10\textsuperscript{th}-percentile student would perform under teacher $i$, which penalizes teachers with high $\sigma_i$.

\subsection{Estimators}
\label{subsec:app.est}
With the model \eqref{eq:app.model} in hand, we construct the teacher-specific sufficient statistics 
\begin{equation}
	y_i = \tfrac{1}{J_i} \textstyle\sum_{j:i(j)=i}y_j,
	\quad
	s_i^2 = \tfrac{1}{J_i-1} \textstyle\sum_{j:i(j)=i}\left[y_j-y_i\right]^2
\end{equation}
and estimate $(\mu_i,q_{0.1,i})$ using firstly the proposed set of estimators (called \textsc{het}), secondly a modification of \textsc{het} that plugs in $s_i^2$ for the unknown $\sigma_i^2$ (called \textsc{het-s}), another that assumes homoskedasticity (called \textsc{hom}), and lastly \textsc{naive} that uses the teacher-specific sufficient statistics $(y_i,s_i^2)$ directly; see Section~\ref{subsec:simul.est} for more details. 
% We abstract from class size heterogeneity in the estimation which we justify by our sample restriction to a relatively narrow band of class sizes between 16 to 19. 

\subsection{Results}
\label{subsec:app.results}
\begin{figure}[t!]
	\caption{Estimates from \textsc{het} and Density Plots}
	\label{fig:app.dens1}	
	\begin{center}
		\begin{tabular}{cc}
			\includegraphics[width=.30\textwidth]{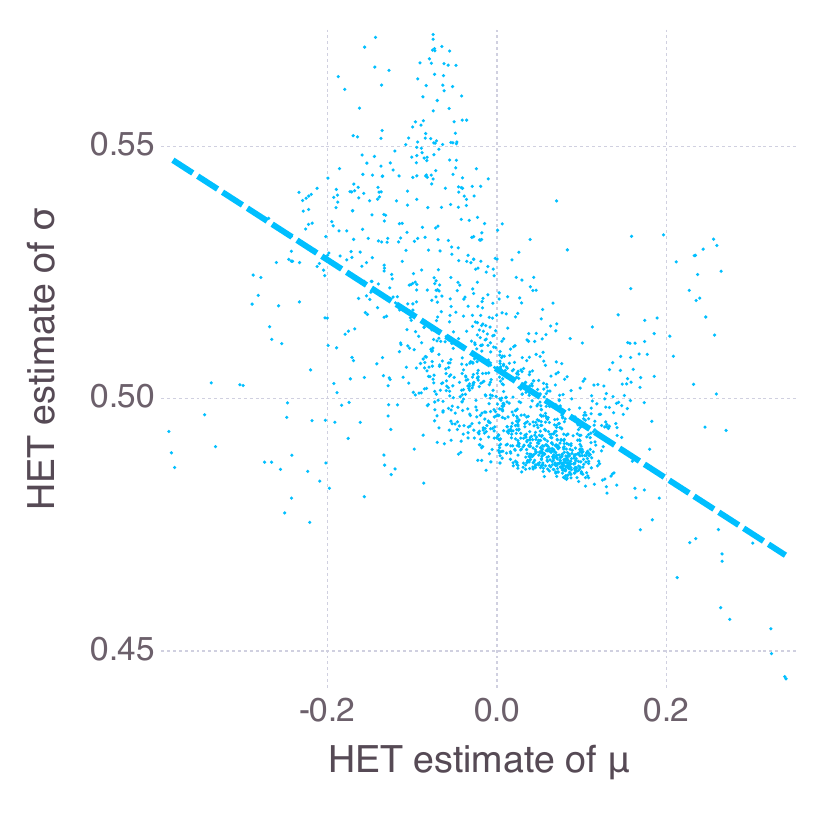} 
			&
			\includegraphics[width=.30\textwidth]{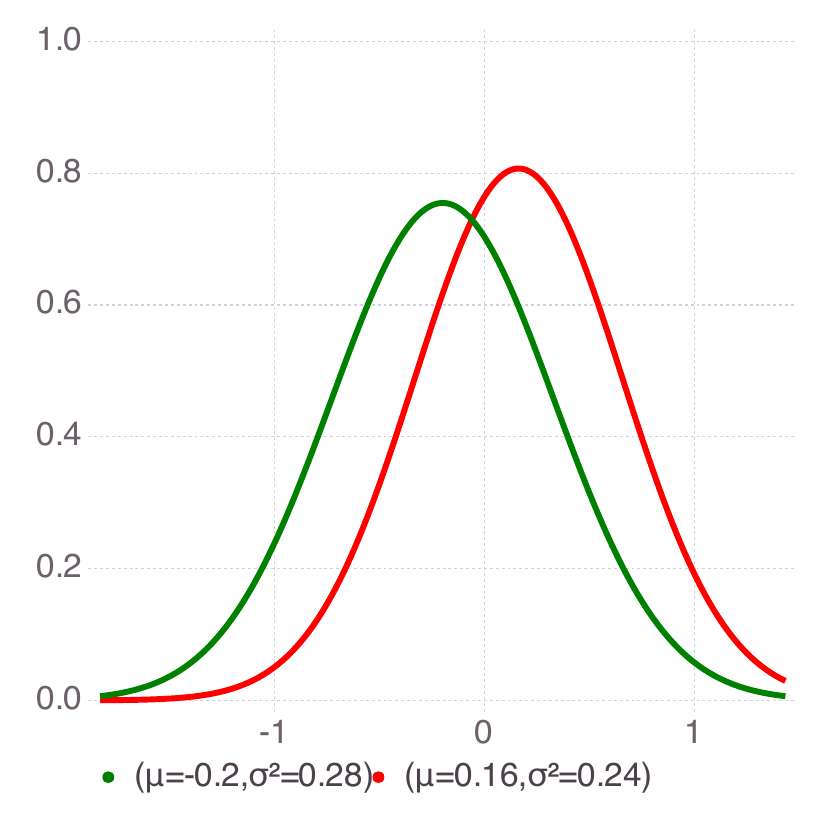}
		\end{tabular} 
	\end{center}
	{\footnotesize {\em Notes:} The blue line in the left panel is the line of best fit (slope $-0.14$). In the right panel, the blue (red) curve is the sampling density at the average \textsc{het} estimate of $(\mu_i,\sigma_i)$ among teachers in the bottom (top) 5\% by $q_{0.1,i}$.}\setlength{\baselineskip}{4mm}
\end{figure}

The main finding here is that teachers with small $\mu_i$ tend to have large $\sigma_i$, as is shown in the left panel of Figure~\ref{fig:app.dens1} that plots the \textsc{het} estimates of $\sigma_i$ against those of $\mu_i$. This implies differences in teacher quality is exacerbated, in that teachers who differ in how the average student performs under them (that is, $\mu_i$) tend to differ even more in how the lower-percentile students will perform under them (e.g., $q_{0.1,i}$). To see this, the right panel of Figure~\ref{fig:app.dens1} displays the test score distribution of \eqref{eq:app.model} for two representative teacher profiles $(\mu,\sigma)$ constructed from the \textsc{het} estimates; see the notes of Figure~\ref{fig:app.dens1} for construction details.  
Re-assigning a mean student from the lower to higher value-added teacher shifts the student's test score up by 0.36 (i.e. 0.36 standard deviations of raw test scores).  
The negative correlation between $\mu_i$ and $\sigma_i$ implies that re-assigning a $10^{\text{th}}$ percentile student from the lower to higher value-added teacher shifts the student's test score up by a larger 0.42.  
%The negative correlation between $\mu_i$ and $\sigma_i$ implies that while the two teachers' $\mu$ differ by $0.36$ (i.e., $0.36$ standard deviations of the raw test scores), their $q_{0.1}$ differ by $0.41$, a $12\%$ increase in difference. 

Next, recall by Lemma~\ref{lm:threshold.detect} that $\hat{\mu}^*$ (or $\hat{q}_{0.1}^*$) is also optimal for the problem of detecting teachers with $\mu_i$ (or $q_{0.1,i}$) below a fixed threshold. We thus explore the feasible estimators in this context, where teacher quality is measured by $\{\mu_i\}_{i\in[n]}$ and also by $\{q_{0.1,i}\}_{i\in[n]}$. 
The left (and right) panel of Figure~\ref{fig:app.detect} plots the number of teachers whose estimates of $\mu_i$ (and $q_{\alpha,i}$) are flagged as being less than the threshold value which is varied on the $x$-axis. Unsurprisingly, \textsc{naive} flags the greatest number for every threshold value. It is likely that most of these were flagged because of the large variability of the estimates $(y_i,s_i^2)$ rather than a true reflection of $\mu_i$.

Another feature from the left panel of Figure~\ref{fig:app.detect} is that \textsc{hom} flags more teachers than \textsc{het} for almost all threshold values. For instance, it flags out 18 more teachers than \textsc{het} for the threshold value of $-0.25$, or an increase of 80\% in relative terms. As we have seen from Figure~\ref{fig:app.dens1}, teachers with smaller $\mu_i$ tend to have larger $\sigma_i$. \textsc{hom}, which assumes $\sigma_i=\sigma$ for all $i$, applies a less-than-optimal Bayes correction for this set of teachers and the additional variability in the \textsc{hom} estimates results in more teachers being flagged. On the contrary, \textsc{het} allows for heterogeneous variances and dependence within $(\mu_i,\sigma_i)$ and does not suffer from this problem.

On the right panel where teacher quality is defined in terms of $\{q_{0.1,i}\}_{i\in[n]}$, \textsc{hom} now flags significantly fewer teachers than \textsc{het} for every threshold value. This is again due to the same combination of \textsc{hom} imposing a common $\sigma$ for all teachers and the negative correlation within $(\mu_i,\sigma_i)$, albeit through a different channel. 
Recall that $q_{0.1,i} = \mu_i - 1.28\sigma_i$. As a result, \textsc{hom} understates how poorly the 10\textsuperscript{th}-percentile students will do under teachers with smaller $\mu_i$, whereas \textsc{het}, which incorporates the negative correlation in $(\mu_i,\sigma_i)$, does not. The implications differ markedly in absolute terms: \textsc{hom} flags about 100 fewer teachers than \textsc{het}, a consequence of its a-priori homoskedasticity restriction -- one that Figure~\ref{fig:app.dens1} suggests is misguided.

\begin{figure}[t!]
	\caption{Detecting Teachers Below Fixed Threshold}
	\label{fig:app.detect}	
	\begin{center}
			\begin{tabular}{cc}
					Target: $\{\mu_i\}_{i=1}^n$
					&
					Target: $\{q_{0.1,i}\}_{i=1}^n$
					\\ [1.5ex]
					\includegraphics[width=.30\textwidth]{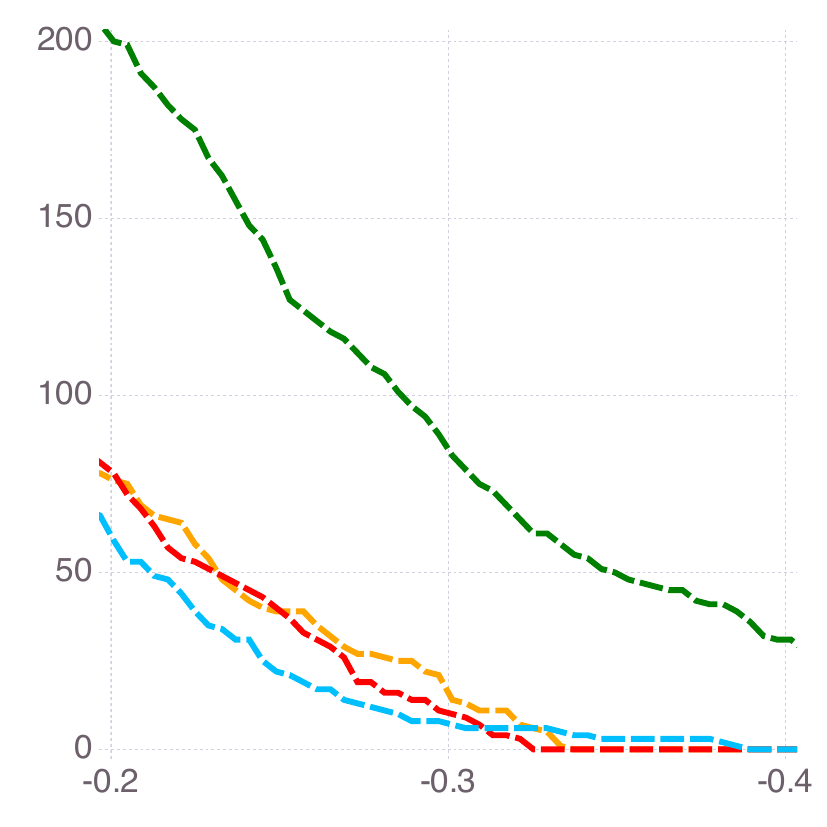} 
					&
					\includegraphics[width=.30\textwidth]{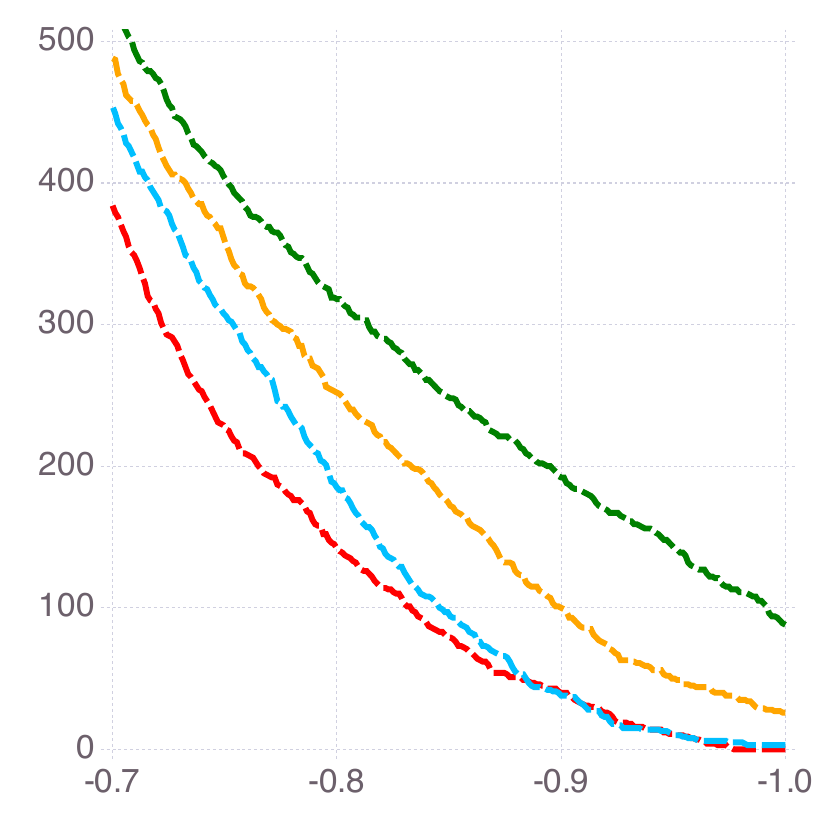}
				\end{tabular} 
		\end{center}
	{\footnotesize {\em Notes:} The color coding is {\color{jblue} \textsc{het}}, {\color{red} \textsc{hom}}, {\color{jorange} \textsc{het-s}}, and {\color{jgreen2} \textsc{naive}} respectively. }\setlength{\baselineskip}{4mm}
\end{figure}

\subsection{Validation Exercise}
\label{subsec:app.validation}
As a simple validation exercise, we split the dataset as $\mathcal{Y} = \mathcal{Y}^{\text{train}} \cup \mathcal{Y}^{\text{test}}$, with $\mathcal{Y}^{\text{train}}$ containing $8$ observations per teacher. The set $\mathcal{Y}^{\text{train}}$ is used to construct the estimators as before, whereas $\mathcal{Y}^{\text{test}}$ is used to unbiasedly estimate the risk of the estimators.
For estimators $\hat{\mu}_{\text{tr}}$ and $\hat{q}_{0.1,\text{tr}}$,
% and $\tilde{\mu}_{\text{tr}}$ 
(subscript ``tr'' indicating its construction based on $\mathcal{Y}^{\text{train}}$),
we provide 
% (see Supplementary Appendix~\ref{sec:app.supp}) 
risk estimators $\hat{R}(\hat{\mu}_{\text{tr}},\mu)$ and $\hat{R}(\hat{q}_{0.1,\text{tr}},q_{0,1})$ 
% and $\hat{R}_d(\hat{\mu}_{\text{tr}},\mu,c)$ 
satisfying $\mathbb{E}_{G_0}[\hat{R}(\hat{\mu}_{\text{tr}},\mu)] = R_{G_0}(\hat{\mu}_{\text{tr}},\mu)$ and $\mathbb{E}_{G_0}[\hat{R}(\hat{q}_{0.1,\text{tr}},q_{0.1})]=R(\hat{q}_{0.1,\text{tr}},q_{0.1})$. See Section~\ref{sec:app.supp} of the Supplementary Appendix for expressions of the risk estimators.

Thus as a truly out-of-sample validation of the estimators, albeit at the smaller sample sizes $J_{\text{tr},i}=8$ for all $i$, we compare the (estimated) risks of various estimators for differing targets.
Table~\ref{tab:app.1} plots the estimated risks for estimation of $\{ \mu_i \}_{i\in[n]}$ and $\{ q_{0.1,i} \}_{i\in[n]}$, where the \textsc{het} estimator outperforms all other estimators. 
In particular, the largest improvements are over \textsc{het-s} and \textsc{naive} that plug in $s_i$ for $\sigma_i$, confirming that treating the heteroskedasticity correctly is important, especially for the estimation of teacher quantiles. 
% The improvements of \textsc{het} over all other estimators are also largest for the estimation of teacher quantiles.   
\begin{table}[htbp]
	\centering
	\caption{Risk Estimates (MSE)}
	\small
	\begin{tabular}{ lcc }
		\toprule
		Estimator & $\{\mu_i\}_{i\in[n]}$ & $\{q_{0.1,i}\}_{i\in[n]}$ \\
		\midrule
		$\textsc{het}$    & 0.0120 & .0163 \\
		$\textsc{hom}$    & 0.0125 & .0173 \\
		$\textsc{het-s}$  & 0.0129 & .0422 \\
		$\textsc{naive}$  & 0.0317 & .0578 \\
		\bottomrule
	\end{tabular}
	\label{tab:app.mse}\\
	{\footnotesize {\em Notes: } Values represent $\hat{R}(\hat{\mu}_{\text{tr}},\mu)$ and $\hat{R}(\hat{q}_{0.1,\text{tr}},q_{0.1})$ with $\hat{\mu}_{\text{tr}}$ and $\hat{q}_{0.1,\text{tr}}$ being the estimators.
	}\setlength{\baselineskip}{4mm}
	\label{tab:app.1}
\end{table}

\section{Conclusion}
\label{sec:concl}

This paper contributes to the growing empirical Bayes (EB) literature by proposing nonparametric EB estimators with regret bounds of the order $\tfrac{1}{n}(\log n)^\kappa$. The upshot is that nonparametric EB methods come at little cost in typical sample sizes and in fact with much benefits in terms of risk reduction.
% The results depend on a normal sampling model and we show that, under certain conditions, the misspecification errors are of the order $O(1/J)$ and thus mild for moderate $J$. 
Importantly, the results are all developed within a framework that features unknown heteroskedasticity. 
Allowing for this feature is essential for two reasons. First, it delivers the optimal unit-specific Bayes correction when estimating the means. Second, it enables comparisons along richer dimensions of heterogeneity, such as the performance of lower-percentile students under each teacher.
The empirical illustration using a matched student-teacher dataset suggest that the detection of low-performing teachers -- a typical exercise with high-stakes consequences -- is sensitive to the estimator choice regardless of how teacher quality is defined. 
In particular, using nonparametric EB estimators that allow for dependence among the teacher-specific parameters is crucial for delivering optimal estimates of teacher quality and detecting low-performing teachers.
    \bibliography{Empirical_Bayes}

@misc{Banerjee2023,
  title = {Nonparametric {{Empirical Bayes Estimation}} on {{Heterogeneous Data}}},
  author = {Banerjee, Trambak and Fu, Luella J. and James, Gareth M. and Mukherjee, Gourab and Sun, Wenguang},
  year = 2023,
  month = aug,
  number = {arXiv:2002.12586},
  eprint = {2002.12586},
  primaryclass = {stat},
  publisher = {arXiv},
  doi = {10.48550/arXiv.2002.12586},
  urldate = {2024-12-30},
  abstract = {The simultaneous estimation of many parameters based on data collected from corresponding studies is a key research problem that has received renewed attention in the high-dimensional setting. Many practical situations involve heterogeneous data where heterogeneity is captured by a nuisance parameter. Effectively pooling information across samples while correctly accounting for heterogeneity presents a significant challenge in large-scale estimation problems. We address this issue by introducing the ``Nonparametric Empirical Bayes Structural Tweedie" (NEST) estimator, which efficiently estimates the unknown effect sizes and properly adjusts for heterogeneity via a generalized version of Tweedie's formula. For the normal means problem, NEST simultaneously handles the two main selection biases introduced by heterogeneity: one, the selection bias in the mean, which cannot be effectively corrected without also correcting for, two, selection bias in the variance. We develop theory to show that NEST is asymptotically as good as the optimal Bayes rule that uniquely minimizes a weighted squared error loss. In our simulation studies NEST outperforms competing methods, with much efficiency gains in many settings. The proposed method is demonstrated on estimating the batting averages of baseball players and Sharpe ratios of mutual fund returns. Extensions to other members of the two-parameter exponential family are discussed.},
  archiveprefix = {arXiv},
  keywords = {Statistics - Methodology},
  file = {/Users/scho/Zotero/storage/2HTUNLXM/Banerjee et al. - 2023 - Nonparametric Empirical Bayes Estimation on Heterogeneous Data.pdf;/Users/scho/Zotero/storage/669EZQTC/2002.html}
}

@article{Brown2009,
  title = {Nonparametric {{Empirical Bayes}} and {{Compound Decision Approaches}} to {{Estimation}} of a {{High-Dimensional Vector}} of {{Normal Means}}},
  author = {Brown, Lawrence D. and Greenshtein, Eitan},
  year = 2009,
  month = aug,
  journal = {The Annals of Statistics},
  volume = {37},
  number = {4},
  issn = {0090-5364},
  doi = {10.1214/08-AOS630}
}

@article{Canale2017,
  title = {Posterior Asymptotics of Nonparametric Location-Scale Mixtures for Multivariate Density Estimation},
  author = {Canale, Antonio and De Blasi, Pierpaolo},
  year = 2017,
  month = feb,
  journal = {Bernoulli},
  volume = {23},
  number = {1},
  issn = {1350-7265},
  doi = {10.3150/15-BEJ746},
  urldate = {2025-11-12},
  file = {/Users/scho/Zotero/storage/BRKT53DY/Canale and De Blasi - 2017 - Posterior asymptotics of nonparametric location-scale mixtures for multivariate density estimation.pdf}
}

@article{Chen2026,
  title = {Empirical {{Bayes When Estimation Precision Predicts Parameters}}},
  author = {Chen, Jiafeng},
  year = 2026,
  journal = {Econometrica},
  volume = {94},
  number = {2},
  pages = {305--340},
  issn = {0012-9682},
  doi = {10.3982/ECTA22935},
  urldate = {2026-05-22},
  abstract = {Gaussian empirical Bayes methods usually maintain a               precision independence               assumption: The unknown parameters of interest are independent from the known standard errors of the estimates. This assumption is often theoretically questionable and empirically rejected. This paper proposes to model the conditional distribution of the parameter given the standard errors as a flexibly parameterized location-scale family of distributions, leading to a family of methods that we call               close               . The               close               framework unifies and generalizes several proposals under precision dependence. We argue that the most flexible member of the               close               family is a minimalist and computationally efficient default for accounting for precision dependence. We analyze this method and show that it is competitive in terms of the regret of subsequent decision rules. Empirically, using               close               leads to sizable gains for selecting high-mobility Census tracts.},
  langid = {english}
}

@article{Chetty2014,
  title = {Measuring the {{Impacts}} of {{Teachers I}}: {{Evaluating Bias}} in {{Teacher Value-Added Estimates}}},
  author = {Chetty, Raj and Friedman, John N. and Rockoff, Jonah E.},
  year = 2014,
  month = sep,
  journal = {American Economic Review},
  volume = {104},
  number = {9},
  pages = {2593--2632},
  issn = {0002-8282},
  doi = {10.1257/aer.104.9.2593},
  abstract = {Are teachers' impacts on students' test scores (value-added) a good measure of their quality? One reason this question has sparked debate is disagreement about whether value-added (VA) measures provide unbiased estimates of teachers' causal impacts on student achievement. We test for bias in VA using previously unobserved parent characteristics and a quasi-experimental design based on changes in teaching staff. Using school district and tax records for more than one million children, we find that VA models which control for a student's prior test scores provide unbiased forecasts of teachers' impacts on student achievement. (JEL H75, I21, J24, J45)}
}

@article{Chetty2018,
  title = {The {{Impacts}} of {{Neighborhoods}} on {{Intergenerational Mobility II}}: {{County-Level Estimates}}*},
  author = {Chetty, Raj and Hendren, Nathaniel},
  year = 2018,
  month = aug,
  journal = {The Quarterly Journal of Economics},
  volume = {133},
  number = {3},
  pages = {1163--1228},
  issn = {0033-5533},
  doi = {10.1093/qje/qjy006}
}

@article{Cressie1986,
  title = {The Moment Generating Function Has Its Moments},
  author = {Cressie, Noel and Borkent, Marinus},
  year = 1986,
  month = jan,
  journal = {Journal of Statistical Planning and Inference},
  volume = {13},
  number = {2},
  pages = {337--344},
  issn = {03783758},
  doi = {10.1016/0378-3758(86)90143-6}
}

@article{Efron1973,
  title = {Stein's {{Estimation Rule}} and {{Its Competitors}}--{{An Empirical Bayes Approach}}},
  author = {Efron, Bradley and Morris, Carl},
  year = 1973,
  month = mar,
  journal = {Journal of the American Statistical Association},
  volume = {68},
  number = {341},
  eprint = {2284155},
  eprinttype = {jstor},
  pages = {117},
  issn = {01621459},
  doi = {10.2307/2284155}
}

@article{Efron2011,
  title = {Tweedie's {{Formula}} and {{Selection Bias}}},
  author = {Efron, Bradley},
  year = 2011,
  month = dec,
  journal = {Journal of the American Statistical Association},
  volume = {106},
  number = {496},
  pages = {1602--1614},
  issn = {0162-1459},
  doi = {10.1198/jasa.2011.tm11181}
}

@article{Fan1991,
  title = {On the {{Optimal Rates}} of {{Convergence}} for {{Nonparametric Deconvolution Problems}}},
  author = {Fan, Jianqing},
  year = 1991,
  month = sep,
  journal = {The Annals of Statistics},
  volume = {19},
  number = {3},
  issn = {0090-5364},
  doi = {10.1214/aos/1176348248}
}

@article{Finkelstein2016,
  title = {Sources of {{Geographic Variation}} in {{Health Care}}: {{Evidence From Patient Migration}}*},
  author = {Finkelstein, Amy and Gentzkow, Matthew and Williams, Heidi},
  year = 2016,
  month = nov,
  journal = {The Quarterly Journal of Economics},
  volume = {131},
  number = {4},
  pages = {1681--1726},
  issn = {0033-5533},
  doi = {10.1093/qje/qjw023},
  abstract = {We study the drivers of geographic variation in U.S. health care utilization, using an empirical strategy that exploits migration of Medicare patients to separate the role of demand and supply factors. Our approach allows us to account for demand differences driven by both observable and unobservable patient characteristics. Within our sample of over-65 Medicare beneficiaries, we find that 40--50\% of geographic variation in utilization is attributable to demand-side factors, including health and preferences, with the remainder due to place-specific supply factors.}
}

@article{Ghosal2001,
  title = {Entropies and Rates of Convergence for Maximum Likelihood and {{Bayes}} Estimation for Mixtures of Normal Densities},
  author = {Ghosal, Subhashis and van der Vaart, Aad W.},
  year = 2001,
  month = oct,
  journal = {The Annals of Statistics},
  volume = {29},
  number = {5},
  pages = {1233--1263},
  publisher = {Institute of Mathematical Statistics},
  issn = {0090-5364, 2168-8966},
  doi = {10.1214/aos/1013203452},
  urldate = {2024-10-07},
  abstract = {We study the rates of convergence of the maximum likelihood estimator (MLE) and posterior distribution in density estimation problems, where the densities are location or location-scale mixtures of normal distributions with the scale parameter lying between two positive numbers. The true density is also assumed to lie in this class with the true mixing distribution either compactly supported or having sub-Gaussian tails. We obtain bounds for Hellinger bracketing entropies for this class, and from these bounds, we deduce the convergence rates of (sieve) MLEs in Hellinger distance. The rate turns out to be \$(log n)\textasciicircum\textbackslash kappa /\textbackslash sqrt\textbraceleft n\textbraceright\$, where \$\textbackslash kappa \textbackslash ge 1\$ is a constant that depends on the type of mixtures and the choice of the sieve. Next, we consider a Dirichlet mixture of normals as a prior on the unknown density. We estimate the prior probability of a certain Kullback-Leibler type neighborhood and then invoke a general theorem that computes the posterior convergence rate in terms the growth rate of the Hellinger entropy and the concentration rate of the prior. The posterior distribution is also seen to converge at the rate \$(log n)\textasciicircum\textbackslash kappa /\textbackslash sqrt\textbraceleft n\textbraceright\$, where \$\textbackslash kappa\$ now depends on the tail behavior of the base measure of the Dirichlet process.},
  keywords = {62G07,62G20,bracketing,Dirichlet mixture,Entropy,maximum likelihood,mixture of normals,posterior distribution,rate of convergence,sieve},
  file = {/Users/scho/Zotero/storage/43NESCVQ/Ghosal and Vaart - 2001 - Entropies and rates of convergence for maximum likelihood and Bayes estimation for mixtures of norma.pdf}
}

@techreport{Gilraine2020,
  title = {A {{New Method}} for {{Estimating Teacher Value-Added}}},
  author = {Gilraine, Michael and Gu, Jiaying and McMillan, Robert},
  year = 2020,
  month = may,
  address = {Cambridge, MA},
  institution = {National Bureau of Economic Research},
  doi = {10.3386/w27094}
}

@article{Gu2017,
  title = {Empirical {{Bayesball Remixed}}: {{Empirical Bayes Methods}} for {{Longitudinal Data}}},
  author = {Gu, Jiaying and Koenker, Roger},
  year = 2017,
  month = apr,
  journal = {Journal of Applied Econometrics},
  volume = {32},
  number = {3},
  pages = {575--599},
  issn = {08837252},
  doi = {10.1002/jae.2530}
}

@article{Gu2017a,
  title = {Unobserved {{Heterogeneity}} in {{Income Dynamics}}: {{An Empirical Bayes Perspective}}},
  author = {Gu, Jiaying and Koenker, Roger},
  year = 2017,
  month = jan,
  journal = {Journal of Business and Economic Statistics},
  volume = {35},
  number = {1},
  pages = {1--16},
  issn = {15372707},
  doi = {10.1080/07350015.2015.1052457},
  abstract = {Empirical Bayes methods for Gaussian compound decision problems involving longitudinal data are considered. The new convex optimization formulation of the nonparametric (Kiefer--Wolfowitz) maximum likelihood estimator for mixture models is employed to construct nonparametric Bayes rules for compound decisions. The methods are first illustrated with some simulation examples and then with an application to models of income dynamics. Using panel data, we estimate a simple dynamic model of earnings that incorporates bivariate heterogeneity in intercept and variance of the innovation process. Profile likelihood is employed to estimate an AR(1) parameter controlling the persistence of the innovations. We find that persistence is relatively modest, (Formula presented.), when we permit heterogeneity in variances. Evidence of negative dependence between individual intercepts and variances is revealed by the nonparametric estimation of the mixing distribution, and has important consequences for forecasting future income trajectories.},
  keywords = {Mixture experiments,Nonparametric methods,Regression}
}

@article{Gu2023,
  title = {Invidious {{Comparisons}}: {{Ranking}} and {{Selection}} as {{Compound Decisions}}},
  author = {Gu, Jiaying and Koenker, Roger},
  year = 2023,
  journal = {Econometrica},
  volume = {91},
  number = {1},
  pages = {1--41},
  issn = {0012-9682},
  doi = {10.3982/ECTA19304},
  abstract = {There is an innate human tendency, one might call it the ``league table mentality,'' to construct rankings. Schools, hospitals, sports teams, movies, and myriad other objects are ranked even though their inherent multi-dimensionality would suggest that---at best---only partial orderings were possible. We consider a large class of elementary ranking problems in which we observe noisy, scalar measurements of merit for n objects of potentially heterogeneous precision and are asked to select a group of the objects that are ``most meritorious.'' The problem is naturally formulated in the compound decision framework of Robbins's (1956) empirical Bayes theory, but it also exhibits close connections to the recent literature on multiple testing. The nonparametric maximum likelihood estimator for mixture models (Kiefer and Wolfowitz (1956)) is employed to construct optimal ranking and selection rules. Performance of the rules is evaluated in simulations and an application to ranking U.S. kidney dialysis centers.}
}

@article{Ignatiadis2023,
  title = {Empirical {{Bayes Mean Estimation With Nonparametric Errors Via Order Statistic Regression}} on {{Replicated Data}}},
  author = {Ignatiadis, Nikolaos and Saha, Sujayam and Sun, Dennis L. and Muralidharan, Omkar},
  year = 2023,
  month = apr,
  journal = {Journal of the American Statistical Association},
  volume = {118},
  number = {542},
  pages = {987--999},
  issn = {0162-1459, 1537-274X},
  doi = {10.1080/01621459.2021.1967164},
  urldate = {2025-11-12},
  langid = {english},
  file = {/Users/scho/Zotero/storage/I6DXKKB9/Ignatiadis et al. - 2023 - Empirical Bayes Mean Estimation With Nonparametric Errors Via Order Statistic Regression on Replicat.pdf}
}

@article{Ignatiadis2025,
  title = {Empirical Partially {{Bayes}} Multiple Testing and Compound {$\chi$}2 Decisions},
  author = {Ignatiadis, Nikolaos and Sen, Bodhisattva},
  year = 2025,
  month = feb,
  journal = {The Annals of Statistics},
  volume = {53},
  number = {1},
  issn = {0090-5364},
  doi = {10.1214/24-AOS2431},
  urldate = {2025-11-12}
}

@article{Jiang2009,
  title = {General {{Maximum Likelihood Empirical Bayes Estimation}} of {{Normal Means}}},
  author = {Jiang, Wenhua and Zhang, Cun-Hui},
  year = 2009,
  month = aug,
  journal = {The Annals of Statistics},
  volume = {37},
  number = {4},
  issn = {0090-5364},
  doi = {10.1214/08-AOS638}
}

@article{Jiang2020,
  title = {On {{General Maximum Likelihood Empirical Bayes Estimation}} of {{Heteroscedastic IID Normal Means}}},
  author = {Jiang, Wenhua},
  year = 2020,
  month = jan,
  journal = {Electronic Journal of Statistics},
  volume = {14},
  number = {1},
  issn = {1935-7524},
  doi = {10.1214/20-EJS1717}
}

@article{Kline2024,
  title = {A {{Discrimination Report Card}}},
  author = {Kline, Patrick and Rose, Evan K. and Walters, Christopher R.},
  year = 2024,
  month = aug,
  journal = {American Economic Review},
  volume = {114},
  number = {8},
  pages = {2472--2525},
  issn = {0002-8282},
  doi = {10.1257/aer.20230700},
  urldate = {2024-12-30},
  abstract = {We develop an empirical Bayes ranking procedure that assigns ordinal grades to noisy measurements, balancing the information content of the assigned grades against the expected frequency of ranking errors. Applying the method to a massive correspondence experiment, we grade the race and gender contact gaps of 97 US employers, the identities of which we disclose for the first time. The grades are presented alongside measures of uncertainty about each firm's contact gap in an accessible report card that is easily adaptable to other settings where ranks and levels are of simultaneous interest.},
  langid = {english},
  keywords = {Bayesian Analysis: General Firm Behavior: Empirical Analysis Economics of Minorities Races Indigenous Peoples and Immigrants,Non-labor Discrimination Economics of Gender,Non-labor Discrimination Labor Discrimination},
  file = {/Users/scho/Zotero/storage/BQ77J849/Kline et al. - 2024 - A Discrimination Report Card.pdf}
}

@article{Koenker2014,
  title = {Convex {{Optimization}}, {{Shape Constraints}}, {{Compound Decisions}}, and {{Empirical Bayes Rules}}},
  author = {Koenker, Roger and Mizera, Ivan},
  year = 2014,
  month = apr,
  journal = {Journal of the American Statistical Association},
  volume = {109},
  number = {506},
  pages = {674--685},
  issn = {0162-1459},
  doi = {10.1080/01621459.2013.869224}
}

@article{Kwon2023,
  title = {On {{F}} -Modelling-Based Empirical {{Bayes}} Estimation of Variances},
  author = {Kwon, Yeil and Zhao, Zhigen},
  year = 2023,
  month = feb,
  journal = {Biometrika},
  volume = {110},
  number = {1},
  pages = {69--81},
  issn = {0006-3444},
  doi = {10.1093/biomet/asac019},
  abstract = {We consider the problem of empirical Bayes estimation of multiple variances when provided with sample variances. Assuming an arbitrary prior on the variances, we derive different versions of the Bayes estimators using different loss functions. For one particular loss function, the resulting Bayes estimator relies on the marginal cumulative distribution function of the sample variances only. When replacing it with the empirical distribution function, we obtain an empirical Bayes version called the F-modelling-based empirical Bayes estimator of variances. We provide theoretical properties of this estimator, and further demonstrate its advantages through extensive simulations and real data analysis.}
}

@article{Kwon2026,
  title = {Optimal {{Shrinkage Estimation}} of {{Fixed Effects}} in {{Linear Panel Data Models}}},
  author = {Kwon, Soonwoo},
  year = 2026,
  journal = {Econometrica},
  volume = {94},
  number = {2},
  pages = {663--677},
  issn = {0012-9682},
  doi = {10.3982/ECTA22386},
  urldate = {2026-05-22},
  abstract = {Shrinkage methods are frequently used to improve the precision of least squares estimators of fixed effects. However, widely used shrinkage estimators guarantee improved precision only under strong distributional assumptions. I develop an estimator for the fixed effects that obtains the best possible mean squared error within a class of shrinkage estimators. This class includes conventional shrinkage estimators and the optimality does not require distributional assumptions. The estimator has an intuitive form and is easy to implement. Moreover, the fixed effects are allowed to vary with time and to be serially correlated, in which case the shrinkage optimally incorporates the underlying correlation structure. I also provide a method to forecast fixed effects one period ahead in this setting.},
  langid = {english}
}

@article{Liu2020,
  title = {Forecasting {{With Dynamic Panel Data Models}}},
  author = {Liu, Laura and Moon, Hyungsik Roger and Schorfheide, Frank},
  year = 2020,
  journal = {Econometrica},
  volume = {88},
  number = {1},
  pages = {171--201},
  issn = {0012-9682},
  doi = {10.3982/ECTA14952},
  abstract = {This paper considers the problem of forecasting a collection of short time series using cross-sectional information in panel data. We construct point predictors using Tweedie's formula for the posterior mean of heterogeneous coefficients under a correlated random effects distribution. This formula utilizes cross-sectional information to transform the unit-specific (quasi) maximum likelihood estimator into an approximation of the posterior mean under a prior distribution that equals the population distribution of the random coefficients. We show that the risk of a predictor based on a nonparametric kernel estimate of the Tweedie correction is asymptotically equivalent to the risk of a predictor that treats the correlated random effects distribution as known (ratio optimality). Our empirical Bayes predictor performs well compared to various competitors in a Monte Carlo study. In an empirical application, we use the predictor to forecast revenues for a large panel of bank holding companies and compare forecasts that condition on actual and severely adverse macroeconomic conditions.}
}

@incollection{Robbins1956,
  title = {An {{Empirical Bayes Approach}} to {{Statistics}}},
  booktitle = {Proceedings of the {{Third Berkeley Symposium}} on {{Mathematical Statistics}} and {{Probability}}},
  author = {Robbins, Herbert},
  year = 1956,
  publisher = {{University of California Press, Berkeley and Los Angeles}}
}

@incollection{Robbins1982,
  title = {Estimating {{Many Variances}}},
  booktitle = {Statistical {{Decision Theory}} and {{Related Topics III}}},
  author = {Robbins, Herbert},
  year = 1982,
  pages = {251--261},
  publisher = {Elsevier},
  doi = {10.1016/B978-0-12-307502-4.50019-2}
}

@article{Wong1995,
  title = {Probability {{Inequalities}} for {{Likelihood Ratios}} and {{Convergence Rates}} of {{Sieve MLES}}},
  author = {Wong, Wing Hung and Shen, Xiaotong},
  year = 1995,
  month = apr,
  journal = {The Annals of Statistics},
  volume = {23},
  number = {2},
  pages = {339--362},
  publisher = {Institute of Mathematical Statistics},
  issn = {0090-5364, 2168-8966},
  doi = {10.1214/aos/1176324524},
  urldate = {2024-10-07},
  abstract = {Let \$Y\_1,\textbackslash ldots, Y\_n\$ be independent identically distributed with density \$p\_0\$ and let \$\textbackslash mathscr\textbraceleft F\textbraceright\$ be a space of densities. We show that the supremum of the likelihood ratios \$\textbackslash prod\textasciicircum n\_\textbraceleft i=1\textbraceright{} p(Y\_i)/p\_0(Y\_i)\$, where the supremum is over \$p \textbackslash in \textbackslash mathscr\textbraceleft F\textbraceright\$ with \$\textbackslash\textbar p\textasciicircum\textbraceleft 1/2\textbraceright{} - p\textasciicircum\textbraceleft 1/2\textbraceright\_0\textbackslash\textbar\_2 \textbackslash geq \textbackslash varepsilon\$, is exponentially small with probability exponentially close to 1. The exponent is proportional to \$n\textbackslash varepsilon\textasciicircum 2\$. The only condition required for this to hold is that \$\textbackslash varepsilon\$ exceeds a value determined by the bracketing Hellinger entropy of \$\textbackslash mathscr\textbraceleft F\textbraceright\$. A similar inequality also holds if we replace \$\textbackslash mathscr\textbraceleft F\textbraceright\$ by \$\textbackslash mathscr\textbraceleft F\textbraceright\_n\$ and \$p\_0\$ by \$q\_n\$, where \$q\_n\$ is an approximation to \$p\_0\$ in a suitable sense. These results are applied to establish rates of convergence of sieve MLEs. Furthermore, weak conditions are given under which the "optimal" rate \$\textbackslash varepsilon\_n\$ defined by \$H(\textbackslash varepsilon\_n, \textbackslash mathscr\textbraceleft F\textbraceright ) = n\textbackslash varepsilon\textasciicircum 2\_n\$, where \$H(\textbackslash cdot, \textbackslash mathscr\textbraceleft F\textbraceright )\$ is the Hellinger entropy of \$\textbackslash mathscr\textbraceleft F\textbraceright\$, is nearly achievable by sieve estimators.},
  keywords = {62A10,62F12,62G20,bracketing metric entropy,Exponential inequality,Hellinger distance,Kullback-Leibler number}
}

@article{Xie2012,
  title = {{{SURE Estimates}} for a {{Heteroscedastic Hierarchical Model}}},
  author = {Xie, Xianchao and Kou, S. C. and Brown, Lawrence D.},
  year = 2012,
  journal = {Journal of the American Statistical Association},
  volume = {107},
  number = {500},
  pages = {1465--1479},
  issn = {01621459},
  doi = {10.1080/01621459.2012.728154},
  abstract = {Hierarchical models are extensively studied and widely used in statistics and many other scientific areas. They provide an effective tool for combining information from similar resources and achieving partial pooling of inference. Since the seminal work by James and Stein (1961) and Stein (1962), shrinkage estimation has become one major focus for hierarchical models. For the homoscedastic normal model, it is well known that shrinkage estimators, especially the James-Stein estimator, have good risk properties. The heteroscedastic model, though more appropriate for practical applications, is less well studied, and it is unclear what types of shrinkage estimators are superior in terms of the risk. We propose in this article a class of shrinkage estimators based on Stein's unbiased estimate of risk (SURE). We study asymptotic properties of various common estimators as the number of means to be estimated grows (p {$\rightarrow$} {$\infty$}). We establish the asymptotic optimality property for the SURE estimators. We then extend our construction to create a class of semiparametric shrinkage estimators and establish corresponding asymptotic optimality results. We emphasize that though the form of our SURE estimators is partially obtained through a normal model at the sampling level, their optimality properties do not heavily depend on such distributional assumptions. We apply the methods to two real datasets and obtain encouraging results. \copyright{} 2012 american statistical association.},
  keywords = {Asymptotic optimality,Heteroscedasticity,Shrinkage estimator,Stein's unbiased risk estimate (SURE)}
}
    
    % \clearpage
    \renewcommand{\thepage}{A.\arabic{page}}
    \setcounter{page}{1}
    \begin{appendix}
        %\markright{Appendix -- This Version: \today }
        \renewcommand{\theequation}{A.\arabic{equation}}
        \setcounter{equation}{0}
        \renewcommand*\thetable{A-\arabic{table}}
        \setcounter{table}{0}
        \renewcommand*\thefigure{A-\arabic{figure}}
        \setcounter{figure}{0}

\section{Proofs of Selected Results from Sections~\ref{sec:model}--\ref{sec:missp}}
\label{sec:proofs}

\subsection{Proof of Theorem \ref{thm:main}}
\label{subsec:app.tweed}
\begin{proof}[Proof of Theorem \ref{thm:main}]
	The $i$ indices are omitted for compactness.
	We prove Theorem~\ref{thm:main}(i) in two parts: the first takes $\sigma$ as known (paralleling the homoskedastic Tweedie derivation of \cite{Liu2020}), the second integrates out $\sigma$ by the law of iterated expectations. Throughout we use the conditional independence $y \indep s \mid (\mu,\sigma)$ from Assumption~\ref{ass:disagg.model}.
	
	Let us first take $\sigma$ as known. Further let $p(\cdot)$ denote a probability density of a random variable, with the identity of the random variable inferred from the argument.
	Then since $\int p(\mu\mid y,s^2,\sigma^2)d\mu=1$, we have $\partial_y \int p(\mu\mid y,s^2,\sigma^2)d\mu = 0$. Expanding the integrand via Bayes' rule, applying $y \indep s^2 \mid \mu,\sigma$ to factor $p(y,s^2\mid\mu,\sigma^2)=p(y\mid\mu,\sigma^2)p(s^2\mid\mu,\sigma^2)$, differentiating, and combining terms yields
	\begin{equation}
		\int
		\left\{
		-\frac{y-\mu}{\sigma J^{-1}}
		p(\mu \mid y,s^2,\sigma^2)
		-
		p(\mu\mid y,s^2,\sigma^2)
		\frac{\partial_y p(y,s^2,\sigma^2)}{p(y,s^2,\sigma^2)}
		\right\}
		d\mu
		=
		0,
	\end{equation}
	which rearranges to
	\begin{equation}
		\mathbb{E}_{G_0}[\mu \mid y,s,\sigma]
		=
		y + \frac{\sigma^2}{J}  \cdot \frac{1}{p(y,s^2 \mid \sigma^2)} \cdot \partial_y p(y,s^2 \mid \sigma^2).
		\label{eq:thm.main.proof.1}
	\end{equation}
	
	We then apply the law of iterated expectations to integrate out $\sigma^2$:
	\begin{equation}
		\begin{aligned}[b]
			\mathbb{E}_{G_0} [\mu \mid y,s]
			=&
			\mathbb{E}_{G_0} \left\{\mathbb{E}_{G_0} [\mu \mid y,s,\sigma] \mid y,s\right\}
			\\
			=&_{(1)}
			y
			+
			\frac{1}{J}
			\cdot
			\frac{1}{p(y,s^2)}
			\int
			\sigma^2 \cdot \partial_y p(y,s^2 , \sigma^2) d\sigma^2 \\
			=&_{(2)}
			y
			+
			\frac{1}{J}
			\cdot
			\frac{1}{p(y,s^2)}
			\partial_y
			\left\{p(y,s^2) \mathbb{E}_{G_0} [\sigma^2 \mid y,s]\right\}
			\label{eq:alt.mu}
			\\
			=&_{(3)}
			y
			+
			\frac{\mathbb{E}_{G_0} [\sigma^2\mid y,s]}{J} \cdot \partial_y \log p(y \mid s)
			+
			\frac{1}{J}\partial_y \mathbb{E}_{G_0} [\sigma^2\mid y,s]
		\end{aligned}
	\end{equation}
	where $=_{(1)}$ substitutes \eqref{eq:thm.main.proof.1}, rewrites $p(\sigma^2 \mid y,s^2) = p(y,s^2,\sigma^2)/p(y,s^2)$ and cancels common factors; $=_{(2)}$ interchanges the order of integration and differentiation; and $=_{(3)}$ applies the product rule.
	Note that the second-to-last line of \eqref{eq:alt.mu}, together with the expression of $\mathbb{E}_{G_0} [\sigma^2 \mid y,s]$ from Theorem~\ref{thm:main} (proved below), implies an alternative representation of $\mathbb{E}_{G_0} [\mu \mid y,s]$:
	\begin{equation}
		\mathbb{E}_{G_0} [\mu \mid y,s]
		=
		y+ 
		\frac{k}{J} \cdot \frac{1}{p(y,s^2)}
		\int_{s^2}^\infty \left[\frac{s^2}{t}\right]^{k-1} \partial_y p(y,t) dt.
		\label{eq:tweed.alt}
	\end{equation}
	We will next prove Theorem~\ref{thm:main}(ii) and (iii), which again proceeds in two steps. The first step which we provide in Lemma~\ref{lm:mgftau}, involves deriving the posterior moment generating function (MGF) of the precision parameter $\tau:=\sigma^{-2}$ and the proof follows \cite{Banerjee2023}. The second involves an application of Proposition 6 of \cite{Cressie1986} to the posterior MGF of $\tau$, which allows us to recover the moments of $\sigma$ and $\sigma^2$ from its MGF: e.g. 
	$\mathbb{E}_{G_0}[\sigma^2 \mid y,s] = \int_0^\infty M_{\tau | y,s}(-t) dt$.
	The application requires us to verify the following integrability statements: 
	$ \int_0^\infty M_{\tau \mid y,s} (-t) dt < \infty $ and 
	$ \int_0^\infty \left[\frac{1}{t}\right]^{1/2} M_{\tau \mid y,s} (-t) dt \;< \infty$.
	The second condition is more tedious and so we verify it here:
	\begin{equation}
		\begin{aligned}[b]
			&\int_0^\infty \left[\tfrac{1}{t}\right]^{1/2} M_{\tau \mid y,s} (-t) dt
			\\
			\lesssim&_{(1)}
			\int_0^\infty
			\tfrac{1}{t^{1/2}}
			\left[\tfrac{1}{s^2+tk^{-1}}\right]^{k-1}
			\int \phi(y\mid \mu,\sigma) \gamma(s^2+tk^{-1}\mid \sigma)
			dG(\mu,\sigma)
			dt
			\\
			\lesssim&_{(2)}
			\int_0^\infty
			\tfrac{1}{t^{1/2}}
			\left[\tfrac{1}{s^2+tk^{-1}}\right]^{k-1}
			dt < \infty
		\end{aligned}
	\end{equation}
	where $\lesssim_{(1)}$ follows by Lemma~\ref{lm:mgftau}; and $\lesssim_{(2)}$ follows by substituting the sampling densities and noting that $z(\sigma) := \sigma^{-(2k+1)}\exp(-ks^2/\sigma^2)$ is continuous on $(0,\infty)$, vanishes as $\sigma\downarrow 0$ and $\sigma\uparrow\infty$, and is therefore uniformly bounded.
\end{proof}

\begin{lemma}[Posterior MGF of $\tau=\sigma^{-2}$]
	\label{lm:mgftau}
	Suppose Assumption~\ref{ass:disagg.model} and $J > 3$. Then 
	\begin{equation}
		M_{\tau \mid y,s}(t) 
		= 
		\left[\frac{s^2}{s^2 - tk^{-1}}\right]^{k-1}\frac{p(y,s^2 - tk^{-1})}{p(y,s^2)}
		.
	\end{equation}
\end{lemma}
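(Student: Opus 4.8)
The plan is to prove the identity by an exponential-tilting argument applied to the scaled chi-square likelihood of $s^2$, following the strategy attributed to \cite{Banerjee2023} in the proof remark. First I would record the two sampling densities implied by \eqref{eq:disagg.model}: the Gaussian $\phi(y\mid\mu,\sigma)$ coming from $y\mid\mu,\sigma\sim\mathcal{N}[\mu,\sigma^2/J]$, and the density of $s^2$. Since $(J-1)s^2/\sigma^2\mid\sigma\sim\chi^2_{J-1}$ and $J-1=2k$, the latter is of gamma form, $\gamma(s^2\mid\sigma)=\frac{k^k}{\Gamma(k)}\,\frac{(s^2)^{k-1}}{\sigma^{2k}}\exp(-ks^2/\sigma^2)$. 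The Gaussianity in \eqref{eq:disagg.model} gives the conditional independence $y\indep s^2\mid\mu,\sigma$, so the joint sampling density factors as $p(y,s^2\mid\mu,\sigma)=\phi(y\mid\mu,\sigma)\,\gamma(s^2\mid\sigma)$ and the mixture density is $p(y,s^2)=\int\phi(y\mid\mu,\sigma)\,\gamma(s^2\mid\sigma)\,dG(\mu,\sigma)$.

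Next I would write the posterior MGF of $\tau=\sigma^{-2}$ via Bayes' rule as
\[
M_{\tau\mid y,s}(t)=\frac{1}{p(y,s^2)}\int e^{t/\sigma^2}\,\phi(y\mid\mu,\sigma)\,\gamma(s^2\mid\sigma)\,dG(\mu,\sigma).
\]
The crux is the observation that multiplying the gamma density by $e^{t/\sigma^2}$ only shifts its rate: $e^{t/\sigma^2}\gamma(s^2\mid\sigma)\propto\sigma^{-2k}\exp\!\big(-(ks^2-t)/\sigma^2\big)$. Writing $ks^2-t=k(s^2-tk^{-1})$ and comparing with $\gamma(s^2-tk^{-1}\mid\sigma)$ yields the deterministic identity
\[
e^{t/\sigma^2}\gamma(s^2\mid\sigma)=\Big[\tfrac{s^2}{\,s^2-tk^{-1}\,}\Big]^{k-1}\gamma\big(s^2-tk^{-1}\mid\sigma\big).
\]
Crucially, the prefactor $[s^2/(s^2-tk^{-1})]^{k-1}$ depends only on $(s^2,t)$ and not on $(\mu,\sigma)$, and $\phi(y\mid\mu,\sigma)$ carries no dependence on $s^2$; hence this factor pulls outside the integral against $dG$.

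Pulling the prefactor out leaves $\int\phi(y\mid\mu,\sigma)\,\gamma(s^2-tk^{-1}\mid\sigma)\,dG(\mu,\sigma)=p(y,s^2-tk^{-1})$, the mixture density evaluated at the shifted point, which delivers the claimed formula. The computation is short once the tilting identity is in hand; the only points requiring care are that the deterministic prefactor is genuinely free of $(\mu,\sigma)$ so that it survives integration against the unknown $G$, and that $t$ is restricted to $t<ks^2$ so the shifted argument $s^2-tk^{-1}$ remains positive and the MGF is finite in a neighbourhood of the origin. The hypothesis $J>3$, equivalently $k>1$, ensures $\gamma(\cdot\mid\sigma)$ is a genuine density with positive power $(s^2)^{k-1}$, consistent with the standing assumption of Theorem~\ref{thm:main}.
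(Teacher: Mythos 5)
Your proof is correct, and it reaches the formula by a more direct route than the paper. The paper (following Proposition 2 of \cite{Banerjee2023}) writes the posterior $p(\mu,\tau\mid y,s^2)$ via Bayes' rule, recognizes it as an exponential family with sufficient statistic $(\tau\mu,\tau)$, natural parameter $\bs{\eta}=[Jy,\,-\tfrac{1}{2}(Jy^2+2ks^2)]^\prime$ and log-partition function $A(\bs{\eta})=-(k-1)\log s^2+\log p(y,s^2)$, and then invokes the standard identity $M_{(\tau\mu,\tau)\mid y,s}(\bs{t})=\exp\{A(\bs{\eta}+\bs{t})-A(\bs{\eta})\}$ evaluated at $\bs{t}=[0,t]^\prime$; the shift $\bs{\eta}_2\mapsto\bs{\eta}_2+t$ corresponds to $s^2\mapsto s^2-tk^{-1}$, which produces the claimed ratio. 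You instead compute $M_{\tau\mid y,s}(t)$ as a mixture integral and use the pointwise tilting identity
\begin{equation*}
e^{t/\sigma^2}\gamma(s^2\mid\sigma)
=\left[\frac{s^2}{s^2-tk^{-1}}\right]^{k-1}\gamma\big(s^2-tk^{-1}\mid\sigma\big),
\end{equation*}
whose prefactor is free of $(\mu,\sigma)$ and therefore passes outside the integral against $dG$, leaving exactly $p(y,s^2-tk^{-1})$. The two arguments rest on the same algebraic fact (exponential tilting of a gamma likelihood is an argument shift), but yours is self-contained and elementary: it avoids verifying the exponential-family MGF formula and makes the domain restriction $t<ks^2$ (so that the shifted argument stays positive) explicit, which the paper leaves implicit. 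The paper's formalism buys something too: it yields the joint MGF of $(\tau\mu,\tau)$ in one stroke and ties the lemma to the Efron--Robbins exponential-family machinery that motivates the rest of Theorem~\ref{thm:main}. One small correction: the hypothesis $J>3$ is not what makes $\gamma(\cdot\mid\sigma)$ a genuine density (any $k>0$ suffices for that); it is inherited from Theorem~\ref{thm:main}, where $k>1$ and related conditions are needed for the integrability arguments applied to this MGF.
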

\begin{proof}[Proof of Lemma \ref{lm:mgftau}]
	By Bayes' rule and re-arrangement,
	\begin{equation}
		p(\mu,\tau \mid y,s^2)
		\propto
		\text{exp}\left\{ \langle\bs{\eta} \; , \; [\tau\mu,\tau]^\prime \rangle - A(\bs{\eta})\right\}
	\end{equation}
	with natural parameters $\bs{\eta} := \left[Jy \ , \ -\tfrac{1}{2}(Jy^2 + 2ks^2)\right]^\prime$ and log-partition $A(\bs{\eta}) := -(k-1)\log s^2 + \log p(y,s^2)$. Hence $p(\mu,\tau \mid y,s^2)$ is a member of the exponential family with sufficient statistics $(\tau\mu,\tau)$, and the marginal posterior MGF of $\tau$ is
	\begin{equation}
		M_{\tau \mid y,s}(t) =
		\text{exp} \left\{ A(\bs{\eta} + [0,t]^\prime) - A(\bs{\eta})  \right\}
	\end{equation}
	for $t\in\mathbb{R}$. Substituting and simplifying yields the expression of this lemma.
\end{proof}

\subsection{Proofs of Lemmas~\ref{lem:missp.hom} and \ref{lem:missp.het}}
\label{subsec:app.missp}
	
\begin{proof}[Proof of Lemma~\ref{lem:missp.hom}]
	We drop the $i$ indices for compactness. With regularity conditions permitting the exchange of integration and differentiation, the chain rule and Bayes' rule give $\partial_y \log f_{p,G_0}(y) = \mathbb{E}_{p,G_0}[\partial_y \log f_{p}(y\mid\mu) \mid y]$. Adding and subtracting $\tfrac{1}{\sigma_0^2/J}(-y+\mu)$ inside the expectation and splitting,
	\begin{equation}
	\begin{aligned}[b]
		\partial_y \log f_{p,G_0}(y)
		=&
		\mathbb{E}_{p,G_0}[ \underbrace{\tfrac{\sqrt{J}}{\sigma_0}(\partial_z\log p_z(z) + z)}_{=\Delta_0}  \mid y ] +
		\mathbb{E}_{p,G_0}[ \tfrac{1}{\sigma_0^2/J} (-y+\mu)  \mid y ]
		\\
		=&
		\mathbb{E}_{p,G_0}[ \Delta_0 \mid y ] +\tfrac{1}{\sigma_0^2/J} (-y+\mathbb{E}_{p,G_0}[\mu \mid y]).
	\end{aligned}
	\end{equation}
	Rearranging, then adding and subtracting $\tfrac{\sigma_0^2}{J}\partial_y \log f_{\phi,B_0}(y)$ to recognize the Tweedie representation yields
	$
		\mathbb{E}_{p,G_0}[\mu \mid y]
		=
		\mathbb{E}_{\phi,B_0}[\mu \mid y]
		+
		\tfrac{\sigma_0^2}{J}\nabla_0
		-
		\tfrac{\sigma_0^2}{J}\mathbb{E}_{p,G_0}[ \Delta_0 \mid y ].
	$
\end{proof}

\begin{proof}[Proof of Lemma~\ref{lem:missp.het}]
		We establish three equalities in turn. 
		First, we can show as per the previous lemma that conditional on $\sigma$, 
		\begin{equation}
			\partial_y \log f_{p}(y,s^2 \mid \sigma ) 
			=
			\tfrac{1}{\sigma^2/J} (-y+\mathbb{E}_{p,G_0}[\mu\mid y,s,\sigma])  + \mathbb{E}_{p,G_0}[\Delta_1\mid y,s,\sigma]
		\end{equation}
		% where $\Delta_1 = \partial_y \log f_{p}(y \mid \mu,\sigma,s ) - \tfrac{1}{\sigma^2/J}[-y + \mu]$,
		and then by integrating out $\sigma$ to arrive at the first equality: 
		\begin{equation}
			\mathbb{E}_{p,G_0} [ y + \tfrac{\sigma^2}{J}\partial_y \log f_{p,G_0}(y,s^2 \mid \sigma )  \mid y,s ]
			=
			\mathbb{E}_{p,G_0} [ \mu \mid y,s ] + \mathbb{E}_{p,G_0}[ \tfrac{\sigma^2}{J} \Delta_1 \mid y,s ].
			\label{eq:missp.het.proof.1}
		\end{equation}
		The second: by the same manipulation as in the proof of Theorem~\ref{thm:main} (see the 2nd to 3rd line of \eqref{eq:alt.mu}, replacing ``$p$'' in that context with $f_{p,G_0}$ throughout),
		\begin{equation}
			\mathbb{E}_{p,G_0} [ y + \tfrac{\sigma^2}{J}\partial_y \log f_{p,G_0}(y,s^2 \mid \sigma )  \mid y,s ]
			=
			y + \tfrac{1}{J}\tfrac{1}{f_{p,G_0}(y,s^2)}\partial_y \{ f_{p,G_0}(y,s^2)\mathbb{E}_{p,G_0}[\sigma^2 \mid y,s] \},
			\label{eq:missp.het.proof.2}
		\end{equation}
		where $\mathbb{E}_{p,G_0}[\sigma^2 \mid y,s] = \int_0^\infty M_{\sigma^{-2} | y,s}(-t) dt$ recovers the posterior mean of $\sigma^2$ from the posterior MGF of the precision; again see the proof of Theorem~\ref{thm:main}. Next, by definition,
		\begin{equation}
			M_{\sigma^{-2} | y,s}(-t)
			=
			\frac{1}{f_{p,G_0}(y,s^2)}\int e^{-t/\sigma^2} f_{p}(y,s^2\mid \mu,\sigma) dG(\mu,\sigma).
		\end{equation}
		Adding and subtracting $\left[\tfrac{s^2}{s^2+tk^{-1}}\right]^{k-1} f_{p}(y,s^2+\tfrac{t}{k} \mid \mu,\sigma)$ inside the integral and applying Bayes' rule yields the third and final equality:
		\begin{equation}
			M_{\sigma^{-2} | y,s}(-t)
			=
			\left[\frac{s^2}{s^2+tk^{-1}}\right]^{k-1} \frac{f_{p,G_0}(y,s^2+\tfrac{t}{k})}{f_{p,G_0}(y,s^2)}
			+
			\mathbb{E}_{p,G_0}[ \tilde{\Delta}_2(t) \mid y,s],
			\label{eq:missp.het.proof.3}
		\end{equation}
		where
		\begin{equation}
			\tilde{\Delta}_2(t) :=
			e^{-t/\sigma^2} \left\{
				1-
				\frac{r_{z,w}(z,w+\delta_t)}{r_{z,w}(z,w)}
			\right\}
		\end{equation}
		by definition of $r_{z,w}(\cdot)$ in Lemma~\ref{lem:missp.het} and that under microdata normality, $(y,s^2)$ are independent with $s^2$ having a Gamma distribution with shape $k$ and scale $\sigma^2/k$.

		Now combine \eqref{eq:missp.het.proof.1} and \eqref{eq:missp.het.proof.2} to get the first equality of the display below, then substitute in \eqref{eq:missp.het.proof.3} to get the second: 
		\begin{align}
			&\mathbb{E}_{p,G_0} [ \mu \mid y,s ] + \mathbb{E}_{p,G_0} [ \tfrac{\sigma^2}{J} \Delta_1 \mid y,s ]
			\nonumber\\=&
			y+ \frac{1}{J}\frac{1}{f_{p,G_0}(y,s^2)}\partial_y \{ f_{p,G_0}(y,s^2)\int_0^\infty M_{\sigma^{-2} | y,s}(-t) dt \}
			\nonumber\\=&
			y+
			\frac{1}{J}\frac{1}{f_{p,G_0}(y,s^2)}\partial_y \left\{ f_{p,G_0}(y,s^2)\int_0^\infty \left[\tfrac{s^2}{s^2+tk^{-1}}\right]^{k-1} \tfrac{f_{p,G_0}(y,s^2+tk^{-1})}{f_{p,G_0}(y,s^2)}  dt \right\}  
			\nonumber\\+&
			\frac{1}{J}\frac{1}{f_{p,G_0}(y,s^2)}\partial_y \left\{ f_{p,G_0}(y,s^2)\int_0^\infty \mathbb{E}_{p,G_0}[ \tilde{\Delta}_2(t) \mid y,s] dt \right\}  
			\nonumber\\=_{(1)}&
			\mathbb{E}_{\phi,B_0}[\mu\mid y,s] + \tfrac{1}{J}\nabla_1
			\nonumber\\+& \frac{1}{J}\frac{1}{f_{p,G_0}(y,s^2)}\partial_y \left\{ f_{p,G_0}(y,s^2)\mathbb{E}_{p,G_0}\left[ \Delta_2 \mid y,s \right] \right\},
			\label{eq:missp.het.1}
		\end{align}
		where 
		$\Delta_2 = \int_0^\infty \tilde{\Delta}_2(t) dt$, and
		$=_{(1)}$ follows from the representation in \eqref{eq:tweed.alt}: 
		\begin{equation}
			\mathbb{E}_{\phi,B_0}[\mu\mid y,s] = 
			y+
			\tfrac{1}{J}\tfrac{1}{f_{\phi,B_0}(y,s^2)} \int_0^\infty \left[\tfrac{s^2}{s^2+tk^{-1}}\right]^{k-1} \partial_y  f_{\phi,B_0}(y,s^2+tk^{-1})  dt. 
		\end{equation}
		Finally, taking the derivative of the final term of \eqref{eq:missp.het.1} using the product rule yields
		\begin{equation}
		\begin{aligned}[b]
			&\frac{1}{J}\frac{1}{f_{p,G_0}(y,s^2)}
			\int \partial_y
			\left[
				\Delta_2 \cdot 
				f_{p,G_0}(y,s^2) \tfrac{f_p(y,s^2 | \mu,\sigma)}{f_{p,G_0}(y,s^2) }g(\mu,\sigma)
			\right]	
			d(\mu,\sigma) 
			\\
			=&
			\frac{1}{J} \int [\partial_y \Delta_2] dG(\mu,\sigma | y,s) 
			+ 
			\frac{1}{J}\int [\Delta_2 \cdot \partial_y \log f_p(y,s^2 | \mu,\sigma) ] dG(\mu,\sigma | y,s)
			\\
			=&
			\tfrac{1}{J} \mathbb{E}_{p,G_0}[ \tfrac{\sqrt{J}}{\sigma}  \partial_z \Delta_2 \mid y,s ]
			+ 
			\tfrac{1}{J} \mathbb{E}_{p,G_0}[\Delta_2 \cdot \tfrac{\sqrt{J}}{\sigma} \partial_z\log p_{z|w}(z|w) \mid y,s ].
		\end{aligned} 
		\end{equation}
\end{proof}

        \section{Proof of Theorem~\ref{thm:eb.opt}}
\label{sec:app.reg}
\subsection{Notations and Definitions}

\textit{Model}. Recall the premise of Theorem~\ref{thm:eb.opt} is Assumption~\ref{ass:disagg.model}, i.e. the model $y_{ij}\mid\mu_i,\sigma_i\sim_{\text{iid}}\mathcal{N}[\mu_i,\sigma_i^2]$ and $(\mu_i,\sigma_i)\sim_{\text{iid}}G_0$. We maintain this throughout Appendix~\ref{sec:app.reg}.

% Collect all of $y_{ij}$ for $j=1,\dots,J$ into $\mathcal{Y}_i$, and all of $\mathcal{Y}_i$ for $i=1,\dots,n$ into $\mathcal{Y}$. Further let $(y_i,s_i^2)$ denote the sample mean and variance from $\mathcal{Y}_i$. Let $\mathcal{Y}_{-i}$ denote $\mathcal{Y}$ excluding $\mathcal{Y}_i$.
% We use $\phi(\cdot)$ to denote the standard normal density, and $\gamma(\cdot|k,\theta)$ to denote the density of a gamma random variable with shape $k$ and scale $\theta$. 

% $\mathbb{E}_{G_0}^{\mathcal{Y}}[\cdot]$ indicates integrating out $\mathcal{Y}$ according to the model, in particular the subscript indicating that $(\mu_i,\sigma_i)\sim G_0$. 

\noindent\textit{Truncated Distribution Class}. For constants $(\epsilon,v_1,v_2,C)$ we define a general class of distributions 
$\mathcal{G}(\epsilon,v_1,v_2,C,\underline{\sigma})$ where
\begin{equation}
\begin{aligned}
	\mathcal{G}(\epsilon,v_1,v_2,L,\underline{\sigma}) :=& \{ G: G\big([-C(\log \tfrac{1}{\epsilon})^{v_1},L(\log \tfrac{1}{\epsilon})^{v_1}]\times [\underline{\sigma},L(\log \tfrac{1}{\epsilon})^{v_2}]\big) = 1 \},
	\label{eq:def.trunc}
\end{aligned}
\end{equation}
and in particular
\begin{equation}
	\mathcal{G}_n := \mathcal{G}(\tfrac{1}{n},\gamma_1,\gamma_2,L,\underline{\sigma})
	\label{eq:def.trunc.1}
\end{equation}
where $(\gamma_1,\gamma_2)$ are as defined in Assumption~\ref{ass:subexp}, and $L$ is a large enough constant satisfying both $-\lambda_1L^{1/\gamma_1}\leq-4$ and $-\lambda_2L^{1/\gamma_2}\leq-4$.

Let $\bar{\mu}_G$ and $\bar{\sigma}_G$ denote the maximum supports of $\mu$ and $\sigma$ for a distribution $G$. 
Let $\bar{\mu}_n$ and $\bar{\sigma}_n$ to denote the maximum support of all distributions in $\mathcal{G}_n$; thus,
\begin{equation}
	\bar{\mu}_n\asymp (\log n)^{\gamma_1} \text{ and } \bar{\sigma}_n\asymp (\log n)^{\gamma_2}.
\end{equation}
Let $G_{0,n}$ denote the data-generating distribution $G_0$ conditional on the event
$(\mu,\sigma)\in[-\bar{\mu}_n,\bar{\mu}_n]\times[\underline{\sigma},\bar{\sigma}_n]$. Thus $G_{0,n}$ is a member of $\mathcal{G}_n$.

\noindent\textit{Estimation}. Let $\hat{G}$ denote an estimator of $G_0$ satisfying Assumption~\ref{ass:npmle}.
Let $\hat{\mu}_{G_0}(y_i,s_i^2)$ be the Bayes estimator of $\mu_i$, and $\hat{\mu}_{\hat{G}}(y_i,s_i^2)$ the feasible version. Note that for the feasible version, $(y_i,s_i^2)$ appears twice -- in $\hat{G}$ and also as the argument of $\hat{\mu}_{\hat{G}}(y_i,s_i^2)$.

Throughout, we use a alternative representation of $\hat{\mu}_G$; see \eqref{eq:tweed.alt}: 
\begin{equation}
	\hat{\mu}_G(y,s^2)
	=
	y + \frac{k}{J}\frac{1}{f_G(y,s^2)} \int_{s^2}^\infty \left[\frac{s^2}{t}\right]^{k-1} \partial_y f_{G}(y,t) dt. 
\end{equation}

\noindent\textit{Additional Definitions \& Notations}. Let $M\geq 2$, $\rho_n := 1/n^3$ and 
\begin{align}
	\bar{y}_n :=& \bar{\mu}_n + \bar{\sigma}_n \cdot (\log n^{2M})^{1/2} &\asymp (\log n)^{\gamma_1\vee [\frac{1}{2} + \gamma_2]} &=: (\log n)^{\tilde{\gamma}_1} \nonumber\\
	\bar{s}_n :=& \bar{\sigma}_n \cdot (\log n^{M/k})^{1/2} &\asymp (\log n)^{\frac{1}{2}+\gamma_2} &=: (\log n)^{\tilde{\gamma}_2} \nonumber\\
	\varepsilon_n :=& n^{-1/2}(\log n)^{2[\tilde{\gamma}_1\vee\frac{1}{2} + \tilde{\gamma}_2] + \frac{1}{2} } &\asymp (\log n)^{ 2[\tilde{\gamma}_1 +\tilde{\gamma}_2] + \frac{1}{2}} \nonumber\\
	%\mathcal{L}:=& \{ \vee_{i\leq n} |\mu_i| \leq \bar{\mu}_n \text{ and } \vee_{i\leq n} \sigma_i \leq \bar{\sigma}_n \}\nonumber\\\
	\mathcal{D}:=& \{ \vee_{i\leq n} |y_i| \leq \bar{y}_n \text{ and } \vee_{i\leq n} s_i \leq \bar{s}_n \} \nonumber\\
	\mathcal{A} :=& \{ d( f_{\hat{G}} , f_{G_0} ) < \varepsilon_n \}
	\label{eq:def}
\end{align}
where $d(\cdot,\cdot)$ is the Hellinger distance.
Note that $\mathcal{D}$ (and $\mathcal{A}$) is also used as an indicator function for the event.
Let $C$ denote a generic constant, and $v_n$ denote a generic sub-polynomial sequence. That is, $v_n$ satisfies $v_n=o(n^{\epsilon})$ for every $\epsilon>0$.

\subsection{Proof of Theorem~\ref{thm:eb.opt}}

We give the proof for $\{\mu_i\}_{i\in[n]}$, then indicate the modifications for $\{\sigma_i\}_{i\in[n]}$ and $\{\sigma_i^2\}_{i\in[n]}$; several steps below are established formally in the Supplementary Appendix.

\begin{proof}[Regret bounds for $\{\mu_i\}_{i\in[n]}$]
	Since the data and latent variables are i.i.d. across $i$, the regret of $\{ \hat{\mu}_{\hat{G}}(y_i,s_i^2) \}_{i=1}^n$ relative to the oracles $\{ \hat{\mu}_{G_0}(y_i,s_i^2)\}_{i=1}^n$ reduces to 
	\begin{equation}
	\begin{aligned}[b]
		&\mathbb{E}_{G_0}^{(\mu_1,\mathcal{Y})}[\hat{\mu}_{\hat{G}}(y_1,s_1^2) - \mu_1 ]^2 - 
		\mathbb{E}_{G_0}^{(\mu_1,\mathcal{Y}_1)}[\hat{\mu}_{G_0}(y_1,s_1^2) - \mu_1 ]^2
		\\=&
		\mathbb{E}_{G_0}^{\mathcal{Y}}\Delta_{\mu,\hat{G},G_0}(y_1,s_1)
		\\\lesssim_{(1)}&
		\mathbb{E}_{G_{0,n}}^{\mathcal{Y}}\Delta_{\mu,\hat{G},G_{0,n}}(y_1,s_1) + o(1/n)
		\\=&
		\mathbb{E}_{G_{0,n}}^{\mathcal{Y}}\left[\Delta_{\mu,\hat{G},G_{0,n}}(y_1,s_1)[\mathcal{D}+\mathcal{D}^c][\mathcal{A}+\mathcal{A}^c]\right] + o(1/n)
	\end{aligned}
	\end{equation}
	where 
	$
	\Delta_{\mu,H,G}(y,s) := [\hat{\mu}_{H}(y,s^2) - \hat{\mu}_{G}(y,s^2)]^2
	$ and 
	$\lesssim_{(1)}$ follows by Lemma~\ref{lem:truncation} using sub-exponentiality of $G_0$ in Assumption~\ref{ass:subexp}.
	Next, as a result of Assumption~\ref{ass:subexp} and the super-smoothness of the mixture density induced by Assumption~\ref{ass:disagg.model}, the regret in regions $\mathcal{D}^c$ and $\mathcal{A}^c$ is shown by Lemma~\ref{lem:complement} to be $o(1/n)$. Thus the regret will be determined by $\Delta_{\mu,\hat{G},G}(y_1,s_1)$ in region $\mathcal{D}\cap\mathcal{A}$. 
	From now on, we will work in the region $\mathcal{D}\cap\mathcal{A}$ and suppress this notation.
	
	Note that for any $H\in\mathcal{G}_n$, 
	\begin{equation}
		\mathbb{E}_{G_{0,n}}^{\mathcal{Y}}\Delta_{\mu,\hat{G},G_{0,n}}(y_1,s_1) 
		\leq 
		2\left\{ \mathbb{E}_{G_{0,n}}^{\mathcal{Y}}\Delta_{\mu,\hat{G},H}(y_1,s_1)  +
		\mathbb{E}_{G_{0,n}}^{\mathcal{Y}_1}\Delta_{\mu,H,G_{0,n}}(y_1,s_1) \right\}.
		\label{eq:reg.mu.cp}
	\end{equation}
	Now take $\{\hat{\mu}_{H_j}\}_{j\leq N}$ to be an $\delta_n$-net of $\mathcal{T}_\mu:=\{\hat{\mu}_H:d(f_H,f_{G_{0,n}})<\varepsilon_n \text{ and } H\in\mathcal{G}_n\}$ in the sup-norm\footnote{I.e., for every $\hat{\mu}_H\in\mathcal{T}_\mu$, there is at least one $j\leq N$ such that $\sup_{y,s^2}|\hat{\mu}_H(y,s^2) - \hat{\mu}_{H_j}(y,s^2)| < \delta_n$.}, where $N$ denotes the cardinality of the net which depends implicitly on $\delta_n$. We assume WLOG\footnote{E.g., first find a $\delta_n$-net for $\mathcal{T}_\mu$, then for each member of the net, take an element within $\mathcal{T}_\mu$ that is less than $\delta_n$ from it; these elements then form a $2\delta_n$-net of $\mathcal{T}_\mu$ and also by definition each will be less than $\varepsilon_n$ away from $f_{G_{0,n}}$ in Hellinger distance.}  that every $H_j$ satisfies $d(f_{H_j},f_{G_{0,n}}) < \varepsilon_n$. Further note that this net is non-random. 
	Then
	\begin{align}
		\mathbb{E}_{G_{0,n}}^{\mathcal{Y}}\Delta_{\mu,\hat{G},G_{0,n}}(y_1,s_1)
		\leq_{(1)}& 
		2\left\{
		\min_{j\leq N} \mathbb{E}_{G_{0,n}}^{\mathcal{Y}}\Delta_{\mu,\hat{G},H_j}(y_1,s_1)
		+
		\max_{j\leq N} \mathbb{E}_{G_{0,n}}^{\mathcal{Y}_1}\Delta_{\mu,H_j,G_{0,n}}(y_1,s_1)
		\right\}
		\nonumber\\\leq_{(2)}& 
		2\left\{ \delta_n^2 + \max_{j\leq N} \mathbb{E}_{G_{0,n}}^{\mathcal{Y}_1}\Delta_{\mu,H_j,G_{0,n}}(y_1,s_1) \right\}
		\label{eq:reg.mu.net}
	\end{align}
	where $\leq_{(1)}$ holds because \eqref{eq:reg.mu.cp} is true for every $H_j$. Furthermore $\leq_{(2)}$ holds because $\hat{\mu}_{\hat{G}}\in\mathcal{T}_\mu$, since we are in the region $\mathcal{A}$, and thus $\hat{\mu}_{\hat{G}}$ has to be no more than $\delta_n$ (in sup-norm) away from a member of the net.  
	By taking $\delta_n\downarrow0$ fast enough
	% \footnote{As a side note, there is typically a balance needed between the rate of $\delta_n\downarrow0$ and the entropy of the $\delta_n$-net. This is not needed in our proof because by defining the regret in terms of the integrated risk as opposed to the compound frequentist risk, we do not need empirical process control over the estimator class.}
	, the expectation of $\Delta_{\mu,\hat{G},G_{0,n}}(y_1,s_1)$ will be determined by the second term in \eqref{eq:reg.mu.net}.
	
	For each $j$, we bound $\Delta_{\mu,H_j,G_{0,n}}(y_1,s_1)$ by its regularized counterpart:
	\begin{align}
		&\mathbb{E}_{G_{0,n}}^{\mathcal{Y}_1}\Delta_{\mu,H_j,G_{0,n}}(y_1,s_1)
		\nonumber\\\leq_{(1)}& 
		\mathbb{E}_{G_{0,n}}^{\mathcal{Y}_1}\Delta_{\mu,H_j,G_{0,n}}(y_1,s_1,\rho_n)
		+ 
		\mathbb{E}_{G_{0,n}}^{\mathcal{Y}_1} \left\{
		\left[\frac{f_{G_{0,n}}(y_1,s_1^2)}{f_{G_{0,n}}(y_1,s_1^2)\vee\rho_n}-1 \right]
		[\hat{\mu}_{G_{0,n}}(y_1,s_1^2)-y_1]
		\right\}^2
		\nonumber\\\leq_{(2)}&
		\mathbb{E}_{G_{0,n}}^{\mathcal{Y}_1}\Delta_{\mu,H_j,G_{0,n}}(y_1,s_1,\rho_n) 
		+ \bar{y}_n^2 \cdot \mathbb{E}_{G_{0,n}}^{\mathcal{Y}_1}
		\left[\frac{f_{G_{0,n}}(y_1,s_1^2)}{f_{G_{0,n}}(y_1,s_1^2)\vee\rho_n}-1 \right]^2
		\label{eq:reg.mu.trunc}
	\end{align}
	where 
	$\Delta_{\mu,H,G_{0,n}}(y,s,\rho)  
	:= \hat{\mu}_{H}(y,s;\rho) - \hat{\mu}_{G_{0,n}}(y,s;\rho)$ and
	\begin{equation}
		\hat{\mu}_{G}(y,s^2;\rho) 
		:=
		y + \frac{k}{J}\frac{1}{f_G(y,s^2)\vee\rho} \partial_y\left\{ \int_{s^2}^\infty \left[\frac{s^2}{t}\right]^{k-1} f_{G}(y,t) dt\right\},
	\end{equation}
	and $\leq_{(1)}$ of \eqref{eq:reg.mu.trunc} follows by Lemma~\ref{lem:regu.regret} and $\leq_{(2)}$ follows because we are in region $\mathcal{D}$. 
	Furthermore where $\rho_n < f_{G_{0,n}}$, $\frac{f_{G_{0,n}}(y_1,s_1^2)}{f_{G_{0,n}}(y_1,s_1^2)\vee\rho_n}-1=0$, and where $f_{G_{0,n}} < \rho_n$, 
	\begin{equation}
	\begin{aligned}[b]
		\mathbb{E}_{G_{0,n}}^{\mathcal{Y}_1}
		\left[\frac{f_{G_{0,n}}(y_1,s_1^2)}{f_{G_{0,n}}(y_1,s_1^2)\vee\rho_n}-1 \right]^2
		\leq_{(1)} &
		\int_{\mathcal{D}} 
		f_{G_{0,n}}(y_1,s_1^2) d(y_1,s_1^2 )
		\\\leq_{(2)} &
		\rho_n \cdot 
		\int_{\mathcal{D}} d(y_1,s_1^2 )
		\\= &
		\rho_n \cdot 2 \bar{y}_n\bar{s}_n^2
		\; =_{(3)}
		o(1/n^2)
	\end{aligned}
	\end{equation}
	where $\leq_{(1)}$ follows as $\frac{f_{G_{0,n}}(y_1,s_1^2)}{f_{G_{0,n}}(y_1,s_1^2)\vee\rho_n}-1$ is bounded in magnitude by 1 (also recalling that we are in $\mathcal{D}$), $\leq_{(2)}$ follows by $f_{G_{0,n}} < \rho_n$ and $=_{(3)}$ by definition of $\rho_n$. The terms in $=_{(3)}$ do not depend on $j$ and so by returning to \eqref{eq:reg.mu.trunc}, we have for every $j$
	\begin{equation}
		\mathbb{E}_{G_{0,n}}^{\mathcal{Y}_1}\Delta_{\mu,H_j,G_{0,n}}(y_1,s_1)
		\leq
		\mathbb{E}_{G_{0,n}}^{\mathcal{Y}_1}\Delta_{\mu,H_j,G_{0,n}}(y_1,s_1,\rho_n)
		+ 
		\underbrace{\bar y_n o(1/n^2)}_{=o(1/n)}.
		\label{eq:penulti}
	\end{equation}
	By\footnote{This entails comparing the rates of various objects within Theorem~\ref{thm:jz.p3.mu}; in particular note that $\tilde{L}(\rho_n)\asymp \log n$ is of a larger order than $|\log \varepsilon_n^2|$.} Theorem~\ref{thm:jz.p3.mu}, an extension of results in \cite{Jiang2009} to bound $\Delta_{\mu,H_j,G_{0,n}}(y_1,s_1,\rho_n)$ by the squared Hellinger distance of the mixture densities (which is less than $\varepsilon_n^2$ since we are in $\mathcal{A}$) up to log factors,
	\begin{equation}
	\begin{aligned}[b]
		\mathbb{E}_{G_{0,n}}^{\mathcal{Y}_1}\Delta_{\mu,H_j,G_{0,n}}(y_1,s_1,\rho_n)
		\leq&
		C\cdot
		\bar{s}_n^4  [ \bar{y}_n^2\log^{1/2}(n) + \log(n)] \cdot \varepsilon_n^2
		\\\leq&
		C\cdot \bar{s}_n^4\bar{y}_n^2\log^{1/2}(n) \cdot \varepsilon_n^2
	\end{aligned}
	\end{equation}
	and so returning to \eqref{eq:penulti}, we have for every $j$
	\begin{equation}
		\mathbb{E}_{G_{0,n}}^{\mathcal{Y}_1}\Delta_{\mu,H_j,G_{0,n}}(y_1,s_1)
		\leq 
		C \cdot \bar{s}_n^4\bar{y}_n^2\log^{1/2}(n) \cdot \varepsilon_n^2.
	\end{equation}
	
	Plugging the above into \eqref{eq:reg.mu.net}, we get that the regret is bounded by 
	\begin{equation}
		C \cdot \bar{s}_n^4\bar{y}_n^2\log^{1/2}(n) \cdot \varepsilon_n^2
		\asymp
		(\log n)^{ 4\tilde{\gamma}_2 + 2\tilde{\gamma}_1 + 1}\cdot \tfrac{1}{n}(\log n)^{4[\tilde{\gamma}_1+\tilde{\gamma}_2] + 1}
		\asymp
		\tfrac{1}{n}(\log n)^{ 6\tilde{\gamma}_1 + 8\tilde{\gamma}_2 + 2 }
	\end{equation}
	uniformly over $G_0\in \mathcal{G}$ because the constants above depend only on $\mathcal{G}$.
\end{proof}
\begin{proof}[Regret bounds for $\{\sigma_i^2\}_{i\in[n]}$ or $\{\sigma_i^2\}_{i\in[n]}$]
	The proof is similar to that for $\{\mu_i\}_{i\in[n]}$, except that we replace Theorem~\ref{thm:jz.p3.mu} with Theorem~\ref{thm:jz.p3.sigma} or \ref{thm:jz.p3.sigma2}. The regret bound will be  
	\begin{equation*}
		C \cdot  \bar{s}_n^4 \varepsilon_n^2
		\asymp 
		\tfrac{1}{n}(\log n)^{ 4\tilde{\gamma_2} + 4[\tilde{\gamma}_1+\tilde{\gamma}_2] + 1 }
		\asymp 
		\tfrac{1}{n}(\log n)^{ 4\tilde{\gamma_1} + 8\tilde{\gamma}_2 + 1 }. \qedhere
	\end{equation*}
\end{proof}
% \begin{proof}[Regret bounds for $\{\sigma_i\}_{i\in[n]}$]
% 	The proof is similar to that for $\{\mu_i\}_{i\in[n]}$, except that we replace Theorem~\ref{thm:jz.p3.mu} with Theorem~\ref{thm:jz.p3.sigma}. As a result, the regret bound will be  
% 	\begin{equation*}
% 		C \cdot  \bar{s}_n^4 \varepsilon_n^2
% 		\asymp 
% 		\tfrac{1}{n}(\log n)^{ 4\tilde{\gamma_2} + 4[\tilde{\gamma}_1+\tilde{\gamma}_2] + 1 }
% 		\asymp 
% 		\tfrac{1}{n}(\log n)^{ 4\tilde{\gamma_1} + 8\tilde{\gamma}_2 + 1 }. \qedhere
% 	\end{equation*}
% \end{proof}

        %%% ---------- Supplementary Appendix ----------
        \clearpage
        \renewcommand{\thepage}{S.\arabic{page}}
        \setcounter{page}{1}

        \thispagestyle{empty}

        \begin{center}
            \vspace*{2cm}
            {\large \bf Supplementary Appendix:\\[6pt]Large-Scale Estimation under Unknown Heteroskedasticity}\\[12pt]
            {Sheng Chao Ho}
        \end{center}

        \vspace{1cm}

        This Supplementary Appendix starts with section C (Sections A and B are in the paper's Appendix). It consists of the following sections:

        \begin{enumerate}[label=\Alph*., start=3, leftmargin=*, itemsep=4pt]
            \item Proofs of additional results in the paper
            \item Supporting results for Theorem~\ref{thm:eb.opt}
            \item Supplementary results for empirical illustration
        \end{enumerate}

        \newpage

        \setcounter{section}{2}
        \numberwithin{equation}{section}
        \renewcommand*\thetable{\thesection-\arabic{table}}
        \renewcommand*\thefigure{\thesection-\arabic{figure}}

\section{Proofs of additional results in the paper}
\label{sec:supp.add.proofs}

\begin{proof}[Proof of Lemma~\ref{lm:threshold.detect}]
	Define $d_i:= \{\hat{\mu}_i \leq c\}$ for a generic estimator $\hat{\mu}:\mathcal{Y}\mapsto\mathbb{R}^n$. Using the identity $\lvert \mu_i-c\rvert (\{\mu_i>c\} - \{\mu_i \leq c\}) \cdot d_i = (\mu_i-c) \cdot d_i$,
	\begin{equation}
		\begin{aligned}[b]
			&\argmin_{\hat{\mu}}
			\mathbb{E}_{G_0}^{(\mu,\sigma)}
			\mathbb{E}_{(\mu,\sigma)}^{\mathcal{Y}}
			\frac{1}{n}
			\sum_{i=1}^n
			\big[
			\lvert \mu_i-c \rvert
			\cdot
			\big(
			d_i
			\left\{\mu_i>c\right\}
			+
			(1-d_i)
			\left\{\mu_i<c\right\}
			\big)
			\big]
			\\
			=&
			\argmin_{\hat{\mu}}
			\mathbb{E}_{G_0}^{\mathcal{Y}}
			\frac{1}{n}
			\sum_{i=1}^n
			\left(\mathbb{E}_{G_0}[\mu_i\mid \mathcal{Y}_i]-c\right) \cdot d_i,
		\end{aligned}
	\end{equation}
	which is minimized by $d_i^* = \mathbf{1}\{\mathbb{E}_{G_0}[\mu_i\mid y_i,s_i]\leq c\}$.
\end{proof}

\begin{proof}[Proof of Lemma~\ref{lm:main.quant}]
	Follows by Theorem~\ref{thm:main} and the linearity of expectations.
\end{proof}

\begin{proof}[Proof of Lemma~\ref{lem:main.hetj}]
	Denote the conditional distribution of $(\mu_i,\sigma_i)$ given $J_i=J$ as $H_{(\mu_i,\sigma_i)|J}$. Clearly $\check{\mu}^*=\mathbb{E}_{H_0}[\mu_i\mid y_i,s_i,J_i]$. Applying the proof of Theorem~\ref{thm:main} pointwise in $J$ — valid since $\mathbb{E}_{H_0}\{J_i > 3\} =1$ by assumption — yields the result.
\end{proof}

\section{Supporting results for Theorem~\ref{thm:eb.opt}}
\label{sec:supp.thm.eb}

\subsection{Auxiliary Lemmas}
The following is used to reduces the relative regret of feasible versus oracle estimator relative to their regularized counterparts.
\begin{lemma}[Regularized Regret]
	\label{lem:regu.regret}
	For two distributions $G$ and $H$,
	\begin{align}
		\hat{\mu}_{H}(y,s^2 ) - \hat{\mu}_{G}(y,s^2)
		=&
		\hat{\mu}_{H}(y,s ; \rho_n)
		-
		\hat{\mu}_{G}(y,s ; \rho_n)
		+
		\left\{\tfrac{f_{G}(y,s^2)}{f_{G}(y,s^2)\vee\rho_n} - 1\right\}\left\{\hat{\mu}_{G}(y,s)-y\right\},
		\nonumber\\
		\hat{\sigma}_{H}(y,s^2 ) - \hat{\sigma}_{G_0}(y,s^2)		
		=&
		\hat{\sigma}_{H}(y,s ; \rho_n)
		-
		\hat{\sigma}_{G}(y,s ; \rho_n)
		+
		\left\{\tfrac{f_{G}(y,s^2)}{f_{G}(y,s^2)\vee\rho_n} - 1\right\}\hat{\sigma}_{G}(y,s),
		\nonumber\\
		\hat{\sigma}_{H}^2(y,s^2 ) - \hat{\sigma}_{G}^2(y,s^2)
		=&
		\hat{\sigma}_{H}^2(y,s ; \rho_n)
		-
		\hat{\sigma}_{G}^2(y,s ; \rho_n)
		+
		\left\{\tfrac{f_{G}(y,s^2)}{f_{G}(y,s^2)\vee\rho_n} - 1\right\}\hat{\sigma}_{G}^2(y,s)
		\nonumber.
	\end{align}
\end{lemma}
\begin{proof}
	Note that 
	\begin{align}
		\hat{\sigma}_{H}^2(y,s | \rho_n) - \hat{\sigma}_{G}^2(y,s)
		\nonumber
		=&
		\hat{\sigma}_{H}^2(y,s \mid\rho_n)
		- 
		\tfrac{k}{f_{G}(y,s^2) } \cdot \int_{s^2}^\infty \left[\tfrac{s^2}{t}\right]^{k-1} f_{G}(y,t) dt
		\nonumber\\
		=&
		\hat{\sigma}_{H}^2(y,s \mid\rho_n)
		- 
		\underbrace{\tfrac{k}{f_{G}(y,s^2) \vee \rho_n} \cdot \int_{s^2}^\infty \left[\tfrac{s^2}{t}\right]^{k-1} f_{G}(y,t) dt}_{=\hat{\sigma}_{G}^2(y,s \mid\rho_n)}
		\nonumber\\
		+&
		\left\{\tfrac{f_{G}(y,s^2)}{f_{G_0}(y,s^2)\vee\rho_n} - 1\right\} \underbrace{\tfrac{k}{f_{G}(y,s^2)} \int_{s^2}^\infty \left[\tfrac{s^2}{t}\right]^{k-1}f_{G}(y,t)dt}_{=\hat{\sigma}_{G}^2(y,s)}
		\label{eq:regret.1}
	\end{align}
	The statements for $\{\mu_i\}_{i\in[n]}$ and $\{\sigma_i\}_{i\in[n]}$ follow identically and are thus omitted.
\end{proof}

The following proves that the difference between the untruncated $G_0$ and the truncated $G_{0,n}$ is of order $o(1/n)$.
\begin{lemma}[Truncation]
	\label{lem:truncation}
	Suppose that Assumptions~\ref{ass:subexp}~and~\ref{ass:npmle} hold. Then
	\begin{align*}
		\mathbb{E}_{G_0}^{\mathcal{Y}}\Delta_{\mu,\hat{G},G_0}(y_1,s_1)
		\lesssim&
		\mathbb{E}_{G_{0,n}}^{\mathcal{Y}}\Delta_{\mu,\hat{G},G_{0,n}}(y_1,s_1) + o(1/n) 
		\\
		\mathbb{E}_{G_0}^{\mathcal{Y}}\Delta_{\sigma,\hat{G},G_0}(y_1,s_1)
		\lesssim&
		\mathbb{E}_{G_{0,n}}^{\mathcal{Y}}\Delta_{\sigma,\hat{G},G_{0,n}}(y_1,s_1) + o(1/n) 
		\\
		\mathbb{E}_{G_0}^{\mathcal{Y}}\Delta_{\sigma^2,\hat{G},G_0}(y_1,s_1)
		\lesssim&
		\mathbb{E}_{G_{0,n}}^{\mathcal{Y}}\Delta_{\sigma^2,\hat{G},G_{0,n}}(y_1,s_1) + o(1/n) 
	\end{align*}
\end{lemma}
\begin{proof}
	We prove only the first statement, with the latter two following along identical lines. 
	First by the $C_p$-inequality, 
	\begin{align}
		\mathbb{E}_{G_0}^{\mathcal{Y}}\Delta_{\mu,\hat{G},G_0}(y_1,s_1)
		\leq
		2\left\{
			\mathbb{E}_{G_0}^{\mathcal{Y}}\Delta_{\mu,\hat{G},G_{0,n}}(y_1,s_1)
			+
			\mathbb{E}_{G_0}^{\mathcal{Y}_1}\Delta_{\mu,G_{0,n},G_{0}}(y_1,s_1)
		\right\}
		\label{eq:trunc.1}
	\end{align}
	We take the two terms on the RHS in turn. 
	
	Define $\mathcal{L} := \{ \vee_{i\leq n} |\mu_i| \leq \bar{\mu}_n \cap \vee_{i\leq n} \sigma_i \leq \bar{\sigma}_n \}$ and
	we split the first term of \eqref{eq:trunc.1} into two regions, $\mathcal{L}$ and $\mathcal{L}^c$. Then 
	\begin{equation}
	\begin{aligned}[b]
		\mathbb{E}_{G_0}^{\mathcal{Y}}\Delta_{\mu,\hat{G},G_{0,n}}(y_1,s_1)\mathcal{L}
		=&
		\mathbb{E}_{G_0}^{\mathcal{Y}}[\Delta_{\mu,\hat{G},G_{0,n}}(y_1,s_1) \mid \mathcal{L}]
		\cdot \mathbb{E}_{G_0}\mathcal{L}\\
		=_{(1)}&
		\mathbb{E}_{G_{0,n}}^{\mathcal{Y}}\Delta_{\mu,\hat{G},G_{0,n}}(y_1,s_1)
		\cdot \mathbb{E}_{G_0}\mathcal{L}\\
		\leq&
		\mathbb{E}_{G_{0,n}}^{\mathcal{Y}}\Delta_{\mu,\hat{G},G_{0,n}}(y_1,s_1)
	\end{aligned}
	\end{equation}
	where $=_{(1)}$ follows by definitions of $G_{0,n}$ and $\mathcal{L}$. 
	Next
	\begin{equation}
	\begin{aligned}[b]
		\mathbb{E}_{G_0}^{\mathcal{Y}}\Delta_{\mu,\hat{G},G_{0,n}}(y_1,s_1)\mathcal{L}^c
		\leq&
		\left\{ 
			\mathbb{E}_{G_0}^{\mathcal{Y}}\Delta_{\mu,\hat{G},G_{0,n}}^2(y_1,s_1)
		\right\}^{1/2}
		\left\{ 
			\mathbb{E}_{G_0}^{\mathcal{Y}}\mathcal{L}^c
		\right\}^{1/2}\\
		\leq&
		\left\{ 
			2\mathbb{E}_{G_0}^{\mathcal{Y}}[\hat{\mu}_{\hat{G}}^2(y_1,s_1)+\hat{\mu}_{G_{0,n}}^2(y_1,s_1)]
		\right\}^{1/2} \left\{ 
			\mathbb{E}_{G_0}^{\mathcal{Y}}\mathcal{L}^c
		\right\}^{1/2}\\
		=_{(1)}& v_n \cdot o(1/n^{3/2})
		\;= o(1/n)
	\end{aligned}
	\end{equation}
	where $=_{(1)}$ follows because the supports of the Bayes' estimators are bound by the supports of $\hat{G}$ and $G_{0,n}$; this implies $\mathbb{E}_{G_0}^{\mathcal{Y}}\hat{\mu}_{\hat{G}}^2(y_1,s_1)$ is sub-polynomial by Assumptions~\ref{ass:subexp} and \ref{ass:npmle}, and $\mathbb{E}_{G_0}^{\mathcal{Y}_1}\hat{\mu}_{G_{0,n}}^2(y_1,s_1)$ is also sub-polynomial by definition of $G_{0,n}$ and Assumption~\ref{ass:subexp}. 
	Furthermore, $\mathbb{E}_{G_0}^{\mathcal{Y}}\mathcal{L}^c = o(1/n^3)$ by Assumption~\ref{ass:subexp} and definitions of $\bar{\mu}_n$ and $\bar{\sigma}_n$. 
	Combining the previous two displays, we have that the first term of \eqref{eq:trunc.1}
	\begin{equation}
		\mathbb{E}_{G_0}^{\mathcal{Y}}\Delta_{\mu,\hat{G},G_{0,n}}(y_1,s_1)
		\leq
		\mathbb{E}_{G_{0,n}}^{\mathcal{Y}}\Delta_{\mu,\hat{G},G_{0,n}}(y_1,s_1) + o(1/n).
		\label{eq:trunc.1.1}
	\end{equation}

	We now show that the second term of \eqref{eq:trunc.1} is $o(1/n)$. Define 
	\begin{equation}
	\begin{aligned}[b]
		E :=& \{|\mu| \leq \bar{\mu}_n \cap \sigma \leq \bar{\sigma}_n\} \\
		\tau :=& \mathbb{E}_{G_0} E^c  \\
		G_{0,n}^c :=& G_0( \cdot \mid E^c) \\ 
		\omega(y,s) :=& \frac{\tau f_{G_{0,n}^c}(y,s^2)}{f_{G_0}(y,s^2)}
	\end{aligned}
	\end{equation}
	As a result we have 
	\begin{equation}
	\begin{aligned}[b]
		G_{0,n} =& G_0( \cdot \mid E) \\ 
		G_0 =& (1-\tau) G_{0,n} + \tau G_{0,n}^c \\
		f_{G_0} =& (1-\tau) f_{G_{0,n}} + \tau f_{G_{0,n}^c} \\
		\mathbb{E}_{G_0}[\mu\mid y,s] =& (1-\omega)  \mathbb{E}_{G_{0,n}}[\mu\mid y,s] + \omega \mathbb{E}_{G_{0,n}^c}[\mu\mid y,s]
	\end{aligned}
	\end{equation}
	From the final equality above, we have 
	\begin{align}
		\mathbb{E}_{G_0}[\mu\mid y,s] - \mathbb{E}_{G_{0,n}}[\mu\mid y,s] 
		=&
		\omega ( \mathbb{E}_{G_{0,n}^c}[\mu\mid y,s] - \mathbb{E}_{G_{0,n}}[\mu\mid y,s] ) \nonumber\\
		\implies 
		\left\{ \mathbb{E}_{G_0}[\mu\mid y,s] - \mathbb{E}_{G_{0,n}}[\mu\mid y,s]  \right\}^2
		\leq& 2 \omega^2 \left\{  (\mathbb{E}_{G_{0,n}^c}[\mu\mid y,s])^2 + (\mathbb{E}_{G_{0,n}}[\mu\mid y,s])^2 \right\} \nonumber\\
		\leq& 
		2 \omega^2  \left\{ \mathbb{E}_{G_{0,n}^c}[\mu^2\mid y,s] + \mathbb{E}_{G_{0,n}}[\mu^2\mid y,s] \right\} \nonumber\\
		=& 
		2 \omega \cdot \omega  \left\{ \mathbb{E}_{G_{0,n}^c}[\mu^2\mid y,s] - \mathbb{E}_{G_{0,n}}[\mu^2\mid y,s] + 2 \mathbb{E}_{G_{0,n}}[\mu^2\mid y,s]  \right\} \nonumber\\
		=& 
		2 \omega  \left\{ \mathbb{E}_{G_0}[\mu^2\mid y,s] - \mathbb{E}_{G_{0,n}}[\mu^2\mid y,s]  + \omega \cdot 2 \mathbb{E}_{G_{0,n}}[\mu^2\mid y,s]  \right\} \label{eq:trunc.2} 
	\end{align}
	Since $\mathbb{E}_{G_{0,n}}[\mu^2\mid y,s]\asymp v_n$ for a sub-polynomial $v_n$ (by definition of $G_{0,n}$), then applying Cauchy-Schwarz to \eqref{eq:trunc.2} yields
	\begin{equation}
	\begin{aligned}[b]
		\underbrace{\mathbb{E}_{G_0}\left\{ \mathbb{E}_{G_0}[\mu\mid y,s] - \mathbb{E}_{G_{0,n}}[\mu\mid y,s]  \right\}^2}_{=\mathbb{E}_{G_0}\Delta_{\mu,G_{0,n},G_{0}}(y,s)}
		\lesssim&
		[\mathbb{E}_{G_0}\omega^2]^{1/2} \cdot 
		\left[\mathbb{E}_{G_0} \left\{ (\mathbb{E}_{G_0}[\mu^2\mid y,s])  + v_n\right\}^2\right]^{1/2} \\
		\lesssim&
		[\mathbb{E}_{G_0}\omega]^{1/2} \cdot 
		\left[\mathbb{E}_{G_0} (\mathbb{E}_{G_0}[\mu^4\mid y,s])  + v_n \right]^{1/2} \\
		=& [\mathbb{E}_{G_0}\omega]^{1/2} \cdot 
		[\mathbb{E}_{G_0}[\mu^4]  + v_n ]^{1/2} \\
		\lesssim& \tau^{1/2}  \cdot 
		[\mathbb{E}_{G_0}[\mu^4]  + v_n ]^{1/2}
		\;= o(1/n)
	\end{aligned}
	\end{equation}
	where the final equality follows since $\mathbb{E}_{G_0}[\mu^4]$ is finite by Assumption~\ref{ass:subexp}, and $\tau = o(1/n^3)$ by Assumption~\ref{ass:subexp} and definitions of $\bar{\mu}_n$ and $\bar{\sigma}_n$.
\end{proof}

The following proves that the MSE outside of the region $\mathcal{D}\cap\mathcal{A}$ is $o(1/n)$.
\begin{lemma}[MSE outside of $\mathcal{D}\cap\mathcal{A}$]
	\label{lem:complement}
	$\mathbb{E}_{G_{0,n}} \left[\Delta_{\mu,\hat{G},G_{0,n}}(y_1,s_1)
	[\mathcal{D}\cap\mathcal{A}]^c\right]=o(1/n)$.
\end{lemma}
\begin{proof}
	We use $\Delta$ as shorthand for $\Delta_{\mu,\hat{G},G_{0,n}}(y_1,s_1)$. 
	% We first show 
	% \begin{equation}
	% 	\mathbb{E}_{G_0}[ \Delta \mathcal{L}^c ] \leq 
	% 	[\mathbb{E}_{G_0}\Delta^2]^{1/2}[\mathbb{E}_{G_0}\mathcal{L}^c]^{1/2}
	% 	\lesssim v_n^{1/2} \tfrac{1}{n^{3/2}} 
	% 	= o(\tfrac{1}{n}).
	% 	\label{eq:complement.1}
	% \end{equation}
	% First note $\mathcal{L}$ is composed of the intersection of regions $\{ \vee_{i\leq n}|\mu_i| \leq \bar{\mu}_n\}$ and $\{ \vee_{i\leq n}|\sigma_i| \leq \bar{\sigma}_n\}$. Now
	% \begin{align*}
	% 	\mathbb{E}_{G_0}\{ \vee_{i\leq n}|\mu_i| > \bar{\mu}_n\}
	% 	\leq&
	% 	\sum_i\mathbb{E}_{G_0}\{ |\mu_i| > \bar{\mu}_n\}
	% 	\\ = &
	% 	C_1 n \exp\{ - \lambda_1L^{\alpha_1} (\log n)^{\gamma_1\alpha_1} \}
	% 	\\=&
	% 	C_1 \exp \{ \log n^{-\lambda_1L^{\alpha_1} + 1} \} 
	% 	\\=&
	% 	C_1 \exp \{ \log n^{-3} \} 
	% 	= O( \tfrac{1}{n^3} )
	% \end{align*}
	% where $\gamma_1\alpha_1=1$ and $-\lambda_1L^{\alpha_1}\leq-4$ follow by definitions of $(\gamma_1,L)$. 
	% Similarly, we get 
	% $\mathbb{E}_{G_0}\{ \vee_{i\leq n}|\sigma_i| > \bar{\sigma}_n\}= O( \tfrac{1}{n^3} )$ so that 
	% $\mathbb{E}_{G_0}\mathcal{L}^c = O(\tfrac{1}{n^3} )$. 

	% Further note that $\mathbb{E}_{G_0}\Delta^2$ is sub-polynomial since $G_0$ is sub-exponential and the sampling distribution of the observations is Gaussian. 
	% As a result, \eqref{eq:complement.1} obtains. 
	% By the above, we restrict outselves to the region $\mathcal{L}$. This implies that $\vee_{i\leq n} |\mu_i| \leq \bar{\mu}_n$ and $\vee_{i\leq n}\sigma_i \leq \bar{\sigma}_n$.  
	By the sub-Gaussianity of $(y_i,s_i^2)$ conditional on $(\mu_i,\sigma_i)$, we have 
	\begin{equation}
	\begin{aligned}[b]
		\mathbb{E}_{G_{0,n}}\left\{ \max_{i\leq n}y_i > \bar{y}_n \right\} \leq& n \exp\left( -\frac{1}{2}\left[\frac{\bar{y}_n-\bar{\mu}_n}{\bar{\sigma}_n}\right]^2\right)
		\leq n\exp\left( -\frac{1}{2} \log n^{2M} \right) = \frac{1}{n^{M-1}} \\
		\mathbb{E}_{G_{0,n}}\left\{ \max_{i\leq n}s_i > \bar{s}_n \right\} \leq & n\exp\left( -k\left[\frac{\bar{s}_n}{\bar{\sigma}_n} - 1\right]^2 \right) \leq n\exp\left(-k\log n^{M/k} \right) = \frac{1}{n^{M-1}}, 
	\end{aligned}
	\end{equation}
	so that $\mathbb{E}_{G_{0,n}}[\mathcal{D}^c] = O(n^{1-M})$. As a result 
	\begin{equation}
	\begin{aligned}[b]
		\mathbb{E}_{G_{0,n}}[\Delta \mathcal{D}^c] 
		\leq& 
		[\mathbb{E}_{G_{0,n}}\Delta^2]^{1/2} \cdot [\mathbb{E}_{G_{0,n}}\mathcal{D}^c]^{1/2}
		\\\leq&
		C
		\left[\mathbb{E}_{G_{0,n}}\max_{i\leq n} y_i^4\right]^{1/2} \cdot O(n^{1-M})
		\;=_{(1)} o(n^{-1})
	\end{aligned}
	\end{equation}
	where $=_{(1)}$ follows because $\mathbb{E}_{G_{0,n}}[\max_{i\leq n} y_i^4]$ is at most sub-polynomial, and $M\geq 2$.
	
	By the above, we further restrict ourselves to $\mathcal{D}$. 
	We next show
	$\mathbb{E}_{G_{0,n}}[\Delta\mathcal{D}\cdot\mathcal{A}^c] = o(n^{-1})$:
	since we are in $\mathcal{D}$, then $\hat{G}\in\mathcal{G}(\tfrac{1}{n},\tilde{\gamma}_1,\tilde{\gamma}_2,L,\underline{\sigma})$. Certainly $G_{0,n}$ also lies within the same class, since $\tilde{\gamma}_1\geq\gamma_1$ and $\tilde{\gamma}_2\geq\gamma_2$. Thus we have by Lemma~\ref{lem:vdv.thm.4.1}
	\begin{equation}
		\mathbb{E}_{G_{0,n}}\{ d(f_{\hat{G}},f_{G_{0,n}}) > \varepsilon_n \} \lesssim \exp(-C [\log n]^3 ) 
		= o(n^{-2}) 
	\end{equation}
	and as a result
	$
		\mathbb{E}_{G_{0,n}}[\Delta \mathcal{D} \cdot \mathcal{A}^c] 
		= v_n \cdot o(n^{-2})
		= o(n^{-1})
	$,
	because $\Delta $ is at most sub-polynomial in region $\mathcal{D}$. 
\end{proof}

\subsection{MSEs within region $\mathcal{D}$}

This section provides adaptations of results from \cite{Jiang2009} that reduce the MSE from using a density estimate in place of the true density $f_{G_{0,n}}$ to the Hellinger distance between the density estimate and the true density. 
In this regard, Theorems~\ref{thm:jz.p3.sigma2}, \ref{thm:jz.p3.sigma} and \ref{thm:jz.p3.mu} are adaptations of Proposition 3 of \cite{Jiang2020} to the setting of unknown heteroskedasticity for the estimation objectives of $\{\sigma_i^2\}_{i\in[n]}$, $\{\sigma_i\}_{i\in[n]}$ and $\{\mu_i\}_{i\in[n]}$ respectively.
Lemmas~\ref{lem:jz.lem.a1.sub} is used in the proofs of all three theorems, whereas Lemmas~\ref{lem:jz.lem.a1} and \ref{lem:jz.lem.a1.aux} are specifically for handling the additional differentiation term required in the proof of Theorem~\ref{thm:jz.p3.mu}.
\begin{theorem}%[JZ Proposition 3 For $\sigma^2$]
	\label{thm:jz.p3.sigma2}
	For two distributions $G$ and $G_1$,
	we have
	\begin{align*}
		\mathbb{E}_{G}^{(y,s)}
		\left\{\big[
		\hat{\sigma}_{G_1}^2(y,s;\rho_n)
		- 
		\hat{\sigma}_{G}^2(y,s;\rho_n)
		\big]^2\mathcal{D}\right\}
		\leq& 
		\big(\bar{\sigma}_{G_1}^4 + \bar{\sigma}_{G}^4\big)
		2d^2(f_{G_1},f_{G})
		+
		4k^2
		\tfrac{\bar{s}_n^4}{k-2}
		d^2\big(f_{G_1},f_{G}\big).
	\end{align*}
\end{theorem}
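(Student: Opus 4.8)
The plan is to write the difference of the two estimators as the sum of a ``numerator-difference'' term and a ``denominator-difference'' term, and to bound each by a bounded multiple of the squared Hellinger distance $d^2(f_{G_1},f_{G_0})$. Writing $N_G(y,s):=k\int_{s^2}^\infty[s^2/t]^{k-1}f_G(y,t)\,dt$ and $D_G(y,s):=f_G(y,s^2)\vee\rho_n$, so that $\hat\sigma^2(y,s\mid G,\rho_n)=N_G/D_G$, I would use the identity
\[
\frac{N_{G_1}}{D_{G_1}}-\frac{N_{G_0}}{D_{G_0}}
=\frac{N_{G_1}-N_{G_0}}{D}
+\hat\sigma^2(y,s\mid G_\ast,\rho_n)\,\frac{D_{G_0}-D_{G_1}}{D},
\]
where the common denominator $D$ and index $\ast$ are chosen pointwise to be $D_{G_0}$ (and $\ast=1$) on the event $\{D_{G_1}\le D_{G_0}\}$ and $D_{G_1}$ (and $\ast=0$) on its complement, i.e. always dividing by the larger truncated density. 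The two pieces produce the two terms on the right-hand side.

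For the numerator piece I would factor $f_{G_1}-f_{G_0}=(\sqrt{f_{G_1}}-\sqrt{f_{G_0}})(\sqrt{f_{G_1}}+\sqrt{f_{G_0}})$ inside the $t$-integral defining $N_{G_1}-N_{G_0}$ and apply Cauchy--Schwarz in $t$, producing a product of $I_1(y,s):=\int_{s^2}^\infty[s^2/t]^{2(k-1)}(\sqrt{f_{G_1}(y,t)}+\sqrt{f_{G_0}(y,t)})^2\,dt$ and $I_2(y,s):=\int_{s^2}^\infty(\sqrt{f_{G_1}(y,t)}-\sqrt{f_{G_0}(y,t)})^2\,dt$. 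Passing to the ratio $\gamma(t\mid\sigma)/\gamma(s^2\mid\sigma)$ and discarding the exponential factor gives $\int_{s^2}^\infty[s^2/t]^{2(k-1)}\gamma(t\mid\sigma)\,dt\le\tfrac{s^2}{k-2}\gamma(s^2\mid\sigma)$, which is where the $\tfrac{1}{k-2}$ and, via $s^2\le\bar s_n^2$ on $\mathcal R$, the first factor $\bar s_n^2$ appear; integrating against $G_\ell$ yields $I_1\lesssim\tfrac{\bar s_n^2}{k-2}(f_{G_1}(y,s^2)+f_{G_0}(y,s^2))$. Because $D$ is the larger truncated density, $f_{G_1}+f_{G_0}\le 2D$ and $f_{G_0}/D\le1$, so the prefactor $k^2 I_1 f_{G_0}/D^2$ collapses to an absolute constant times $\tfrac{\bar s_n^2}{k-2}$, leaving $\int_{\mathcal R}I_2$ against $dy\,d(s^2)$. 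A Tonelli swap of the $s^2$- and $t$-integrals contributes the second factor $\bar s_n^2$ (the inner $s^2$-region has length $\le\bar s_n^2$) and reassembles the full $\iint(\sqrt{f_{G_1}}-\sqrt{f_{G_0}})^2\,dy\,dt=d^2(f_{G_1},f_{G_0})$, producing the $4k^2\tfrac{\bar s_n^4}{k-2}d^2$ term.

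For the denominator piece I would first record that truncation can only shrink the estimator, $\hat\sigma^2(y,s\mid G_\ell,\rho_n)=N_{G_\ell}/D_{G_\ell}\le N_{G_\ell}/f_{G_\ell}=\mathbb E_{G_\ell}[\sigma^2\mid y,s]\le\bar\sigma_{G_\ell}^2$, so the multiplicative factor $\hat\sigma^2(y,s\mid G_\ast,\rho_n)$ is bounded by $\bar\sigma_{G_1}^2$ or $\bar\sigma_{G_0}^2$ according to the region. Since $x\mapsto\sqrt{x\vee\rho_n}$ is $1$-Lipschitz, $|\sqrt{D_{G_0}}-\sqrt{D_{G_1}}|\le|\sqrt{f_{G_0}}-\sqrt{f_{G_1}}|$, and on the region where $D$ is the larger density we have $f_{G_0}\le D$ and $(\sqrt{D_{G_0}}+\sqrt{D_{G_1}})^2\le 4D$; hence $\tfrac{(D_{G_0}-D_{G_1})^2}{D^2}f_{G_0}\lesssim(\sqrt{D_{G_0}}-\sqrt{D_{G_1}})^2$, which integrates to at most $d^2(f_{G_1},f_{G_0})$. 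Collecting the two regions gives the $(\bar\sigma_{G_1}^4+\bar\sigma_{G_0}^4)\,2d^2$ term.

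The step I expect to be the genuine obstacle, and the place where the truncation and the region-dependent choice of denominator do the real work, is the denominator piece. If one fixes a single denominator (say $D_{G_0}$) globally, the factor $(\sqrt{D_{G_0}}+\sqrt{D_{G_1}})^2/D_{G_0}\sim D_{G_1}/D_{G_0}$ is unbounded exactly where $f_{G_1}\gg f_{G_0}$, so the naive bound fails; the same pathology reappears with reversed roles in the numerator piece through the factor $f_{G_1}/D$. Dividing throughout by the larger of the two truncated densities is precisely what keeps every such ratio below one, and the uniform bound $\hat\sigma^2\le\bar\sigma^2$ (valid even on the truncated set, since truncation only enlarges the denominator) is what lets the bounded difference be absorbed against $f_{G_0}$ where the two densities are most discrepant. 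Matching the exact constants in the statement is then routine bookkeeping, following the adaptation of Proposition~3 of \cite{Jiang2020} recorded in Lemma~\ref{lem:jz.lem.a1.sub}.
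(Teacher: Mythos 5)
Your proposal is correct and follows essentially the same route as the paper's proof: the paper likewise splits the difference into a numerator-difference term and denominator-difference terms over a pointwise-stabilized common denominator (it uses the sum $f_{G_1}(y,s^2)\vee\rho_n + f_{G_0}(y,s^2)\vee\rho_n$ via its weight $w^*$ where you use the maximum, which agree up to a factor of $2$), bounds the denominator piece by $\bar{\sigma}_{G_\ell}^4$ times a Hellinger-type integral exactly as you do, and handles the numerator piece by Cauchy--Schwarz in $t$ together with the same gamma-kernel monotonicity $[s^2/t]^{k-1}\gamma(t\mid k,\theta)\le \gamma(s^2\mid k,\theta)$ for $t>s^2$, which produces the factor $\bar{s}_n^2/(k-2)$ from $\int_{s^2}^{\infty}[s^2/t]^{k-1}dt$ and the second $\bar{s}_n^2$ from integrating $s^2$ over $\mathcal{R}$. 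The remaining discrepancies are pure bookkeeping of absolute constants (your splitting inflates them, but so does the paper's own triangle-inequality step if tracked literally), so nothing substantive needs to change.
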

\begin{proof}
	Recall that by definition of $\hat{\sigma}_{G}^2(y,s;\rho_n)$,
	\begin{align}
		&\mathbb{E}_{G}^{(y,s)}
		\left\{\big[
		\hat{\sigma}_{G_1}^2(y,s;\rho_n)
		- 
		\hat{\sigma}_{G}^2(y,s;\rho_n)
		\big]^2\mathcal{D}\right\}
		\nonumber\\	
		=&
		\mathbb{E}_{G}^{(y,s)}
		\bigg[
		\frac{k}{f_{G_1}(y,s^2)\vee \rho_n} \cdot \int_{s^2}^\infty \left[\frac{s^2}{t}\right]^{k-1} f_{G_1}(y,t) dt
		- 
		\frac{k}{f_{G}(y,s^2)\vee \rho_n} \cdot \int_{s^2}^\infty \left[\frac{s^2}{t}\right]^{k-1} f_{G}(y,t) dt
		\bigg]^2
		\mathcal{D}.
		\label{eq:jz.p3.sigma2.1}
	\end{align}
	Now define 
	\begin{equation}
		w^* := [f_{G_1}(y,s^2)\vee\rho_n + f_{G}(y,s^2)\vee\rho_n]^{-1}
		\label{eq:defn.wstar}
	\end{equation}
	and note that for $\tilde{G}=G_1$ or $G$, algebraic manipulation gives $\tfrac{k}{f_{\tilde{G}}\vee\rho_n} - 2kw^* = (f_{G}\vee\rho_n - f_{G_1}\vee\rho_n) w^* / (f_{\tilde{G}}\vee\rho_n)$ (up to sign), and so
	\begin{align}
		&\mathbb{E}_{G}^{(y,s)} \bigg[
		\left(\tfrac{k}{f_{\tilde{G}}(y,s^2)\vee\rho_n} - 2kw^*\right)
		\int_{s^2}^\infty \left[\tfrac{s^2}{t}\right]^{k-1} f_{\tilde{G}}(y,t)dt
		\bigg]^2 \mathcal{D}
		\nonumber\\
		\leq_{(1)}&
		\bar{\sigma}_{\tilde{G}}^4
		\cdot
		\mathbb{E}_{G}^{(y,s)} \big[
		(f_{G}(y,s^2)  - f_{G_1}(y,s^2) )w^*
		\big]^2 \mathcal{D}
		\label{eq:jz.p3.sigma2.triangle.1}
	\end{align}
	where $\leq_{(1)}$ bounds $\hat{\sigma}^2_{\tilde{G}}(y,s\mid\rho_n) \leq \bar{\sigma}_{\tilde{G}}^2$ and uses $|a\vee\rho_n - b\vee\rho_n| \leq |a-b|$ for any $a,b,\rho_n>0$.
	
	Inserting \eqref{eq:jz.p3.sigma2.triangle.1} into \eqref{eq:jz.p3.sigma2.1} and bounding using the triangle inequality gives 
	\begin{align}
		&\mathbb{E}_{G}^{(y,s)}
		[\hat{\sigma}_{G_1}^2(y,s;\rho_n) - \hat{\sigma}_{G}^2(y,s;\rho_n)]^2 \mathcal{D}
		\nonumber\\
		\leq&
		\big(\bar{\sigma}_{G_1}^4 + \bar{\sigma}_{G}^4\big)
		\mathbb{E}_{G}^{(y,s)} \big[
		(f_{G}  - f_{G_1} )w^* 
		\big]^2
		\nonumber\\
		+&
		4k^2
		\mathbb{E}_{G}^{(y,s)}
		\left(
		w^*
		\int_{s^2}^\infty 
		\left[\tfrac{s^2}{t}\right]^{k-1}
		\left[f_{G_1}(y,t) - f_{G}(y,t)\right] dt
		\right)^2 \mathcal{D}. 
		\label{eq:jz.p3.sigma2.triangle.2}
	\end{align}
	We bound the two terms on the RHS of \eqref{eq:jz.p3.sigma2.triangle.2} in turn. For the first term, we have
	\begin{align}
		&\mathbb{E}_{G}^{(y,s)} \big[
		(f_{G}(y,s^2)  - f_{G_1}(y,s^2) )w^* 
		\big]^2 
		\nonumber\\
		=&
		\mathbb{E}_{G}^{(y,s)} \bigg[
		(f_{G}^{1/2}(y,s^2)  - f_{G_1}^{1/2}(y,s^2) ) (f_{G}^{1/2}(y,s^2)  + f_{G_1}^{1/2}(y,s^2) ) w^* 
		\bigg]^2
		\nonumber\\
		=&
		\int 
		\left[f_{G}^{1/2}(y,s^2)  - f_{G_1}^{1/2}(y,s^2) \right]^2
		\cdot 
		\left[ f_{G}^{1/2}(y,s^2)  + f_{G_1}^{1/2}(y,s^2) \right]^2 w^* 
		\cdot 
		w^* f_{G}(y,s^2)d(y,s^2)
		\nonumber\\
		\leq&
		d^2(f_{G_1},f_{G}) \cdot 2
	\end{align}
	because $w^*f_{G}(y,s^2) \leq 1$ and $\big[f_{G}(y,s^2)^{1/2} + f_{G_1}^{1/2}(y,s^2)\big]^2w^* \leq 2$ for all $(y,s^2)$.
	
	For the second term on the RHS of the inequality \eqref{eq:jz.p3.sigma2.triangle.2}:
	\begin{align}
		&\mathbb{E}_{G}^{(y,s)}
		\left(
		w^*
		\int_{s^2}^\infty 
		\left[\tfrac{s^2}{t}\right]^{k-1}
		\left[f_{G_1}(y,t) - f_{G}(y,t)\right]dt
		\right)^2 \mathcal{D}
		\nonumber\\
		=&
		\int_{\mathcal{D}} 
		\left(
		w^*
		\int_{s^2}^\infty 
		\left[\tfrac{s^2}{t}\right]^{k-1}
		\left[f_{G_1}(y,t) - f_{G}(y,t)\right]dt
		\right)^2 f_{G}(y,s^2)d(y,s^2) 
		\nonumber\\
		\leq_{(1)}&
		\int_{\mathcal{D}}  
		\frac{
			\big(\int_{s^2}^\infty 
			\big[\tfrac{s^2}{t}\big]^{k-1}
			\big[f_{G_1}(y,t) - f_{G}(y,t)\big]dt\big)^2
		}{
			f_{G_1}(y,s^2)\vee\rho_n + f_{G}(y,s^2)\vee\rho_n
		}
		d(y,s^2)
		\nonumber\\
		=&
		\int_{\mathcal{D}}  
		\left\{
		\int_{s^2}^\infty 
		\big[\tfrac{s^2}{t}\big]^{k-1}
		\left[ 
		\tfrac{
			f_{G_1}(y,t)\vee\rho_n + f_{G}(y,t)\vee\rho_n
		}{
			f_{G_1}(y,s^2)\vee\rho_n + f_{G}(y,s^2)\vee\rho_n
		}
		\right]^{1/2}
		\left[
		\tfrac{
			f_{G_1}(y,t) - f_{G}(y,t)
		}{(f_{G_1}(y,t)\vee\rho_n + f_{G}(y,t)\vee\rho_n)^{1/2}}
		\right]
		dt
		\right\}^2
		d(y,s^2)
		\nonumber\\
		\leq_{(2)}&
		\int_{\mathcal{D}}  
		\left\{
		\int_{s^2}^\infty 
		\big[\tfrac{s^2}{t}\big]^{k-1}\cdot \big[\tfrac{s^2}{t}\big]^{k-1}
		\left[ 
		\tfrac{
			f_{G_1}(y,t)\vee\rho_n + f_{G}(y,t)\vee\rho_n
		}{
			f_{G_1}(y,s^2)\vee\rho_n + f_{G}(y,s^2)\vee\rho_n
		}
		\right]
		dt
		\cdot 
		\int_{s^2}^\infty 
		\left[
		\tfrac{
			f_{G_1}(y,t) - f_{G}(y,t)
		}{(f_{G_1}(y,t)\vee\rho_n + f_{G}(y,t)\vee\rho_n)^{1/2}}
		\right]^2
		dt
		\right\}
		d(y,s^2)
		\label{eq:jz.p3.sigma2.triangle.3}
	\end{align}
	where $\leq_{(1)}$ follows by $\tfrac{f_{G}(y,s^2)}{f_{G_1}(y,s^2)\vee\rho_n+f_{G}(y,s^2)\vee\rho_n}\leq 1$ $\forall (y,s^2)$ and $\leq_{(2)}$ by Cauchy-Schwarz.
	
	We now focus on the first term within the first inner integral of \eqref{eq:jz.p3.sigma2.triangle.3}. Substituting the Gamma density, the factor $\gamma(t\mid k,\theta)/\gamma(s^2\mid k,\theta) = [t/s^2]^{k-1}\exp(-(t-s^2)/\theta)$ cancels the $[s^2/t]^{k-1}$ in front, leaving an integrand bounded by $\exp(-(t-s^2)/\theta) \leq 1$ for $t>s^2$. Hence
	\begin{equation}
		\{t>s^2\}\big[\tfrac{s^2}{t}\big]^{k-1}
		\tfrac{f_{\tilde{G}}(y,t)}{f_{\tilde{G}}(y,s^2)}
		\leq 1
		\quad\text{for } \tilde{G}\in\{G_1,G\}.
		\label{eq:jz.p3.sigma2.triangle.4}
	\end{equation}
	Now plugging \eqref{eq:jz.p3.sigma2.triangle.4} into \eqref{eq:jz.p3.sigma2.triangle.3}
	gives an upper bound on \eqref{eq:jz.p3.sigma2.triangle.3} as 
	\begin{align}
		&\int_{\mathcal{D}}
		\left\{
		\int_{s^2}^\infty 
		\big[ \tfrac{s^2}{t} \big]^{k-1} 2 
		dt
		\cdot 
		\int_{s^2}^\infty 
		\left[
		\tfrac{
			f_{G_1}(y,t) - f_{G}(y,t)
		}{(f_{G_1}(y,t)\vee\rho_n + f_{G}(y,t)\vee\rho_n)^{1/2}}
		\right]^2
		dt
		\right\}
		d(y,s^2)
		\nonumber\\
		\leq&
		\int_{s\leq\bar{s}_n} \int_{|y|\leq\bar{y}_n}
		\left\{
		\tfrac{2}{k-2}s^2
		\cdot 
		\int_{\mathbb{R}_+} 
		\left[
		\tfrac{
			f_{G_1}(y,t) - f_{G}(y,t)
		}{(f_{G_1}(y,t)\vee\rho_n + f_{G}(y,t)\vee\rho_n)^{1/2}}
		\right]^2
		dt
		\right\}
		dyds^2
		\nonumber\\
		=&
		\tfrac{2}{k-2}
		\int_{s\leq\bar{s}_n} 
		s^2
		ds^2
		\cdot 
		\int_{|y|\leq\bar{y}_n} 
		\int_{\mathbb{R}_+} 
		\left[
		\tfrac{
			f_{G_1}(y,t) - f_{G}(y,t)
		}{(f_{G_1}(y,t)\vee\rho_n + f_{G}(y,t)\vee\rho_n)^{1/2}}
		\right]^2
		dtdy
		\nonumber\\
		=&
		\tfrac{1}{k-2}
		\bar{s}_n^4
		\cdot 
		\int_{|y|\leq\bar{y}_n} 
		\int_{\mathbb{R}_+} 
		\left[
		\tfrac{
			f_{G_1}(y,t) - f_{G}(y,t)
		}{(f_{G_1}(y,t)\vee\rho_n + f_{G}(y,t)\vee\rho_n)^{1/2}}
		\right]^2
		dtdy 
		\nonumber\\
		\leq &
		\tfrac{1}{k-2}
		\bar{s}_n^4
		\cdot d^2\big(f_{G_1},f_{G}\big)
		\label{eq:jz.p3.sigma2.triangle.6}
	\end{align}
	and the last inequality follows by Lemma~\ref{lem:jz.lem.a1.sub}.
\end{proof}

\begin{theorem}%[JZ Proposition 3 For $\sigma$]
	\label{thm:jz.p3.sigma}
	%	{\color{red} requires conditions for Lemma~\ref{lem:unif.bd.dens} to be satisfied AND requires the gaussian mixture structure of $f_G$, and requires the integration to be wrt $G$.}
	For two distributions $G$ and $G_1$, we have
	\begin{align*}
		&\mathbb{E}_{G}^{(y,s)}
		[\hat{\sigma}_{G_1}(y,s;\rho_n) - \hat{\sigma}_{G}(y,s;\rho_n)]^2 \mathcal{D}
		\nonumber\\
		\leq&
		\left\{
		\bar{\sigma}_{G_1}^2  +
		\bar{\sigma}_{G}^2
		\right\} 
		\cdot 2d^2\big(f_{G_1},f_{G}\big)
		+
		C \left[\bar{s}_n^2 \log n + \bar{s}_n^4 \right] \left[\tfrac{1}{n} + d^2\big(f_{G_1},f_{G}\big)\right]
	\end{align*}
\end{theorem}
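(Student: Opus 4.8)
The plan is to mirror the proof of Theorem~\ref{thm:jz.p3.sigma2}, since $\hat{\sigma}(y,s\mid G,\rho_n)$ and $\hat{\sigma}^2(y,s\mid G,\rho_n)$ share the same structure except for the additional factor $\tfrac{1}{\sqrt{k(t-s^2)}}$ inside the integral defining $\hat{\sigma}$. With $w^*$ as in \eqref{eq:defn.wstar}, I would add and subtract $2kw^*\int_{s^2}^\infty\tfrac{1}{\sqrt{k(t-s^2)}}[\tfrac{s^2}{t}]^{k-1}f_{\tilde{G}}(y,t)\,dt$ for $\tilde{G}\in\{G_1,G_0\}$ and apply the triangle inequality, splitting the mean-squared difference into a \emph{denominator-mismatch} term and an \emph{integral} term, exactly as in \eqref{eq:jz.p3.sigma2.triangle.2}. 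The denominator-mismatch term is handled verbatim as in Theorem~\ref{thm:jz.p3.sigma2}, now invoking the pointwise bound $\hat{\sigma}(y,s\mid\tilde{G},\rho_n)\le\bar{\sigma}_{\tilde{G}}$ (truncation only shrinks the posterior mean of $\sigma$), and yields the leading summand $(\bar{\sigma}_{G_1}^2+\bar{\sigma}_{G_0}^2)\,2d^2(f_{G_1},f_{G_0})$.

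The work is all in the integral term,
\[
4k^2\,\mathbb{E}_{G_0}^{(y,s)}\left(w^*\int_{s^2}^\infty\frac{1}{\sqrt{k(t-s^2)}}\Big[\tfrac{s^2}{t}\Big]^{k-1}\big[f_{G_1}(y,t)-f_{G_0}(y,t)\big]\,dt\right)^2\mathcal{R}.
\]
Here the new factor $\tfrac{1}{\sqrt{k(t-s^2)}}$ introduces an integrable singularity at $t=s^2$, and this is the main obstacle: the Cauchy--Schwarz split used in \eqref{eq:jz.p3.sigma2.triangle.3} no longer works directly, because after squaring the ``$A$'' factor one is left with $\int_{s^2}^\infty\tfrac{1}{k(t-s^2)}[\tfrac{s^2}{t}]^{k-1}dt$, which diverges logarithmically at the lower endpoint.

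To overcome this I would split the inner integral at $t=s^2+\delta_n$ with $\delta_n\asymp n^{-1}$ (the same order as $\rho_n$). On the \emph{far} piece $t\ge s^2+\delta_n$ the singular weight is bounded and the Cauchy--Schwarz argument of \eqref{eq:jz.p3.sigma2.triangle.3} applies after bounding $[\tfrac{s^2}{t}]^{k-1}f_{\tilde{G}}(y,t)\le f_{\tilde{G}}(y,s^2)$ as in \eqref{eq:jz.p3.sigma2.triangle.4}; the corresponding ``$\int A^2$'' factor is now $\tfrac{2}{k}\int_{s^2+\delta_n}^\infty\tfrac{1}{t-s^2}[\tfrac{s^2}{t}]^{k-1}dt\lesssim\log\tfrac{s^2}{\delta_n}+1\lesssim\log n$ uniformly on $\mathcal{R}$, so combining with Lemma~\ref{lem:jz.lem.a1.sub} (which sends the ``$\int B^2$'' factor to $d^2(f_{G_1},f_{G_0})$) and integrating the resulting $s^2$-factor over $\{s\le\bar{s}_n\}$ gives a bound of order $\bar{s}_n^2\log n\cdot d^2(f_{G_1},f_{G_0})$. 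On the \emph{near} piece $s^2\le t\le s^2+\delta_n$ I would instead use a weighted Cauchy--Schwarz taking $\tfrac{dt}{\sqrt{k(t-s^2)}}$ as the measure, whose total mass is $\tfrac{2}{\sqrt{k}}\sqrt{\delta_n}$; applying $[\tfrac{s^2}{t}]^{k-1}f_{\tilde{G}}(y,t)\le f_{\tilde{G}}(y,s^2)$ once more, together with $w^*f_{\tilde{G}}(y,s^2)\le 1$ and $\int_\mathcal{R}f_{\tilde{G}}\le 1$, collapses this contribution to order $k\delta_n\asymp n^{-1}$.

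Finally I would collect the three pieces. Since $\bar{s}_n^2\asymp\bar{\sigma}_{G_0}^2\log n$, the far-piece factor $\bar{s}_n^2\log n$ and the coefficient $\bar{s}_n^4$ are of the same logarithmic order, so both the far ($d^2$) contribution and the near ($n^{-1}$) contribution are absorbed into the single bracket $[\bar{s}_n^2\log n+\bar{s}_n^4]\,[n^{-1}+d^2(f_{G_1},f_{G_0})]$, which together with the leading $(\bar{\sigma}_{G_1}^2+\bar{\sigma}_{G_0}^2)\,2d^2$ term gives the claimed inequality. The delicate point throughout is the calibration of $\delta_n$: it must be small enough that the near-singularity region contributes only $O(n^{-1})$, yet not so small that the logarithm $\log(s^2/\delta_n)$ controlling the far region exceeds the target order $\log n$; the choice $\delta_n\asymp n^{-1}$ threads both requirements simultaneously.
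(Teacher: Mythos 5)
Your proposal is correct, and it reaches the stated bound by a genuinely different treatment of the singular kernel, which is precisely the new difficulty of this theorem relative to Theorem~\ref{thm:jz.p3.sigma2}. The skeleton you share with the paper: the decomposition into a denominator-mismatch term and an integral term as in \eqref{eq:jz.p3.sigma2.triangle.2} (giving the leading $(\bar{\sigma}_{G_1}^2+\bar{\sigma}_{G_0}^2)\,2d^2$ summand), the monotonicity bound $\{t>s^2\}\big[\tfrac{s^2}{t}\big]^{k-1}f_{\tilde{G}}(y,t)\le f_{\tilde{G}}(y,s^2)$ from \eqref{eq:jz.p3.sigma2.triangle.4}, and the reduction of the normalized-difference factor to $d^2(f_{G_1},f_{G_0})$ via Lemma~\ref{lem:jz.lem.a1.sub}. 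Where you diverge is the handling of $(t-s^2)^{-1/2}$: the paper never splits the inner $t$-integral, but instead distributes the singularity \emph{asymmetrically} across the two Cauchy--Schwarz factors, writing $(t-s^2)^{-1}=(t-s^2)^{-2\gamma_n}(t-s^2)^{-2(1/2-\gamma_n)}$ with $\gamma_n=\tfrac{1}{2}-\tfrac{1}{\log n}$ in \eqref{eq:jz.p3.sigma.triangle.1}; the power $2\gamma_n<1$ keeps the ``$A$'' factor integrable at $t=s^2$ and bounded by $C[\log n+s^2]$ via a sub-polynomial cutoff $l_n$ in \eqref{eq:jz.p3.sigma.triangle.2}, while the residual power $(t-s^2)^{-2/\log n}$ rides along with the Hellinger-type factor, which is then cut at $\delta_n=n^{-1}$ and controlled by the uniform density bounds of Lemma~\ref{lem:unif.bd.dens} on the near piece \eqref{eq:jz.p3.sigma.triangle.4} and by $\delta_n^{-2/\log n}=e^{2}=O(1)$ on the far piece \eqref{eq:jz.p3.sigma.triangle.5}. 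You instead cut the $t$-integral at $s^2+\delta_n$ \emph{first}: on the far piece the full weight $(t-s^2)^{-1}$ sits in the ``$A$'' factor, now bounded by $\log_+(s^2/\delta_n)+C\lesssim\log n$ on $\mathcal{R}$ (your $\log(s^2/\delta_n)+1$ should be read with the positive part to cover $s^2<\delta_n$, a cosmetic fix), and on the near piece the weighted Cauchy--Schwarz with measure $dt/\sqrt{k(t-s^2)}$, combined with $w^*\big(f_{G_1}(y,s^2)+f_{G_0}(y,s^2)\big)\le 1$, collapses the contribution to $O(\delta_n)=O(n^{-1})$ pointwise, with no density-regularity input at all. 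Both routes land inside the same envelope $C[\bar{s}_n^2\log n+\bar{s}_n^4][n^{-1}+d^2]$, using $\bar{s}_n^2\log n\gtrsim 1$ to absorb your bare $n^{-1}$ term. What your version buys is economy (no $\gamma_n$ exponent bookkeeping, no auxiliary sequence $l_n$, no appeal to Lemma~\ref{lem:unif.bd.dens}) at the price of a harmless factor of $2$ from splitting the square; the paper's version keeps the Hellinger factor intact over the whole half-line, which is the same device it reuses for the $\hat{\mu}$ bound.
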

\begin{proof}
	Following through the arguments as in \eqref{eq:jz.p3.sigma2.1} to \eqref{eq:jz.p3.sigma2.triangle.2} gives us 
	\begin{align}
		&\mathbb{E}_{G}^{(y,s)}
		[\hat{\sigma}_{G_1}(y,s;\rho_n) - \hat{\sigma}_{G}(y,s;\rho_n)]^2 \mathcal{D}
		\nonumber\\
		\leq&
		\left\{
		\bar{\sigma}_{G_1}^2  +
		\bar{\sigma}_{G}^2
		\right\}
		\cdot 2d^2\big(f_{G_1},f_{G}\big)
		\nonumber\\
		+&
		4k^2
		\mathbb{E}_{G}^{(y,s)}
		\left(
		w^*
		\int_{s^2}^\infty 
		\tfrac{1}{\sqrt{k(t-s^2)}}
		\left[\tfrac{s^2}{t}\right]^{k-1}
		\left[f_{G_1}(y,t) - f_{G}(y,t)\right] dt
		\right)^2 \mathcal{D}
		\label{eq:jz.p3.sigma.triangle.0}
	\end{align}
	And further following the arguments in \eqref{eq:jz.p3.sigma2.triangle.3} to \eqref{eq:jz.p3.sigma2.triangle.4} (splitting $(t-s^2)^{-1} = (t-s^2)^{-2\gamma_n} \cdot (t-s^2)^{-2(1/2-\gamma_n)}$ for $\gamma_n:=\frac{1}{2}-\frac{1}{\log n}$, applying Cauchy-Schwarz, bounding the ratio of densities by 1, and interchanging the order of integration) gives us
	\begin{align}
		&
		\mathbb{E}_{G}^{(y,s)}
		\left(
		w^*
		\int_{s^2}^\infty
		\left[\tfrac{1}{(t-s^2)^{1/2}}\right]
		\left[\tfrac{s^2}{t}\right]^{k-1}
		\left[f_{G_1}(y,t) - f_{G}(y,t)\right] dt
		\right)^2 \mathcal{D}
		\nonumber\\
		\leq&
		\int_{ s\leq\bar{s}_n}
		\bigg\{
		\int_{s^2}^\infty
		\tfrac{1}{(t-s^2)^{2\gamma_n}} \cdot
		\big[\tfrac{s^2}{t}\big]^{k-1}
		dt
		\cdot
		\int_{|y|\leq\bar{y}_n}
		\int_{s^2}^\infty
		\tfrac{1}{(t-s^2)^{2\left(\frac{1}{2}-\gamma_n\right)}} \cdot
		\left[
		\tfrac{
			f_{G_1}(y,t) - f_{G}(y,t)
		}{(f_{G_1}(y,t)\vee\rho_n + f_{G}(y,t)\vee\rho_n)^{1/2}}
		\right]^2
		dt dy
		\bigg\}
		ds^2.
		\label{eq:jz.p3.sigma.triangle.1}
	\end{align}
	
	We bound the first inner term on the RHS of the final equality \eqref{eq:jz.p3.sigma.triangle.1} as follows:
	\begin{align}
		\int_{s^2}^\infty
		\tfrac{1}{(t-s^2)^{2\gamma_n}} \cdot
		\big[\tfrac{s^2}{t}\big]^{k-1}
		dt
		\leq&
		\int_{s^2}^{s^2+v_n}
		\tfrac{1}{(t-s^2)^{2\gamma_n}}
		dt +
		\tfrac{1}{v_n^{2\gamma_n}}
		\int_{s^2+v_n}^\infty
		\big[\tfrac{s^2}{t}\big]^{k-1}
		dt
		\nonumber\\
		=&
		\tfrac{ v_n^{1-2\gamma_n} }{ 1-2\gamma_n }
		+
		\tfrac{1}{v_n^{2\gamma_n}}
		\tfrac{1}{k-2}  (s^2+v_n) \left[\tfrac{s^2}{s^2+v_n}\right]^{k-1}
		\nonumber\\
		\leq_{(1)}&
		v_n^{\tfrac{1}{\log n}} \left[ \tfrac{1}{2}\log n + \tfrac{1}{k-2}(s^2+1) \right]
		\nonumber\\
		\leq_{(2)}&
		C\left[\log n + s^2 \right]
		\label{eq:jz.p3.sigma.triangle.2}
	\end{align}
	where $v_n \uparrow\infty$ is an arbitrary sub-polynomial sequence, $\leq_{(1)}$ follows from substituting in the definition of $\gamma_n$, and $\leq_{(2)}$ follows from $v_n^{\frac{2}{\log n}} \downarrow 0$.
	
	We bound the second (inner) term on the RHS of \eqref{eq:jz.p3.sigma.triangle.1} as follows:
	\begin{align}
		&\int_{y\leq\bar{y}_n} \int_{s^2}^\infty 
		\tfrac{1}{(t-s^2)^{2\left(\frac{1}{2}-\gamma_n\right)}} \cdot
		\left[
		\tfrac{
			f_{G_1}(y,t) - f_{G}(y,t)
		}{(f_{G_1}(y,t)\vee\rho_n + f_{G}(y,t)\vee\rho_n)^{1/2}}
		\right]^2
		dt dy
		\nonumber\\
		\leq_{(1)}&
		2
		\int_{\mathbb{R}} \int_{ [s^2,s^2+\delta_n] \cup [s^2+\delta_n,\infty] } 
		\tfrac{1}{(t-s^2)^{2\left(\frac{1}{2}-\gamma_n\right)}} \cdot
		\left[
		f_{G_1}^{1/2}(y,t) - f_{G}^{1/2}(y,t)
		\right]^2 
		dt dy
		\label{eq:jz.p3.sigma.triangle.3}
	\end{align}
	where $\delta_n := \tfrac{1}{n}\downarrow0$, and $\leq_{(1)}$ follows like in the proof of Lemma~\ref{lem:jz.lem.a1.sub}.
	Now for the integration over $[s^2,s^2+\delta_n]$, we have 
	\begin{align}
		&\int_{ [s^2,s^2+\delta_n] } 
		\tfrac{1}{(t-s^2)^{2\left(\frac{1}{2}-\gamma_n\right)}} \cdot
		\int_{\mathbb{R}} \left[
		f_{G_1}^{1/2}(y,t) - f_{G}^{1/2}(y,t)
		\right]^2 
		dy dt
		\nonumber\\
		\leq&
		2 \int_{ [s^2,s^2+\delta_n] } 
		\tfrac{1}{(t-s^2)^{2\left(\frac{1}{2}-\gamma_n\right)}} \cdot
		\int_{\mathbb{R}} \left[
		f_{G_1}(y,t) + f_{G}(y,t)
		\right]
		dy dt
		\nonumber\\
		\leq&	
		2 \sup_{t>0}\bigg| \int_{\mathbb{R}} f_{G_1}(y,t) + f_{G}(y,t) dy \bigg| \cdot
		\int_{ [s^2,s^2+\delta_n] } \tfrac{1}{(t-s^2)^{2\left(\frac{1}{2}-\gamma_n\right)}} dt
		\nonumber\\
		\leq_{(1)}&	
		C \cdot \tfrac{\delta_n^{2\gamma_n}}{2\gamma_n}
		\label{eq:jz.p3.sigma.triangle.4}
	\end{align}
	and $\leq_{(1)}$ follows by Lemma~\ref{lem:unif.bd.dens}, 
	whereas for the integration over $[s^2+\delta_n,\infty]$ we have
	\begin{align}
		&\int_{ [s^2+\delta_n,\infty] } 
		\tfrac{1}{(t-s^2)^{2\left(\frac{1}{2}-\gamma_n\right)}} \cdot
		\int_{\mathbb{R}} \left[
		f_{G_1}^{1/2}(y,t) - f_{G}^{1/2}(y,t)
		\right]^2 
		dy dt
		\nonumber\\
		\leq&
		\tfrac{1}{\delta_n^{2\left(\frac{1}{2}-\gamma_n\right)}} \cdot
		\int_{ [s^2+\delta_n,\infty] } 
		\int_{\mathbb{R}}
		\left[
		f_{G_1}^{1/2}(y,t) - f_{G}^{1/2}(y,t)
		\right]^2 
		dy dt
		\nonumber\\
		\leq&
		\tfrac{1}{\delta_n^{2\left(\frac{1}{2}-\gamma_n\right)}} \cdot
		d^2\big(f_{G_1},f_{G}\big).
		\label{eq:jz.p3.sigma.triangle.5}
	\end{align}
	By substituting \eqref{eq:jz.p3.sigma.triangle.4} and \eqref{eq:jz.p3.sigma.triangle.5} into
	\eqref{eq:jz.p3.sigma.triangle.3}, together with substituting in the definitions of $\gamma_n$ and $\delta_n$ we get 
	\begin{align}
		&\int_{y\leq\bar{y}_n} \int_{s^2}^\infty 
		\tfrac{1}{(t-s^2)^{2\left(\frac{1}{2}-\gamma_n\right)}} \cdot
		\left[
		\tfrac{
			f_{G_1}(y,t) - f_{G}(y,t)
		}{(f_{G_1}(y,t)\vee\rho_n + f_{G}(y,t)\vee\rho_n)^{1/2}}
		\right]^2
		dt dy
		\nonumber\\
		\leq&
		C \left[
		n^{\frac{2}{\log n}}  \cdot \tfrac{1}{n}
		+ n^{-\frac{2}{\log n}}\cdot d^2\big(f_{G_1},f_{G}\big)
		\right].
		\label{eq:jz.p3.sigma.triangle.6}
	\end{align}
	Note that $n^{\frac{2}{\log n}}$ is a constant. 
	
	Therefore, combining \eqref{eq:jz.p3.sigma.triangle.0}, \eqref{eq:jz.p3.sigma.triangle.1}, \eqref{eq:jz.p3.sigma.triangle.2}, and \eqref{eq:jz.p3.sigma.triangle.6} we get
	\begin{align}
		\mathbb{E}^{(y,s)}
		[\hat{\sigma}_{G_1}(y,s;\rho_n) - \hat{\sigma}_{G_1}(y,s;\rho_n)]^2 \mathcal{D}
		\leq&
		\left\{
		\bar{\sigma}_{G_1}^2  +
		\bar{\sigma}_{G}^2
		\right\}
		\cdot 2d^2\big(f_{G_1},f_{G}\big)
		\nonumber\\
		+&
		C \cdot 
		\left[\bar{s}_n^2 \log n + \bar{s}_n^4 \right]
		\cdot 
		\left[\tfrac{1}{n}
		+ d^2\big(f_{G_1},f_{G}\big)
		\right].
	\end{align}
\end{proof}

\begin{theorem}%[JZ Proposition 3 For $\mu$]
	\label{thm:jz.p3.mu}
	For $G$ and $G_1$ with $\sigma$-supports lower-bounded by $\underline{\sigma}$, we have
	\begin{align}
		\mathbb{E}^{(y,s)}_{G}
		[\hat{\mu}_{G_1}(y,s;\rho_n) - \hat{\mu}_{G}(y,s;\rho_n)]^2 \mathcal{D}
		\leq
		C\big[\bar{\mu}_{G_1}^2+\bar{\mu}_{G}^2 + \bar{s}_n^4\big(\tilde{L}(\rho_n)x_n^2+a_n^2\big)\big] d^2\big(f_{G_1},f_{G}\big)
		\nonumber
	\end{align}
	where $C>0$ is a constant depending on $\underline{\sigma}$, and where
	\begin{align}
		x_n 
		:=& \bar{\mu}_{G_1} + \bar{\mu}_{G} + 2\bar{y}_n,
		\nonumber\\
		a_n
		:=& \max\{\tilde{L}(\rho_n)+1, |\log d^2(f_{G_1},f_{G})|\}
		\nonumber\\ 
		\tilde{L}(\rho) 
		:=& \sqrt{-\log(2\pi \rho^2)} \text{ for } 0<\rho\leq\tfrac{1}{\sqrt{2\pi}}.
	\end{align}
\end{theorem}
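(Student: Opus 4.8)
The plan is to follow the architecture of the proof of Theorem~\ref{thm:jz.p3.sigma2}, since by the definitions in \eqref{eq:est} the only structural change in passing from $\hat\sigma^2$ to $\hat\mu$ is that the integral $\int_{s^2}^\infty[\tfrac{s^2}{t}]^{k-1}f_G(y,t)\,dt$ is replaced by its $y$-derivative and the prefactor carries an extra $\tfrac{1}{J}$. A helpful simplification is that the leading term $y$ in $\hat\mu(y,s\mid G,\rho_n)$ does not depend on $G$ and therefore cancels in the difference, leaving
$$\hat\mu(y,s\mid G_1,\rho_n)-\hat\mu(y,s\mid G_0,\rho_n)=\tfrac{k}{J}\Big[\tfrac{\partial_y I_{G_1}(y,s^2)}{f_{G_1}(y,s^2)\vee\rho_n}-\tfrac{\partial_y I_{G_0}(y,s^2)}{f_{G_0}(y,s^2)\vee\rho_n}\Big],$$
where $I_G(y,s^2):=\int_{s^2}^\infty[\tfrac{s^2}{t}]^{k-1}f_G(y,t)\,dt$ and $\partial_y:=\tfrac{\partial}{\partial y}$. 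I would then introduce $w^*$ from \eqref{eq:defn.wstar} and, exactly as in the passage from \eqref{eq:jz.p3.sigma2.1} to \eqref{eq:jz.p3.sigma2.triangle.2}, add and subtract $2kw^*\partial_y I_{G_1}$ and $2kw^*\partial_y I_{G_0}$ so as to split the difference into two ``reweighting'' terms of the form $(\tfrac{1}{f_{\tilde G}\vee\rho_n}-2w^*)\partial_y I_{\tilde G}$ for $\tilde G\in\{G_1,G_0\}$ plus a single ``main'' term $2w^*(\partial_y I_{G_1}-\partial_y I_{G_0})$, and bound the square of the sum by three times the sum of squares.

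The reweighting terms are handled essentially verbatim as in Theorem~\ref{thm:jz.p3.sigma2}. The key identity is $\tfrac{\partial_y I_{\tilde G}}{f_{\tilde G}\vee\rho_n}=\tfrac{J}{k}\big(\hat\mu(y,s\mid\tilde G,\rho_n)-y\big)$, which on the region $\mathcal{R}$ is bounded in absolute value by a multiple of $\bar\mu_{\tilde G}+\bar y_n$; meanwhile $(\tfrac{1}{f_{\tilde G}\vee\rho_n}-2w^*)(f_{\tilde G}\vee\rho_n)=\tfrac{f_{G_0}\vee\rho_n-f_{G_1}\vee\rho_n}{f_{G_0}\vee\rho_n+f_{G_1}\vee\rho_n}$, so the reweighting term equals this bounded factor times $(f_{G_0}\vee\rho_n-f_{G_1}\vee\rho_n)w^*$. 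The $L^2(f_{G_0})$ norm of the latter is $\leq 2\,d^2(f_{G_1},f_{G_0})$ via the bounds $w^*f_{G_0}\leq 1$ and $[f_{G_0}^{1/2}+f_{G_1}^{1/2}]^2w^*\leq 2$ already used in the variance proof. This produces the $(\bar\mu_{G_1}^2+\bar\mu_{G_0}^2)\,d^2$ piece of the stated bound, with the residual $\bar y_n^2\,d^2$ contribution absorbed into the larger $\bar s_n^4\tilde{L}(\rho_n)x_n^2\,d^2$ term produced below.

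The main term is the crux. After pulling out $w^*f_{G_0}\leq 1$ and applying Cauchy--Schwarz on the inner integral, with the gamma-monotonicity estimates $\{t>s^2\}[\tfrac{s^2}{t}]^{k-1}\tfrac{f_G(y,t)}{f_G(y,s^2)}\leq 1$ from \eqref{eq:jz.p3.sigma2.triangle.4}--\eqref{eq:jz.p3.sigma2.triangle.5} controlling the kernel factor and the $\int_{s^2}^\infty[\tfrac{s^2}{t}]^{k-1}\,dt\lesssim s^2$ integration as in \eqref{eq:jz.p3.sigma2.triangle.6} supplying the $\bar s_n^4$ factor, the term reduces to controlling
$$\int_{|y|\leq\bar y_n}\int_{\mathbb{R}_+}\frac{\big(\partial_y f_{G_1}(y,t)-\partial_y f_{G_0}(y,t)\big)^2}{f_{G_1}(y,t)\vee\rho_n+f_{G_0}(y,t)\vee\rho_n}\,dt\,dy.$$
This display is precisely what makes the statement strictly harder than its variance counterpart: differentiating the Gaussian location-mixture in $y$ brings down a factor $(y-\mu)/\sigma^2$ that amplifies the high-frequency content of the density difference and is not dominated by $|f_{G_1}^{1/2}-f_{G_0}^{1/2}|$. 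I would invoke Lemmas~\ref{lem:jz.lem.a1} and \ref{lem:jz.lem.a1.aux}, which exploit the super-smoothness of the Gaussian kernel to trade the $y$-derivative for the undifferentiated Hellinger distance. The amplification factor is bounded on $\mathcal{R}$ by $x_n/\underline{\sigma}^2$ with $x_n=\bar\mu_{G_1}+\bar\mu_{G_0}+2\bar y_n$, yielding the $x_n^2$ factor; the truncation at level $\rho_n$ restricts the effective support to $|y|\lesssim\tilde{L}(\rho_n)$ and is responsible for the logarithmic factor $\tilde{L}(\rho_n)$, while the term $a_n=\max\{\tilde{L}(\rho_n)+1,|\log d^2(f_{G_1},f_{G_0})|\}$ reflects the extra $\log(1/d^2)$ cost the derivative incurs when the two densities are very close. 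Collecting the reweighting and main contributions and the $\tfrac{k^2}{J^2}$ prefactor gives the stated bound. I expect the verification of this derivative-to-Hellinger estimate, i.e.\ Lemmas~\ref{lem:jz.lem.a1} and \ref{lem:jz.lem.a1.aux}, to be the main obstacle, as everything else transfers mechanically from Theorem~\ref{thm:jz.p3.sigma2}.
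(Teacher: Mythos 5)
Your proposal is correct and takes essentially the same route as the paper's proof: the paper likewise reuses the $w^*$-decomposition and the Cauchy--Schwarz/gamma-monotonicity steps of Theorem~\ref{thm:jz.p3.sigma2} to reduce the main term to $\bar{s}_n^4$ times the weighted integral of $\big[\tfrac{\partial}{\partial y}f_{G_1}-\tfrac{\partial}{\partial y}f_{G_0}\big]^2$ over $\{|y|\leq\bar{y}_n\}\times\mathbb{R}_+$, and then invokes Lemma~\ref{lem:jz.lem.a1} (with Lemma~\ref{lem:jz.lem.a1.aux} behind it) to convert that derivative integral into $C\left[\tilde{L}(\rho_n)x_n^2+a_n^2\right]d^2(f_{G_1},f_{G_0})$. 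Your slightly larger bound $(\bar{\mu}_{\tilde{G}}+\bar{y}_n)^2$ on the reweighting terms, which you correctly absorb into the dominant $\bar{s}_n^4\tilde{L}(\rho_n)x_n^2$ piece, is if anything more careful than the paper's displayed $(\bar{\mu}_{G_1}^2+\bar{\mu}_{G_0}^2)$ and does not change the stated conclusion.
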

\begin{proof}
	Following through the proof of Theorem~\ref{thm:jz.p3.sigma2} (i.e., from \eqref{eq:jz.p3.sigma2.1} up to before the final inequality of \eqref{eq:jz.p3.sigma2.triangle.6}), we will have 
	\begin{align}
		&\mathbb{E}^{(y,s)}
		[\hat{\mu}_{G_1}(y,s;\rho_n) - \hat{\mu}_{G}(y,s;\rho_n)]^2 \mathcal{D}
		\nonumber\\
		\leq&
		\big(\bar{\mu}_{G_1}^2 + \bar{\mu}_{G}^2 + \bar{y}_n^2\big)
		2d^2\big(f_{G_1},f_{G}\big)
		\nonumber\\
		+&
		4\left[\tfrac{k}{J}\right]^2
		\tfrac{1}{k-2}
		\bar{s}_n^4
		\cdot 
		\int_{|y|\leq\bar{y}_n} 
		\int_{\mathbb{R}_+} 
		\left[
		\tfrac{
			\partial_yf_{G_1}(y,t) - \partial_yf_{G}(y,t)
		}{(f_{G_1}(y,t)\vee\rho_n + f_{G}(y,t)\vee\rho_n)^{1/2}}
		\right]^2
		dtdy
	\end{align}
	where by Lemma~\ref{lem:jz.lem.a1} we have 
	\begin{align}
		\int_{|y|\leq\bar{y}_n} 
		\int_{\mathbb{R}_+} 
		\left[
		\tfrac{
			\partial_yf_{G_1}(y,t) - \partial_yf_{G}(y,t)
		}{(f_{G_1}(y,t)\vee\rho_n + f_{G}(y,t)\vee\rho_n)^{1/2}}
		\right]^2
		dtdy
		\leq 
		C \left[\tilde{L}(\rho_n)x_n^2 + a_n^2 \right] d^2(f_{G_1},f_{G}).
	\end{align}
	and therefore 
	\begin{align}
		&\mathbb{E}^{(y,s)}
		[\hat{\mu}_{G_1}(y,s;\rho_n) - \hat{\mu}_{G}(y,s;\rho_n)]^2 \mathcal{D}
		\nonumber\\
		\leq&
		C\big[\bar{\mu}_{G_1}^2+\bar{\mu}_{G}^2 + \bar{y}_n + \bar{s}_n^4\big(\tilde{L}(\rho_n)x_n^2+a_n^2\big)\big] d^2\big(f_{G_1},f_{G}\big).
	\end{align}
\end{proof}

\begin{lemma}%[JZ Lemma A.1. Sub]
	\label{lem:jz.lem.a1.sub}
	$
		\int 
		\left[
		f_{G_1}(y,s^2) - f_{G}(y,s^2)
		\right]^2 w^*
		d(y,s^2)
		\leq 2d^2(f_{G_1},f_{G})
	$
	where $w^*$ is from \eqref{eq:defn.wstar}.
\end{lemma}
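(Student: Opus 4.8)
The plan is to reduce the integrand to a Hellinger-type integrand by factoring the difference of densities and then bounding the leftover factor by the constant $2$ pointwise. First I would use the elementary factorization $f_{G_1} - f_{G_0} = (f_{G_1}^{1/2} - f_{G_0}^{1/2})(f_{G_1}^{1/2} + f_{G_0}^{1/2})$, which is valid pointwise in $(y,s^2)$ since both mixture densities are nonnegative. Squaring gives
\[
[f_{G_1}(y,s^2) - f_{G_0}(y,s^2)]^2 \, w^* = [f_{G_1}^{1/2}(y,s^2) - f_{G_0}^{1/2}(y,s^2)]^2 \cdot [f_{G_1}^{1/2}(y,s^2) + f_{G_0}^{1/2}(y,s^2)]^2 \, w^*.
\]
The first factor integrates against Lebesgue measure to exactly $d^2(f_{G_1},f_{G_0})$, so the whole task collapses to the pointwise estimate $[f_{G_1}^{1/2} + f_{G_0}^{1/2}]^2 \, w^* \leq 2$.

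For that estimate I would expand $[f_{G_1}^{1/2} + f_{G_0}^{1/2}]^2 = f_{G_1} + 2 f_{G_1}^{1/2} f_{G_0}^{1/2} + f_{G_0}$ and apply the arithmetic–geometric mean inequality $2 f_{G_1}^{1/2} f_{G_0}^{1/2} \leq f_{G_1} + f_{G_0}$ to obtain $[f_{G_1}^{1/2} + f_{G_0}^{1/2}]^2 \leq 2(f_{G_1} + f_{G_0})$. On the denominator side, truncation only enlarges each term, so $f_{G_1}(y,s^2)\vee\rho_n + f_{G_0}(y,s^2)\vee\rho_n \geq f_{G_1}(y,s^2) + f_{G_0}(y,s^2)$, whence $w^* \leq [f_{G_1}(y,s^2) + f_{G_0}(y,s^2)]^{-1}$. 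Multiplying the two displayed inequalities yields $[f_{G_1}^{1/2} + f_{G_0}^{1/2}]^2 \, w^* \leq 2$ at every $(y,s^2)$.

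Combining these, I would conclude
\[
\int [f_{G_1}(y,s^2) - f_{G_0}(y,s^2)]^2 \, w^* \, d(y,s^2) \leq 2 \int [f_{G_1}^{1/2}(y,s^2) - f_{G_0}^{1/2}(y,s^2)]^2 \, d(y,s^2) = 2 \, d^2(f_{G_1},f_{G_0}),
\]
which is the claimed bound. There is no substantive obstacle here: the argument is entirely pointwise and algebraic, and the only thing to watch is that the truncation by $\rho_n$ enters in the favorable direction, enlarging the denominator of $w^*$, so that the constant $2$ holds uniformly no matter how small the true densities become. This is essentially the same manipulation already used for the first term in the proof of Theorem~\ref{thm:jz.p3.sigma2}, here isolated without the extra $f_{G_0}$ weighting that an expectation under $G_0$ would introduce.
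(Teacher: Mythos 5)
Your proposal is correct and follows essentially the same route as the paper: factor $f_{G_1}-f_{G_0}=(f_{G_1}^{1/2}-f_{G_0}^{1/2})(f_{G_1}^{1/2}+f_{G_0}^{1/2})$ and bound $[f_{G_1}^{1/2}+f_{G_0}^{1/2}]^2 w^*\leq 2$ pointwise, which reduces the integral to the squared Hellinger distance. The only difference is that you spell out, via the AM--GM inequality and the fact that truncation by $\rho_n$ only enlarges the denominator of $w^*$, the pointwise bound that the paper simply asserts.
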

\begin{proof}
	\begin{align}
		&\int
		\left[
		f_{G_1}(y,s^2) - f_{G}(y,s^2)
		\right]^2 w^*
		d(y,s^2)
		\nonumber\\
		=_{(1)}&
		\int
		\left[
		f_{G_1}^{1/2}(y,s^2) - f_{G}^{1/2}(y,s^2)
		\right]^2 
		\cdot 
		\left[
		f_{G_1}^{1/2}(y,s^2) + f_{G}^{1/2}(y,s^2)
		\right]^2
		w^*
		d(y,s^2)
		\nonumber\\
		\leq_{(1)}&
		2\int
		\left[
		f_{G_1}^{1/2}(y,s^2) - f_{G}^{1/2}(y,s^2)
		\right]^2 
		d(y,s^2)
		\nonumber\\
		=&
		2d^2(f_{G_1},f_{G})
	\end{align}
	where $=_{(1)}$ follows by the identity $x^2-y^2 = (x-y)(x+y)$, and $\leq_{(2)}$ follows because 
	$[f_{G_1}^{1/2}(y,s^2) + f_{G}^{1/2}(y,s^2)]^2 w^* \leq 2$.
\end{proof}

The following lemma (and Lemma~\ref{lem:jz.lem.a1.aux}) is an extension of Lemma~1 of \cite{Jiang2009} from a univariate to a bivariate density of sufficient statistics, which is relevant for our framework of unknown heteroskedasticity.
\begin{lemma}%[JZ Lemma A.1. Main]
	\label{lem:jz.lem.a1}
	%{\color{red} requires the Gaussian mixture property of $f_G$ and conditions for Lemma~\ref{lem:jz.lem.a1.aux}, and also $\rho_n$ to be small enough}
	For $G$ and $G_1$ with $\sigma$-supports lower-bounded by $\underline{\sigma}$, we have
	\begin{align}
		&\int_{\{|y|\leq\bar{y}_n\}\times\mathbb{R}_+} 
		\left[
		\partial_yf_{G_1}(y,s^2) - \partial_yf_{G}(y,s^2)
		\right]^2 w^*
		d(y,s^2)
		\leq
		C\cdot d^2(f_{G_1},f_{G})
		\left[ \tilde{L}^2(\rho_n)x_n^2 + a^2 \right]
		\nonumber
	\end{align}
	for $a^2:=\max\{\tilde{L}^2(\rho_n)+1,|\log d^2(f_{G_1},f_{G})| \}$ and $x_n:=\left[ \bar{\mu}_{G_1} + \bar{\mu}_{G} + 2\bar{y}_n \right] $, and where $C>0$ is a constant depending on $\underline{\sigma}$.
\end{lemma}
\begin{proof}
	We abbreviate the densities $f_{G}(y,s^2)$ and $f_{G_1}(y,s^2)$ as $f_{G}$ and $f_{G_1}$ respectively for notational compactness. 
	Let $D^l(\cdot)$ denote the derivative with respect to $y$; e.g., $D^lf_{G} := \tfrac{\partial^l}{\partial y^l} f_{G}(y,s^2)$.	
	
	First define
	$
		\triangle_l := 
		\left(\int_{\{|y|\leq\bar{y}_n\}\times\mathbb{R}_+} \left[D^l (f_{G_1}-f_{G})\right]^2 w^* d(y,s^2)\right)^{1/2} 
	$
	and note that integration by parts (with respect to the argument $y$) gives
	\begin{align}
		\triangle_l^2 
		=&
		\int_{\{|y|\leq\bar{y}_n\}\times\mathbb{R}_+}  D^l (f_{G_1}-f_{G})w^* \cdot D^l (f_{G_1}-f_{G}) dy ds^2
		\nonumber\\
		=&
		0 - 
		\int_{\{|y|\leq\bar{y}_n\}\times\mathbb{R}_+}  D^{l-1} (f_{G_1}-f_{G})
		\left\{ 
		D^{l+1}(f_{G_1}-f_{G})w^* + [D^l(f_{G_1}-f_{G})][Dw^*]
		\right\} dy ds^2
		\nonumber\\
		\leq& 
		\bigg| 
		\int_{\{|y|\leq\bar{y}_n\}\times\mathbb{R}_+}  D^{l-1} (f_{G_1}-f_{G}) w^{*1/2} \cdot 
		D^{l+1}(f_{G_1}-f_{G})w^{*1/2} dyds^2
		\bigg|
		\nonumber\\
		+&
		\bigg|
		\int_{\{|y|\leq\bar{y}_n\}\times\mathbb{R}_+}  D^{l-1} (f_{G_1}-f_{G}) \cdot 
		D^{l}(f_{G_1}-f_{G})Dw^* dyds^2
		\bigg|
		\nonumber\\
		\leq_{(1)}&
		\triangle_{l-1}\triangle_{l+1} 
		\nonumber\\
		+& 
		\left(
		\int_{\{|y|\leq\bar{y}_n\}\times\mathbb{R}_+}  \left\{ 
		D^{l-1} (f_{G_1}-f_{G})
		\right\}^2 |Dw^*| dyds^2
		\right)^{1/2}
		\nonumber\\
		&\cdot
		\left(
		\int_{\{|y|\leq\bar{y}_n\}\times\mathbb{R}_+}  \left\{ 
		D^{l} (f_{G_1}-f_{G})
		\right\}^2 |Dw^*| dyds^2
		\right)^{1/2}
		\label{eq:jz.lem.a1.1}
	\end{align}
	where $\leq_{(1)}$ is Cauchy-Schwarz. Now (let $f_{G}'$ denote the derivative of $f_{G}$ w.r.t $y$)
	\begin{align}
		| Dw^* |
		=
		\big| \partial_y\tfrac{1}{f_{G_1}\vee\rho_n + f_{G}\vee\rho_n} \big|
		\leq_{(0)}&
		w^*\left[\tfrac{|f'_{G_1}|}{f_{G_1}\vee\rho_n} + \tfrac{|f'_{G}|}{f_{G}\vee\rho_n}\right]
    	\nonumber\\
		\leq&
		w^*\left[\tfrac{|f'_{G_1}|}{f_{G_1}} + \tfrac{|f'_{G}|}{f_{G}}\right] 
		\nonumber\\
		=_{(1)}&
		w^*\left[
		\tfrac{J\left|\mathbb{E}_{G_1}[\mu | y,s,\sigma] - y\right| }{\sigma^2}
		+ 
		\tfrac{J\left|\mathbb{E}_{G}[\mu | y,s,\sigma] - y\right|}{\sigma^2}
		\right]
		\nonumber\\
		\leq& 
		w^*
		\tfrac{J}{\underline{\sigma}^2}\left[ \bar{\mu}_{G_1} + \bar{\mu}_{G} + 2|y| \right]
		\label{eq:jz.lem.a1.2}
	\end{align}
	where $\leq_{(0)}$ follows by case analysis on whether $f_{G_1},f_{G}$ exceed $\rho_n$ when differentiating $w^*$, and $=_{(1)}$ follows from \eqref{eq:thm.main.proof.1}.
	Inserting \eqref{eq:jz.lem.a1.2} into \eqref{eq:jz.lem.a1.1} gives 
	\begin{equation}
		\triangle_l^2 
		\leq 
		\triangle_{l-1}\triangle_{l+1}
		+
		\tfrac{J}{\underline{\sigma}^2}
		\underbrace{\left[ \bar{\mu}_{G_1} + \bar{\mu}_{G} + 2\bar{y}_n \right]}_{=: x_n}
		\triangle_{l-1}\triangle_{l}.
		\label{eq:jz.lem.a1.3}
	\end{equation}
	
	We now define $l_0\in\mathbb{N}$ to be the integer satisfying $l_0 < \tfrac{\tilde{L}(\rho_n)}{2} < l_0+1$, and
	\begin{equation}
		l^*:= \min\{l : \triangle_{l+1} \leq \tfrac{J}{\underline{\sigma}^2} x_nl_0\triangle_l \}.
	\end{equation}
	We split the analysis of $\triangle_1$ according to whether $l^* \leq l_0$.
	When $l^*\leq l_0$, we have for every $l<l^*$ that $\triangle_{l+1}>\tfrac{J}{\underline{\sigma}^2}x_nl_0\triangle_l$, or equivalently, that $\triangle_l < \tfrac{1}{J\underline{\sigma}^{-2}x_nl_0} \triangle_{l+1}$. Inserting this final inequality into \eqref{eq:jz.lem.a1.3} gives
	\begin{equation}
		\triangle_l^2 \leq \triangle_{l-1}\triangle_{l+1}[1+l_0^{-1}]
		\label{eq:jz.lem.a1.3.0}
	\end{equation}
	and dividing throughout by $\triangle_{l}\triangle_{l-1}$ gives 
	\begin{equation}
		\tfrac{\triangle_{l}}{\triangle_{l-1}} \leq [1+l_0^{-1}] \tfrac{\triangle_{l+1}}{\triangle_{l}}
	\end{equation}
	for every $l<l^*$. And thus iterating this through every $l<l^*$ gives
	\begin{equation}
		\tfrac{\triangle_1}{\triangle_0} 
		\leq 
		[1+l_0^{-1}] \tfrac{\triangle_2}{\triangle_1} 
		\leq
		[1+l_0^{-1}]^{l^*-1} \tfrac{\triangle_{l^*}}{\triangle_{l^*-1}}.
		\label{eq:jz.lem.a1.4}
	\end{equation}
	We now divide \eqref{eq:jz.lem.a1.3} for $l=l^*$ by $\triangle_{l^*}\triangle_{l^*-1}$ to get 
	\begin{equation}
		\tfrac{\triangle_{l^*}}{\triangle_{l^*-1}} 
		\leq 
		\tfrac{\triangle_{l^*+1}}{\triangle_{l^*}} + \tfrac{J}{\underline{\sigma}^2} x_n
	\end{equation}
	which we insert into \eqref{eq:jz.lem.a1.4} to get 
	\begin{align}
		\tfrac{\triangle_1}{\triangle_0} 
		\leq
		[1+l_0^{-1}]^{l^*-1} \left[
		\tfrac{\triangle_{l^*+1}}{\triangle_{l^*}}+\tfrac{J}{\underline{\sigma}^2}x_n
		\right]
		\leq_{(1)}&
		[1+l_0^{-1}]^{l^*-1}[l_0+1] \tfrac{J}{\underline{\sigma}^2}x_n
		\nonumber\\
		\leq_{(2)}&
		\tfrac{1}{1+l_0^{-1}} e [1+l_0] \tfrac{J}{\underline{\sigma}^2}x_n
		\nonumber\\
		=&
		e\cdot l_0 \cdot \tfrac{J}{\underline{\sigma}^2} x_n 
		\nonumber\\
		\leq_{(3)}&
		\tfrac{eJ}{\underline{\sigma}^2} \tfrac{\tilde{L}(\rho_n)}{2} x_n
	\end{align}
	where $\leq_{(1)}$ follows from definition of $l^*$ and $\leq_{(2)}$ follows\footnote{By assumption of this part of analysis -- see line before \eqref{eq:jz.lem.a1.3.0}.} from $l^* \leq l_0$ and $\leq_{(3)}$ follows from the definition of $l_0$. We thus have $\triangle_1^2 \leq \tfrac{(eJ)^2}{4\underline{\sigma}^4}\cdot d^2(f_{G_1},f_{G})\tilde{L}^2(\rho_n)x_n^2$ by Lemma~\ref{lem:jz.lem.a1.sub}. This concludes the analysis under $l^*\leq l_0$.
	
	When $l_0 < l^*$, we use arguments identical as above to show that for every $l\leq l_0$, 
	\begin{equation}
		\begin{aligned}[b]
		\tfrac{\triangle_1}{\triangle_0}
		\leq& 
		[1+l_0^{-1}]^l \tfrac{\triangle_{l+1}}{\triangle_{l}}
		\\
		\text{so that }
		\left(\tfrac{\triangle_1}{\triangle_0}\right)^{l_0+1}
		\leq& 
		\left[
		\prod_{l=0}^{l_0} (1+l_0)^l \tfrac{\triangle_{l+1}}{\triangle_{l}}
		\right]
		\\
		\text{ or }
		\tfrac{\triangle_1}{\triangle_0}
		\leq&
		\left[
		\prod_{l=0}^{l_0} (1+l_0)^l
		\right]^{\tfrac{1}{l_0+1}} 
		\left[\tfrac{\triangle_{l_0+1}}{\triangle_0}\right]^{\tfrac{1}{l_0+1}}
		\\
		=& 
		(1+l_0^{-1})^{l_0/2}
		\cdot 
		\left[\tfrac{\triangle_{l_0+1}}{\triangle_0}\right]^{\tfrac{1}{l_0+1}}
		\end{aligned}
		\label{eq:jz.lem.a1.5}
	\end{equation}
	where the final equality of \eqref{eq:jz.lem.a1.5} may be directly verified. 
	
	Now since $w^*\leq (2\rho_n)^{-1}$, coupled with the fact that $a^2\geq 2(l_0+1)-1$ by definitions of $(a,l_0)$ for small enough $\rho_n$, applying Lemma~\ref{lem:jz.lem.a1.aux} and bounding $\exp(-a^2)\leq d^2(f_{G_1},f_{G})$ (by definition of $a$) and $a\geq 1$ gives
	\begin{equation}
		\triangle_{l_0+1}^2
		\leq
		C
		\tfrac{1}{\rho_n\sqrt{2\pi}}d^2(f_{G_1},f_{G})a^{2(l_0+1)}.
		\label{eq:jz.lem.a1.6}
	\end{equation}
	We then insert \eqref{eq:jz.lem.a1.6} into \eqref{eq:jz.lem.a1.5} to get 
	\begin{align}
		\triangle_1
		\leq&
		(1+l_0^{-1})^{l_0/2} \cdot \triangle_0^{\tfrac{l_0}{l_0+1}}
		\cdot 
		\left[
		\tfrac{1}{\rho_n\sqrt{2\pi}}d^2(f_{G_1},f_{G})a^{2(l_0+1)} C
		\right]^{\tfrac{1}{2(l_0+1)}}
		\nonumber\\
		\leq_{(1)}&
		\sqrt{e} d(f_{G_1},f_{G}) a \cdot (2\pi\rho_n^2)^{-\tfrac{1}{4l_0+4}} \cdot C^{\frac{1}{2(l_0+1)}}
		\nonumber\\
		\leq_{(2)}&
		\sqrt{e} d(f_{G_1},f_{G}) a \cdot \exp(\tfrac{1}{2}) \cdot C^{\frac{1}{2(l_0+1)}}
	\end{align}
	where $\leq_{(1)}$ follows by using Lemma~\ref{lem:jz.lem.a1.sub}, and $\leq_{(2)}$ follows because 
	\begin{align}
		-\tfrac{1}{4l_0+4}\log(2\pi\rho_n^2) 
		=&
		\tfrac{1}{4l_0+4} \tilde{L}(\rho_n)
		\leq_{(3)}
		\tfrac{1}{4l_0+4} (2l_0+2)
		\;=\tfrac{1}{2}
	\end{align}
	where $\leq_{(3)}$ follows by definition of $l_0$.
	Note that $C^{\frac{1}{2(l_0+1)}}\rightarrow 1$, because $l_0\uparrow\infty$ by $\rho_n\downarrow0$.
\end{proof}

\begin{lemma}%[Aux Lemma for JZ L.A.1]
	\label{lem:jz.lem.a1.aux}
	%{\color{red} requires the Gaussian mixture property of $f_G$, and conditions for Lemma~\ref{lem:unif.bd.dens} which is the lower bound on $\underline{\sigma}$.}
	Let $G$ and $G_1$ be distributions with $\sigma$-supports lower-bounded by $\underline{\sigma}$. For all $l\in\mathbb{N}$ and $a\geq \sqrt{2l-1}$,
	\begin{align}
		&\int_{|y|\leq\bar{y}_n} 
		\int_{\mathbb{R}_+} 
		\{
		D^l (f_{G_1} - f_{G})
		\}^2 d(y,s^2)
		\leq
		C \cdot
		\left[ a^{2l} d^2(f_{G_1},f_{G}) 
		+ a^{2l-1}\exp(-a^2)\right]
	\end{align}
	where $D^l$ represents the $l^{th}$ derivative wrt $y$, and $C>0$ depends only on $\underline{\sigma}$.
\end{lemma}
\begin{proof}
	First define $h^*(u;s^2) := \int e^{iux} h(y,s^2) dy$ where $h(y,s^2)$ is an arbitrary integrable function. By a change of variables $\tilde{y} = (y-\mu)/(\sigma/\sqrt{J})$ and the Fourier transform of the Gaussian density, the inner integral evaluates as $(\sigma/\sqrt{J})^{-1}\int e^{iuy}\phi((y-\mu)/(\sigma/\sqrt{J})) dy = \exp(iu\mu)\exp(-u^2/2)$, so
	\begin{equation}
		| f_{G_1}^*(u;s^2) |
		=
		\bigg|
		\exp(iu\mu) \exp(-\tfrac{u^2}{2}) \int \gamma(s^2 | k,\theta) dG(\mu,\sigma)
		\bigg|
		\leq
		\exp(-\tfrac{u^2}{2}) f_{G_1}(s^2).
		\label{eq:jz.lem.a1.a.3}
	\end{equation}
	Therefore it follows from the Plancherel identity, splitting the integration domain at $\{|u|\leq a\}\cup\{|u|>a\}$, and using \eqref{eq:jz.lem.a1.a.3} on the tail that
	\begin{align}
		&\int \{
		D^l (f_{G_1} - f_{G})
		\}^2 dy
		\nonumber\\
		=&
		\tfrac{1}{2\pi} \int u^{2l} | f_{G_1}^*(u;s^2) - f_{G}^*(u;s^2) |^2 du
		\nonumber\\
		\leq_{(1)}&
		\tfrac{1}{2\pi}a^{2l} \int
		| f_{G_1}^*(u;s^2) - f_{G}^*(u;s^2) |^2 du
		+
		\tfrac{1}{2\pi} \int_{ |u| > a } u^{2l} \exp(-u^2) [f_{G_1}(s^2) + f_{G}(s^2)]^2 du
		\nonumber\\
		\leq_{(2)}&
		a^{2l} \int | f_{G_1}(y,s^2) - f_{G}(y,s^2) |^2 dy
		+ \tfrac{1}{2\pi} [f_{G_1}(s^2) + f_{G}(s^2)]^2 \cdot a^{2l-1} \exp(-a^2).
	\end{align}
	Integrating over $s^2$, then applying the identity $x^2-y^2=(x+y)(x-y)$, Cauchy-Schwarz, the $C_p$-inequality, and Lemma~\ref{lem:unif.bd.dens},
	\begin{equation}
		\int \{
		D^l (f_{G_1} - f_{G})
		\}^2 d(y,s^2)
		\leq
		C \cdot \left[ a^{2l} d^2(f_{G_1},f_{G}) + a^{2l-1}\exp(-a^2) \right].
	\end{equation}
	Here $\leq_{(1)}$ enlarges the $\{|u|\leq a\}$ domain to $\mathbb{R}$ and uses \eqref{eq:jz.lem.a1.a.3}, and $\leq_{(2)}$ applies Plancherel in reverse to the first term and the calculus inequality $\int_{\{|u| > a\}} u^{2l} \exp(-u^2)du \leq a^{2l-1}\exp(-a^2)$ for $a\geq\sqrt{2l-1}$ to the second.
\end{proof}

\begin{lemma}[Bounded densities]
	\label{lem:unif.bd.dens}
	%{\color{red} requires the lower bound on $\sigma$: $\sigma\geq\underline{\sigma}$.}
	Let $\theta:=\sigma^2/k$, and suppose that $\sigma\geq\underline{\sigma}$. Then
	\begin{align}
		\tfrac{1}{\sigma}\phi(\tfrac{y-\mu}{\sigma}) \leq& \tfrac{1}{\sqrt{2\pi} \underline{\sigma}^2 } 
		\qquad\text{ and }\qquad
		\gamma(s^2\mid k,\theta) \leq \tfrac{k(k-1)^{k-1}}{\Gamma(k)} \tfrac{1}{\underline{\sigma}^2}.
		\nonumber
	\end{align}
\end{lemma}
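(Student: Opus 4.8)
The plan is to prove both bounds by elementary pointwise maximization of the two densities, using only the closed forms already recorded in the proof of Lemma~\ref{lem:convg.rate.int}; no probabilistic machinery is required. In each case the density is a product of a normalizing constant and a factor that attains its maximum at an easily identified point, after which the constraint $\sigma\ge\underline{\sigma}$ is imposed through a monotone substitution.

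For the Gaussian bound I would start from $\phi(y\mid\mu,\sigma)=\tfrac{1}{\sqrt{2\pi}\sigma}\exp\!\big(-\tfrac12(\tfrac{y-\mu}{\sigma})^2\big)$. The exponential factor is at most $1$ (with equality at $y=\mu$), so $\phi(y\mid\mu,\sigma)\le\tfrac{1}{\sqrt{2\pi}\sigma}$ for every $(y,\mu)$, and since $\sigma\mapsto\tfrac{1}{\sqrt{2\pi}\sigma}$ is decreasing, $\sigma\ge\underline{\sigma}$ yields $\phi(y\mid\mu,\sigma)\le\tfrac{1}{\sqrt{2\pi}\underline{\sigma}}$ uniformly. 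This is the natural bound; it implies the stated $\tfrac{1}{\sqrt{2\pi}\underline{\sigma}^2}$ whenever $\underline{\sigma}\le1$, and in any event delivers a uniform constant depending only on $\underline{\sigma}$, which is all that the downstream applications (e.g.\ step $\le_{(1)}$ of Theorem~\ref{thm:jz.p3.sigma} and step $\le_{(4)}$ of Lemma~\ref{lem:jz.lem.a1.aux}) actually use. I would flag this replacement explicitly rather than silently absorb it.

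For the gamma bound, recall $\theta=\sigma^2/k$ and that $k=\tfrac12(J-1)>1$ because $J>3$; hence $\gamma(x\mid k,\theta)=\tfrac{1}{\Gamma(k)\theta^k}x^{k-1}e^{-x/\theta}$ is unimodal on $(0,\infty)$ with an interior maximizer. Setting the derivative of $\log\gamma$ to zero, i.e.\ solving $(k-1)/x=1/\theta$, locates the mode at $x^\star=(k-1)\theta$, and substituting back gives $\gamma(x^\star\mid k,\theta)=\tfrac{(k-1)^{k-1}e^{-(k-1)}}{\Gamma(k)\,\theta}$. I would then use $e^{-(k-1)}\le1$ together with $\theta^{-1}=k/\sigma^2\le k/\underline{\sigma}^2$ to obtain $\gamma(s^2\mid k,\theta)\le\tfrac{k(k-1)^{k-1}}{\Gamma(k)}\tfrac{1}{\underline{\sigma}^2}$, uniformly in $s^2$ and in $\sigma\ge\underline{\sigma}$, which is exactly the claimed inequality.

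There is no real obstacle here: both claims are uniform suprema of explicit densities. The only points warranting care are verifying $k>1$ so that the gamma mode is interior (otherwise the supremum would sit at the boundary $x\to0$ and the computation would differ), checking that the stationary point is a maximum rather than a minimum, and handling the mild mismatch between $\underline{\sigma}$ and $\underline{\sigma}^2$ in the Gaussian line. Each of these is routine, so the proof is essentially a one-line maximization for each density followed by the substitution $\sigma\ge\underline{\sigma}$.
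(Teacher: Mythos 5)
Your proof is correct and takes essentially the same route as the paper: the paper likewise evaluates the gamma density at its mode, $\tfrac{1}{\Gamma(k)\theta^k}\left[(k-1)\theta\right]^{k-1}\exp\left(-(k-1)\right)$, and bounds this by $\tfrac{k(k-1)^{k-1}}{\Gamma(k)}\tfrac{1}{\underline{\sigma}^2}$ via $\theta^{-1}=k/\sigma^2\leq k/\underline{\sigma}^2$ and $e^{-(k-1)}\leq 1$, while declaring the Gaussian bound immediate. Your flag on the $\underline{\sigma}$-versus-$\underline{\sigma}^2$ mismatch is a legitimate catch: the stated Gaussian bound holds as written only when $\underline{\sigma}\leq 1$, and the natural uniform bound $\tfrac{1}{\sqrt{2\pi}\,\underline{\sigma}}$ is all that the downstream applications actually require.
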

\begin{proof}
	For the first statement, $\tfrac{1}{\sigma}\phi(\tfrac{y-\mu}{\sigma}) \leq \tfrac{1}{\sqrt{2\pi} \underline{\sigma}^2 }$ is immediate. 
	For the second statement, note that the gamma density $\gamma(s^2 \mid k,\theta)$ attains a maximum (over $s^2$) of  
	\begin{equation}
		\begin{aligned}[b]
			\tfrac{1}{\Gamma(k)\theta^k} [(k-1)\theta]^{k-1} \exp\left(-(k-1)\right)
			\leq&
			\tfrac{k(k-1)^{k-1}}{\Gamma(k)} \tfrac{1}{\underline{\sigma}^2}.
		\end{aligned}
	\end{equation}
\end{proof}

\subsection{NPMLE density estimation}

This section provides adaptations of results in \cite{Ghosal2001} on maximum likelihood estimation of normal mixtures to accommodate the setting\footnote{That is, to accommodate bivariate distributions on $(\mu,\sigma)$ with possibly growing supports for both $\mu$ and $\sigma$ and also multiple observations per unit.} of Theorem~\ref{thm:eb.opt}. 
Lemmas~ \ref{lem:vdv.lem.3.4}, \ref{lem:vdv.thm.3.3.1}+\ref{lem:vdv.thm.3.3.2}  and \ref{lem:vdv.thm.4.1} are extensions of Lemma~3.4, Theorem~3.3 and Theorem~4.1 of \cite{Ghosal2001} respectively.

\begin{remark}[Relation to literature on location-scale Gaussian mixtures]
\label{rem:relation.cdb}
Viewing our model as $n$ iid draws of $J$-dimensional vectors where each is from a multivariate Gaussian with common mean and spherical covariance matrix, we are able to exploit this special structure to achieve better (i.e. poly-logarithimic) entropy bounds relative to the literature, such as the linear entropy bounds in \cite{Canale2017} that allow for more general mean and covariance matrix structures; see Remark 2(ii) of \cite{Canale2017}. 
% Our results match those of \cite{Ghosal2001} when $\sigma$ is a known constant (homoskedasticity), so the logarithmic orders are sharp; the excess over the fixed-support case is the cost of the growing supports and the Gamma factor.
Indeed abstracting from the fact that we have multiple observations per unit, and assuming a fixed support for $\sigma$, our entropy bound in Lemma~\ref{lem:vdv.thm.3.3.2} matches that of Theorem 3.3 of \cite{Ghosal2001}.
\end{remark}

Let 
$
\mathcal{F}(\epsilon,v_1,v_2,C,\underline{\sigma}) := \{ f_G: G \in \mathcal{G}(\epsilon,v_1,v_2,C,\underline{\sigma}) \}
$,
where 
$\mathcal{G}(\cdot)$ is as defined in \eqref{eq:def.trunc}. These two classes are sometimes shorthand as $\mathcal{F}(\epsilon)$ and $\mathcal{G}(\epsilon)$ respectively.
\begin{lemma} %[SC Lemma B.11 / VG Theorem 4.1]
	\label{lem:vdv.thm.4.1}
	Suppose
	$
	f_{0,n}\in\mathcal{F}(\frac{1}{n}).
	$
	Define
	\begin{equation*}
		\varepsilon_n := n^{-1/2} (\log n)^{2\left[v_1\vee\frac{1}{2} + v_2\right] + \frac{1}{2}}.
	\end{equation*}
	Let $\hat{f}_n \in\mathcal{F}(n^{-1})$ satisfy
	\begin{equation*}
		n^{-1}\sum_{i=1}^n \hat{f}_n(y_i,s_i^2) \geq 
		\sup_{f\in\mathcal{F}(n^{-1})}
		n^{-1}\sum_{i=1}^n f(y_i,s_i^2)
		- 
		\tfrac{1}{24} \varepsilon_n^2.
	\end{equation*}
	Then there exist positive constants\footnote{See e.g., final remark of Section 3 of \cite{Wong1995} for choice of constants.} $(c_1,c_2)$ such that
	\begin{equation*}
		\mathbb{P}\left\{
		d(\hat{f}_n,f_{0,n}) > c_1 \varepsilon_n
		\right\}
		\lesssim 
		\exp\big(-c_2 (\log n)^3\big).
	\end{equation*}
\end{lemma}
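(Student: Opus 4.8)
The plan is to deduce this from the general theory of sieve maximum likelihood estimation, in particular the exponential convergence-rate inequality of \cite{Wong1995}, for which the single substantive input is a bracketing-entropy bound for the class $\mathcal{F}(n^{-1})$. Since $\hat{p}_n$ is an approximate maximizer of the empirical log-likelihood over $\mathcal{F}(n^{-1})$ to within $\tfrac{1}{24}\varepsilon_n^2$, and $p_{0,n}\in\mathcal{F}(n^{-1})$ by hypothesis, the problem fits the template of that theorem once the entropy integral condition is verified at the candidate rate $\varepsilon_n$. First I would reduce the tail probability to an entropy integral: the theorem of \cite{Wong1995} guarantees that if
\begin{equation*}
	\int_{\varepsilon^2/c}^{\varepsilon} \sqrt{H_{[\,]}\big(u,\mathcal{F}(n^{-1}),d\big)}\,du \lesssim \sqrt{n}\,\varepsilon^2
\end{equation*}
holds at $\varepsilon=\varepsilon_n$ for a suitable constant $c$, then $\mathbb{P}\{d(\hat{p}_n,p_{0,n})>c_1\varepsilon_n\}\lesssim \exp(-c_2 n\varepsilon_n^2)$, so the entire argument hinges on controlling $H_{[\,]}(\epsilon,\mathcal{F}(n^{-1}),d)$.

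The key step is the bracketing-entropy bound, which I would obtain from the paper's extensions of \cite{Ghosal2001}. Using Lemma~\ref{lem:vdv.lem.3.4} to bound the mixture densities $p_G$ and reduce Hellinger bracketing to covering the location--scale support, together with Lemmas~\ref{lem:vdv.thm.3.3.1} and \ref{lem:vdv.thm.3.3.2} for the resulting count over the logarithmically growing support boxes of $(\mu,\sigma)$, I would establish
\begin{equation*}
	\log N_{[\,]}\big(\epsilon,\mathcal{F}(\epsilon),d\big) \lesssim \left(\log\tfrac{1}{\epsilon}\right)^{4[\gamma_1\vee\frac{1}{2}+\gamma_2]+1}.
\end{equation*}
This polylogarithmic rate is the crux: it reflects the super-smoothness of Gaussian mixtures, and the exponent is driven by the diameters $(\log\tfrac1\epsilon)^{\gamma_1}$ and $(\log\tfrac1\epsilon)^{\gamma_2}$ of the location and scale supports, together with the $k$ observations per unit entering $p_G$.

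Plugging this bound into the entropy integral gives an integrand of order $(\log\tfrac1\epsilon)^{2[\gamma_1\vee\frac12+\gamma_2]+\frac12}$, so that (the integral being dominated near its upper limit) the left-hand side is of order $\varepsilon_n (\log\tfrac{1}{\varepsilon_n})^{2[\gamma_1\vee\frac12+\gamma_2]+\frac12}$; balancing against $\sqrt{n}\,\varepsilon_n^2$ and solving yields exactly $\varepsilon_n = n^{-1/2}(\log n)^{2[\gamma_1\vee\frac12+\gamma_2]+\frac12}$. For the tail, note that $n\varepsilon_n^2 = (\log n)^{4[\gamma_1\vee\frac12+\gamma_2]+1}$, and since $\gamma_2>0$ forces $4[\gamma_1\vee\frac12+\gamma_2]+1>3$, we obtain $\exp(-c_2 n\varepsilon_n^2)\leq \exp(-c_2(\log n)^3)$, which is the stated bound. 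The choice of constants $(c_1,c_2)$ follows from the final remarks of \cite{Wong1995}.

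The hard part will be the bracketing-entropy bound with the precise logarithmic exponent. Relative to \cite{Ghosal2001}, two features complicate the count: the scale support $[\underline{\sigma},L(\log\tfrac1\epsilon)^{\gamma_2}]$ is allowed to diverge, so brackets must be tracked as $\sigma$ grows (here the lower bound $\underline{\sigma}$ is essential for uniformly bounding the relevant derivatives of $\phi$), and $p_G$ is a product of $k$ Gaussian kernels rather than a single one. Both are absorbed into the preceding lemmas, but verifying that they combine to the exponent $4[\gamma_1\vee\frac12+\gamma_2]+1$ — and not a strictly larger power, which would degrade the final regret bounds — is where the care is needed.
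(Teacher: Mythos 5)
Your proposal is correct and follows essentially the same route as the paper: invoke the bracketing-entropy bound $\log N_{[\,]}(\epsilon,\mathcal{F}(\epsilon),d)\lesssim(\log\tfrac{1}{\epsilon})^{4[\gamma_1\vee\frac{1}{2}+\gamma_2]+1}$ from the paper's adaptations of \cite{Ghosal2001} (Lemmas~\ref{lem:vdv.lem.3.4}--\ref{lem:vdv.thm.3.3.2}), verify the entropy-integral condition $\int_0^{\varepsilon_n}\sqrt{\log N_{[\,]}(u,\mathcal{F}(n^{-1}),d)}\,du\lesssim\sqrt{n}\,\varepsilon_n^2$ at the stated rate, and apply Theorem~2 of \cite{Wong1995} to get the tail bound $\exp(-c_2 n\varepsilon_n^2)\leq\exp(-c_2(\log n)^3)$. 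The only cosmetic difference is that you state the Wong--Shen integral with lower limit $\varepsilon^2/c$ while the paper integrates from $0$, which is immaterial since the paper's integral dominates it.
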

\begin{proof}
	By Lemma~\ref{lem:vdv.thm.3.3.2}, we have that 
	\begin{equation}
		\log N_{[\,]}\big(\varepsilon_n,\mathcal{F}(n^{-1}),d\big) 
		\lesssim 
		\big(\log\tfrac{1}{\varepsilon_n}\big)^{4\big[v_1\vee\frac{1}{2}+v_2\big]+1}.
	\end{equation}
	Then, using $\varepsilon_n = n^{-1/2}(\log n)^{2[v_1\vee\frac{1}{2}+v_2]+\frac{1}{2}}$ and that the ratio of $(\tfrac{1}{2}\log n - C\log\log n)^p$ to $(\log n)^p$ is bounded,
	\begin{equation}
		\int_0^{\varepsilon_n} \sqrt{\log N_{[\,]}\big(u,\mathcal{F}(n^{-1}),d\big) } du
		\lesssim
		\big( \log \sqrt{n} - \big[2\big(v_1\vee\tfrac{1}{2}+v_2\big)+\tfrac{1}{2}\big]\log\log n \big)^{2\big[v_1\vee\tfrac{1}{2}+v_2\big]+\tfrac{1}{2}}
		\cdot \varepsilon_n
		\lesssim
		\sqrt{n} \varepsilon_n^2.
	\end{equation}
	Thus by Theorem 2 of \cite{Wong1995}, there exists positive constants $(c_1,c_2)$ such that, using $n\varepsilon_n^2 = (\log n)^{4[v_1\vee\frac{1}{2}+v_2]+1}$,
	\begin{equation}
		\mathbb{P}\left\{
		d(\hat{f}_n,f_{0,n}) > c_1 \varepsilon_n
		\right\}
		\leq
		4\exp\big(-c_2 n\varepsilon_n^2\big)
		\lesssim
		\exp\big(-c_2 (\log n)^3\big).
	\end{equation}
\end{proof}

\begin{lemma} %[SC Lemma B.12 / VG Theorem 3.3 II]
	\label{lem:vdv.thm.3.3.2}
	Let $v_1 \geq 1/2$, $v_2 \geq 0 $. We have
	$
		\log N_{[\,]}(\epsilon , \mathcal{F}(\epsilon),d)
		\lesssim 
		\big(\log\tfrac{1}{\epsilon}\big)^{4[v_1+v_2]+1}.
	$
\end{lemma}
\begin{proof}
	Let $\eta$ satisfying $\eta \leq \epsilon$ be a positive constant to be chosen later, and $f_i$ for $i=1,\dots,N$ be an $\eta$-net for $\mathcal{F}(\epsilon)$ in the sup-norm, with $N$ to be chosen later as well. 
	
	Define 
	\begin{equation}
	\begin{aligned}[b]
		E_1(y) :=& \begin{cases}
			\frac{\sqrt{J}}{\underline{\sigma}} \phi( \sqrt{J}\tfrac{y}{2\bar{\sigma}} ) & \text{ for } |y|> 2\bar{\mu} \\
			\frac{1}{\sqrt{2\pi}\underline{\sigma}} & \text{ otherwise.} 
		\end{cases}
		\\
		E_2(s^2) :=& \begin{cases}
			\frac{k^k}{\Gamma(k) \underline{\sigma}^2} [\tfrac{s^2}{\bar{\sigma}^2}]^{k-1} \exp(-k\tfrac{s^2}{\bar{\sigma}^2}) & \text{ for } s^2 > (k-1)\bar{\sigma}^2 \\
			\frac{k(k-1)^{k-1}}{\Gamma(k)} \frac{1}{\underline{\sigma}^2} & \text{ otherwise.}
		\end{cases}
	\end{aligned}
	\end{equation}
	Note that  $\frac{1}{\sigma} \phi( \frac{y-\mu}{\sigma/J^{1/2}} ) \leq E_1(y)$ and 
	$\gamma(s^2 \mid k,\theta) \leq H_2(s^2)$. Define $E(y,s^2)= E_1(y)E_2(s^2)$ and thus we have for any $f_G\in\mathcal{F}(\epsilon)$,
	\begin{equation}
		0 \leq f_G(y,s^2) \leq E(y,s^2).
	\end{equation}
	Define $l_i:= \max(f_i-\eta,0)$ and $u_i = \min( f_i+\eta , E )$ and note that $\mathcal{F}(\epsilon)\subseteq \cup_{i=1}^N [l_i,u_i]$, because if $|| p-f_i ||_\infty < \eta$, then $l_i(y,s) \leq p(y,s) \leq u_i(y,s)$. Furthermore, $u_i-l_i \leq \min(2\eta,E)$. Let $B:=\max(2C,\sqrt{8}\bar{\sigma})(\log 1/\eta)^{v_1} $ and $ D := C(\log 1/\eta)^{\frac{1}{2}+v_2}$. We now show that $||u_i-l_i||_1 \leq v_n \cdot \eta$. Splitting the integral over the bounded region $\{(|y|<B)\cap(s^2<D^2)\}$ (where $\int\min(2\eta,E) \leq 4BD^2\eta$) and its complement,
	\begin{equation}
	\begin{aligned}[b]
		\int_{ |y| > B } E(y,s^2) dyds^2
		=&
		C \cdot \int_{|y| > B} \exp(-\tfrac{1}{8\bar{\sigma}^2} y^2)dy \cdot\int E_2(s^2)ds^2
		\\\lesssim_{(1)}&
		\eta \cdot \bar{\sigma}^2
	\end{aligned}
	\end{equation}
	by a Mill's ratio bound, and similarly $\int_{ s^2 > D^2 } E(y,s^2) dyds^2 \lesssim \bar{\mu} \cdot \log(1/\eta)^{k-1+2v_2} \cdot \eta$ by integration by parts together with $D > \bar{\sigma}$.
	Thus $\int u_i - l_i \leq v_n \cdot \eta$, giving $N_{[\,]}(v_n\eta,\mathcal{F}(\epsilon),||\cdot||_1) \leq N$. By Lemma~\ref{lem:vdv.thm.3.3.1}, $N\lesssim (\log \frac{1}{\epsilon})^{4[v_1+v_2]+1}$; choosing $\eta v_n = \epsilon$ (so $\log \frac{1}{\eta} \sim \log \frac{1}{\epsilon}$),
	\begin{equation}
		N(\epsilon,\mathcal{F}(\epsilon),||\cdot||_1) \lesssim (\log \tfrac{1}{\epsilon})^{4[v_1+v_2]+1},
	\end{equation}
	and the lemma follows since $d^2(f,g) \leq ||f-g||_1$.
\end{proof}

\begin{lemma} %[SC Lemma B.13 / VG Theorem 3.3 I]
	\label{lem:vdv.thm.3.3.1}
	Let $v_1 \geq 1/2$, $v_2 \geq 0 $. We have
	\begin{equation}
		\log N(\epsilon , \mathcal{F}(\epsilon) ,||\cdot||_\infty)
		\lesssim 
		\big(\log\tfrac{1}{\epsilon}\big)^{4[v_1+v_2]+1}
	\end{equation}
\end{lemma}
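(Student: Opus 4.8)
The plan is to adapt the sup-norm entropy bound for Gaussian mixtures of \cite{Ghosal2001} (their Theorem~3.1) to our location--scale setting, in which the supports grow like powers of $\log\tfrac1\epsilon$ and each mixing draw produces $k$ observations. The argument has two movements: a deterministic approximation step that replaces an arbitrary $p_G$, $G\in\mathcal{G}(\epsilon)$, by a finitely supported mixture $p_{G^*}$ up to sup-norm error $\epsilon$, and a discretization step that covers the resulting finite-dimensional family.

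First I would reduce to discrete mixing measures by moment matching. For fixed $G\in\mathcal{G}(\epsilon)$ I would construct a discrete $G^*$, supported on the same rectangle $[-\bar\mu,\bar\mu]\times[\underline\sigma,\bar\sigma]$ with $\bar\mu=L(\log\tfrac1\epsilon)^{\gamma_1}$ and $\bar\sigma=L(\log\tfrac1\epsilon)^{\gamma_2}$, with at most $N$ atoms and agreeing with $G$ on all joint moments of $(\mu,\sigma)$ up to suitable marginal degrees. Writing the per-observation kernel as $\tfrac1\sigma\phi(\tfrac{x-\mu}\sigma)=\sqrt{\tau/2\pi}\,e^{-\tau x^2/2}$ in the precision $\tau=\sigma^{-2}$, I would Taylor-expand the product kernel in $\mu$ and in $\tau$ and control the remainders. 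The location derivatives are Hermite-weighted Gaussians with $\|H_m\phi\|_\infty\lesssim\sqrt{m!}$, and the precision derivatives gain a full $1/m!$ from the exponential, while the precision range $[\bar\sigma^{-2},\underline\sigma^{-2}]$ stays bounded because $\sigma\ge\underline\sigma$. Matching $m$ moments therefore produces a remainder of order $(C\bar\mu)^{m}/\sqrt{m!}$ in the location direction and an analogous factorially damped term in the precision direction, so $\|p_G-p_{G^*}\|_\infty\lesssim\epsilon$ as soon as the matched degrees exceed a power of $\log\tfrac1\epsilon$ set by $\bar\mu$ and $\bar\sigma$. It is precisely here that $\gamma_1\ge\tfrac12$ enters, since it guarantees $\bar\mu^2\gtrsim\log\tfrac1\epsilon$, so the growing support, rather than the target accuracy, fixes the order; moment-space duality then realizes the matched moments with $N$ atoms, where $N$ is polylogarithmic in $\tfrac1\epsilon$.

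Second I would cover $\{p_{G^*}\}$ in $\|\cdot\|_\infty$. On the block $\{|x_i|\le B\}$ with $B\asymp\bar\sigma\sqrt{\log\tfrac1\epsilon}$ the kernels $\kappa(\cdot;\mu,\sigma):=\prod_{i\le k}\tfrac1\sigma\phi(\tfrac{x_i-\mu}\sigma)$ are uniformly bounded and Lipschitz in $(\mu,\sigma)$ (using $\sigma\ge\underline\sigma$ and Lemma~\ref{lem:unif.bd.dens}), so placing the $N$ atom locations on a grid of mesh $\asymp\epsilon/N$ and the $N$ weights on a grid of mesh $\asymp\epsilon/N$ yields an $\epsilon$-net of cardinality $\lesssim(\mathrm{poly}(1/\epsilon))^{N}$; the tails $\{|x_i|>B\}$ cost nothing extra because both $p_{G^*}$ and its net neighbour are dominated there by the same Gaussian envelope, already below $\epsilon$. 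Taking logarithms gives $\log N(\epsilon,\mathcal{F}(\epsilon),\|\cdot\|_\infty)\lesssim N\log\tfrac1\epsilon$.

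The entire bound thus rests on the size of $N$, and this is the step I expect to be the main obstacle. The number of atoms is governed by how many joint $(\mu,\tau)$-moments must be matched to beat the growth $\bar\mu^{m}$ and the scale range against the factorial savings, and carefully combining the location contribution (scaling with $\gamma_1$) with the scale contribution (scaling with $\gamma_2$) in the two-dimensional moment problem is exactly the delicate bookkeeping that produces $N\lesssim(\log\tfrac1\epsilon)^{4[\gamma_1+\gamma_2]}$; multiplying by the per-atom discretization cost $\log\tfrac1\epsilon$ then delivers the claimed exponent $4[\gamma_1+\gamma_2]+1$. The $k$ observations and the lower bound $\underline\sigma$ enter only through constants and through keeping the precision range bounded, so they do not affect the exponent. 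The subtlety is entirely in tracking the interaction between the two mixing directions and the $\log$-growing supports, which is where I would concentrate the technical effort.
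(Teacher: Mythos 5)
Your proposal is correct and follows essentially the same route as the paper: your first ``movement'' (moment matching to replace $p_G$ by a discrete mixture with $N\lesssim(\log\tfrac1\epsilon)^{4[\gamma_1+\gamma_2]}$ atoms, with the factorial remainder beating the polylog support growth and $\gamma_1\ge\tfrac12$ entering exactly where you say) is the paper's Lemma~\ref{lem:vdv.lem.3.4}, and your second (gridding the $N$ atom locations and simplex weights to get covering cost $N\log\tfrac1\epsilon$) is the body of the paper's proof of Lemma~\ref{lem:vdv.thm.3.3.1}. The only cosmetic difference is that the paper Taylor-expands the joint exponential in $z=\sum_i(x_i-\mu)^2/(2\sigma^2)$ rather than expanding separately in $\mu$ and $\tau$, but the bookkeeping ($l^*\asymp(\log\tfrac1\epsilon)^{2[\gamma_1+\gamma_2]}$ matched degrees, hence $\asymp(l^*)^2$ mixed moments and atoms) yields the same exponent.
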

\begin{proof}
	Let $\mathcal{F}_{\text{d}}(\epsilon):= \{ f_G : G\in\mathcal{G}_{\text{d}}(\epsilon) \}$ where 
	\begin{equation}
		\mathcal{G}_{\text{d}}(\epsilon) 
		:=
		\big\{
		G\in\mathcal{G}(\epsilon)
		\text{ is discrete with at most 
			$N\leq C\big(\log\tfrac{1}{\epsilon}\big)^{4[v_1+v_2]}$} 
		\text{ support points}
		\big\}.
	\end{equation}
	
	By Lemma~\ref{lem:vdv.lem.3.4}, $C$ is a constant that can be chosen such that $\mathcal{F}_{\text{d}}(\epsilon)$ is an $\epsilon$-net\footnote{In the $|| \cdot ||_\infty$ norm.} over $\mathcal{F}(\epsilon)$. Thus an $\epsilon$-net over $\mathcal{F}_{\text{d}}(\epsilon)$ will be an $2\epsilon$-net\footnote{Also in the $|| \cdot ||_\infty$ norm.} over $\mathcal{F}(\epsilon)$.
	
	Now we choose an $\epsilon$-net $\mathcal{S}$ over the $N$-dimensional simplex for the $L_1$ norm. By Lemma A.4 of \cite{Ghosal2001}, this can be chosen such that $|\mathcal{S}|\leq \big(\tfrac{5}{\epsilon}\big)^N$. 
	
	We further define $\mathcal{F}_{\text{d}}'(\epsilon) := \{ f_{G} : G\in\mathcal{G}_{\text{d}}'(\epsilon)\}$ where
	\begin{align}
		\mathcal{G}_{\text{d}}'(\epsilon)
		:=
		\big\{
		&G\in\mathcal{G}_{\text{d}}(\epsilon)
		\text{ with $N$ support points of the form } 
		(\pm\eta_1\epsilon,\underline{\sigma}+\eta_2\epsilon)
		\nonumber\\
		& \text{where $\eta_1,\eta_2=0,1,\dots,$ with its weights coming from $\mathcal{S}$}
		\big\}.
	\end{align}
	Clearly we have $\mathcal{F}_{\text{d}}'(\epsilon)\subset \mathcal{F}_{\text{d}}(\epsilon)$.
	
	Now for each
	\begin{equation}
		f_{G}(y,s^2) := \textstyle\sum_{j=1}^N w_j \phi(y |\mu_j,\frac{\sigma_j}{\sqrt{J}})\gamma(s^2|k,\theta_j) \in \mathcal{F}_{\text{d}}(\epsilon),
	\end{equation}
	we choose the 
	\begin{equation}
		f_{G'}(y,s^2) := \textstyle\sum_{j=1}^N w_j' \phi(y |\mu_j',\frac{\sigma_j'}{\sqrt{J}})\gamma(s^2|k,\theta_j') \in \mathcal{F}_{\text{d}}'(\epsilon),
	\end{equation}
	that is closest in the sense that 
	$|\mu_j-\mu_j^\prime| < \epsilon$, and $|\sigma_j - \sigma_j'| < \epsilon$, and 
	$\textstyle\sum_{j=1}^N |w_j - w_j'| < \epsilon$ for all $j=1,\dots,N$.
	Then
	\begin{align}
		&\sup_x
		\big|
		f_G(y,s^2) - f_{G'}(y,s^2)
		\big|
		\nonumber\\
		\leq&
		\sup_x 
		\bigg|
		\sum_j^N w_j 
		\bigg\{ \phi(y |\mu_j,\frac{\sigma_j}{\sqrt{J}})\gamma(s^2|k,\theta_j) - 
		\phi(y |\mu_j',\frac{\sigma_j}{\sqrt{J}})\gamma(s^2|k,\theta_j)
		\bigg\}
		\bigg|
		\nonumber\\
		+&
		\sup_x 
		\bigg|
		\sum_j^N w_j 
		\bigg\{ \phi(y |\mu_j',\frac{\sigma_j}{\sqrt{J}})\gamma(s^2|k,\theta_j) - 
		\phi(y |\mu_j',\frac{\sigma_j'}{\sqrt{J}})\gamma(s^2|k,\theta_j')
		\bigg\}
		\bigg|
		\nonumber\\
		+&
		\sup_x 
		\bigg|
		\sum_j^N [w_j - w_j'] \phi(y |\mu_j',\frac{\sigma_j'}{\sqrt{J}})\gamma(s^2|k,\theta_j')
		\bigg|.
		\label{eq:vdv.lem.3.4.1}
	\end{align}
	Using the fact that the derivatives of $\phi(y |\mu,\frac{\sigma}{\sqrt{J}})\gamma(s^2|k,\theta)$ with respect to $\sigma$ and $\mu$ are uniformly bounded for $\sigma \geq \underline{\sigma}$ and $k\geq1$, the first and second terms on the RHS of \eqref{eq:vdv.lem.3.4.1} are bounded by constant multiples of $\sup_{j\leq N}|\mu_j-\mu_j'| < \epsilon$ and $\sup_{j\leq N}|\sigma_j-\sigma_j'| < \epsilon$ respectively. And again using a similar argument we can show that the final term on the RHS of \eqref{eq:vdv.lem.3.4.1} is bounded by a constant multiple of $\textstyle\sum_{j=1}^N |w_j-w_j'|<\epsilon$.
	
	Thus we have $\sup_x
	\big|
	f_G(y,s^2) - f_{G'}(y,s^2)
	\big| \lesssim \epsilon$, which implies that $\mathcal{F}_{\text{d}}'(\epsilon)$ is a $\epsilon$-net over $\mathcal{F}_{\text{d}}(\epsilon)$, and consequently a $2\epsilon$-net over $\mathcal{F}(\epsilon)$.
	Furthermore, $|\mathcal{F}_{\text{d}}'(\epsilon)| \lesssim (\tfrac{2\bar{\mu}}{\epsilon})^N \cdot (\tfrac{\bar{\sigma}-\underline{\sigma}}{\epsilon})^N \cdot (\tfrac{5}{\epsilon})^N = (10\bar{\mu}[\bar{\sigma}-\underline{\sigma}])^N \epsilon^{-3N}$. Taking logs and using $N \lesssim (\log 1/\epsilon)^{4[v_1+v_2]}$, $\bar{\mu},\bar{\sigma} \lesssim (\log 1/\epsilon)^{v_1\vee v_2}$,
	\begin{equation}
		\log N(C_1\epsilon , \mathcal{F}(\epsilon),||\cdot||_\infty)
		\lesssim
		\big(\log\tfrac{1}{\epsilon}\big)^{4[v_1+v_2] + 1}.
	\end{equation}
\end{proof}

\begin{lemma} %[SC Lemma B.14 / VG Lemma 3.4]
	\label{lem:vdv.lem.3.4}
	Suppose $v_1\geq \frac{1}{2}$, $v_2\geq 0$. For any $G\in \mathcal{G}(\epsilon)$, there exists 
	a discrete $G'\in\mathcal{G}(\epsilon)$ 
	with at most $N\lesssim (\log\tfrac{1}{\epsilon})^{4[v_1+v_2]}$ support points such that 
	$
		|| f_G - f_{G'} ||_\infty < \epsilon. 
	$
\end{lemma}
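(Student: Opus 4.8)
The plan is to construct $G'$ by a two-dimensional moment-matching (quadrature) argument, exactly as in the location-scale case of \cite{Ghosal2001} but tracking the growth of the support box $[-\bar{\mu},\bar{\mu}]\times[\underline{\sigma},\bar{\sigma}]$, where $\bar{\mu}:=L(\log\tfrac{1}{\epsilon})^{\gamma_1}$ and $\bar{\sigma}:=L(\log\tfrac{1}{\epsilon})^{\gamma_2}$. Write $f_x(\mu,\sigma):=\prod_{i=1}^k\tfrac{1}{\sigma}\phi(\tfrac{x_i-\mu}{\sigma})$, so that $p_G(x)=\int f_x\,dG$. I would choose $G'$ supported in the same box, with the same total mass and the same mixed moments $\int\mu^a\sigma^b\,dG$ for all $a+b\le d$, where $d$ is a degree to be selected. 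Cancelling the degree-$\le d$ Taylor polynomial of $f_x$ about the corner $(0,\underline{\sigma})$ against $G-G'$ reduces $p_G(x)-p_{G'}(x)$ to the integral of the order-$(d+1)$ Taylor remainder, uniformly in $x$.

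\emph{First step (existence and support count).} By a Carath\'eodory/Tchakaloff convex-hull argument, since matching the monomials $\{\mu^a\sigma^b:a+b\le d\}$ (which includes the constant, fixing total mass) amounts to $m\asymp d^2$ linear constraints, there exists a discrete probability measure $G'$ whose $\le m+1\lesssim d^2$ atoms lie in $\mathrm{supp}(G)$, hence in the box, so that $G'\in\mathcal{G}(\epsilon)$ and all these integrals agree.

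\emph{Second step (the crux: derivative bounds).} I would establish that, uniformly in $x\in\mathbb{R}^k$ and over the box,
\[
\bigl|\partial_\mu^a\partial_\sigma^b f_x(\mu,\sigma)\bigr|\lesssim \underline{\sigma}^{-(a+b+1)}\,C^{a+b}\sqrt{(a+b)!}.
\]
This is where the scale direction bites: differentiating in $\sigma$ produces high powers of $1/\sigma$ together with polynomials in $(x_i-\mu)/\sigma$ of growing degree, and the uniform-in-$x$ control must come entirely from the Gaussian factor via the Hermite-type bound $\sup_z|z^p\phi(z)|\lesssim\sqrt{p!}$; the product over the fixed number $k$ of coordinates is handled by Leibniz's rule, which only affects constants. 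I expect this estimate, and keeping its factorial and $\underline{\sigma}^{-1}$ dependence sharp, to be the main obstacle.

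\emph{Third step (remainder and choice of $d$).} Inserting the derivative bound into the multivariate Taylor remainder and summing the resulting binomial series gives, for constants $C,C'$,
\[
\|p_G-p_{G'}\|_\infty\;\lesssim\;\frac{1}{\underline{\sigma}}\,\frac{1}{\sqrt{(d+1)!}}\Bigl(\tfrac{C(\bar{\mu}+\bar{\sigma})}{\underline{\sigma}}\Bigr)^{d+1}\;\lesssim\;\frac{1}{\underline{\sigma}}\Bigl(\tfrac{C'(\bar{\mu}+\bar{\sigma})}{\underline{\sigma}\sqrt{d}}\Bigr)^{d+1},
\]
using Stirling's formula. Choosing $d\asymp (\bar{\mu}+\bar{\sigma})^2/\underline{\sigma}^2+\log\tfrac{1}{\epsilon}$ makes the base smaller than $\tfrac12$ and drives the right-hand side below $\epsilon$. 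Since $\bar{\mu}+\bar{\sigma}\asymp(\log\tfrac{1}{\epsilon})^{\gamma_1\vee\gamma_2}$ and $\gamma_1\ge\tfrac12$, this yields $d\asymp(\log\tfrac{1}{\epsilon})^{2(\gamma_1\vee\gamma_2)}$, whence $N\lesssim d^2\asymp(\log\tfrac{1}{\epsilon})^{4(\gamma_1\vee\gamma_2)}\lesssim(\log\tfrac{1}{\epsilon})^{4[\gamma_1+\gamma_2]}$, as claimed.
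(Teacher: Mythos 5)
Your Step 1 (the Carath\'eodory moment-matching device, with atoms inside the support box) is fine, and it is the same device the paper uses via Lemma~A.1 of \cite{Ghosal2001}. The gap is in Step 2, and it is fatal. The claimed bound
\[
\bigl|\partial_\mu^a\partial_\sigma^b f_x(\mu,\sigma)\bigr|\;\lesssim\;\underline{\sigma}^{-(a+b+1)}\,C^{a+b}\sqrt{(a+b)!}
\]
is correct in the $\mu$-direction (Cram\'er's Hermite bound), but false in the $\sigma$-direction. The map $\sigma\mapsto e^{-u/(2\sigma^2)}$ with $u=(x_i-\mu)^2>0$ has an essential singularity at $\sigma=0$, so its Taylor series about any $\sigma_0>0$ has radius of convergence exactly $\sigma_0$; equivalently $\limsup_b\bigl(|\partial_\sigma^b|/b!\bigr)^{1/b}=1/\sigma_0$, i.e.\ the $b$-th $\sigma$-derivative genuinely grows like $b!\,C^b\sigma_0^{-b}$, not $\sqrt{b!}\,C^b$ (a bound of the latter form would force infinite radius of convergence, a contradiction). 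One can also see this concretely: $\partial_\sigma^b\bigl[\sigma^{-1}\phi(z)\bigr]=\sigma^{-(b+1)}P_b(z)\phi(z)$ with $\deg P_b=2b$, and $\sup_z|z^{2b}\phi(z)|\asymp(2b/e)^b\asymp\sqrt{(2b)!}\asymp 2^b\,b!$ up to polynomial factors --- the Gaussian absorbs a polynomial of degree $p$ at cost $\sqrt{p!}$, but here $p=2b$, not $b$, so indexing by the number of derivatives gives factorial, not root-factorial, growth. With the correct growth your Step 3 remainder becomes $\asymp\bigl(C(\bar{\mu}+\bar{\sigma})/\underline{\sigma}\bigr)^{d+1}$, which diverges in $d$: since $\bar{\sigma}\asymp L(\log\tfrac{1}{\epsilon})^{\gamma_2}$ and $\bar{\mu}\asymp L(\log\tfrac{1}{\epsilon})^{\gamma_1}$ both exceed the convergence radius $\asymp\underline{\sigma}$, the joint Taylor polynomial in $(\mu,\sigma)$ simply does not approximate $f_x$ over the whole box for any choice of $d$.

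This obstruction is precisely what the paper's proof is built to avoid. Instead of expanding in $(\mu,\sigma)$, it writes $\prod_{i=1}^k\sigma^{-1}\phi\bigl(\tfrac{x_i-\mu}{\sigma}\bigr)=(\sqrt{2\pi}\sigma)^{-k}e^{y}$ with $y=-\sum_i(x_i-\mu)^2/(2\sigma^2)$ and expands the \emph{entire} function $e^y$ to order $l-1$, using the global remainder bound $|e^y-\sum_{j<l}y^j/j!|\leq(e|y|)^l/l^l$. The truncated part is then a polynomial in $\mu$ and $\sigma^{-2}$ (times the prefactor $\sigma^{-k}$), annihilated by matching $\asymp l^2$ mixed moments of the form $\mu^a\sigma^{-(2j+k)}$ --- note: inverse powers of $\sigma$, not powers of $\sigma$ --- and the remainder is small because, after restricting to $\max_i|x_i|\leq M$, one has $|y|\lesssim(\log\tfrac{1}{\epsilon})^{2(\gamma_1+\gamma_2)}$ and one may take $l\asymp(\log\tfrac{1}{\epsilon})^{2[\gamma_1+\gamma_2]}\gg\sup|y|$; the region $\max_i|x_i|>M$ is handled separately by Gaussian tail bounds, since there both $p_G$ and $p_{G'}$ are themselves $O(\epsilon)$. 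If you want to rescue a derivative-based argument, you must expand in a variable in which the kernel is entire with controlled growth (e.g.\ the variable $y$ above, or $\mu$ jointly with $\tau=\sigma^{-2}$ keeping $\sigma^{-k}$ outside the expansion), never in $\sigma$ itself.
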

\begin{proof}
	Define
	\begin{align}
		\mathcal{D}(\epsilon)
		:=& \bigg\{ 
		(y,s^2) : 
		|y| \leq \bar{\mu}  + \bar{\sigma}(\log 1/\epsilon^2)^{1/2} \text{ and } 
		s \leq \tfrac{1}{k^{1/2}}\bar{\sigma}(\log [1/\epsilon^2])^{1/2}
		\bigg\}
	\end{align}
	where $(\bar{\mu},\bar{\sigma})$ denote $(C(\log 1/\epsilon)^{v_1},C(\log 1/\epsilon)^{v_2})$. 
	
	Note that we can rewrite 
	\begin{equation}
		\phi(y | \mu,\tfrac{\sigma}{\sqrt{J}}) \cdot\gamma(s^2 | k,\theta)
		=
		M_J [s^2]^{\frac{J-3}{2}} \sigma^{-J} \exp\left(-\tfrac{1}{2\sigma^2} A(y,s^2,\mu)\right)
	\end{equation}
	where $M_J:= \sqrt{\frac{J}{2\pi}} \left(\frac{J-1}{2}\right)^{\frac{J-1}{2}} \div \Gamma(\frac{J-1}{2})$ and $A(y,s^2,\mu) :=J(y-\mu)^2 + (J-1)s^2$.
	
	\noindent\textsc{Taylor}: note the taylor approximation of the exponential function:
	\begin{equation}
		\left| \exp(-t) - \sum_{j=0}^{l-1} \frac{(-t)^j}{j!} \right| \leq \frac{(et)^l}{l^1}
	\end{equation}
	As a result, substituting in $t=\frac{1}{2\sigma^2}A$, 
	\begin{align}
		\label{eq:taylor}
		&\sup_{(y,s^2)\in\mathcal{D}(\epsilon)}| f_{G}(y,s^2) - f_{G'}(y,s^2) |
		\\\leq&  
		\sup_{(y,s^2)\in\mathcal{D}}
		\bigg| 
		M_J [s^2]^{\frac{J-3}{2}} \underbrace{\sum_{j=1}^{l-1} \frac{1}{j!} \int \frac{1}{\sigma^J} \left[-\frac{A}{2\sigma}\right]^j d(G-G')}_{\text{Moment}} 
		\bigg| 
		+
		2\sup_{(y,s^2)\in\mathcal{D}}
		\bigg| 
		M_J 
		\underbrace{[s^2]^{\frac{J-3}{2}} \sigma^{-J} \left[\frac{eA}{2\sigma^2l}\right]^l}_{\text{Remainder}}
		\bigg| 
	\end{align}
	
	\noindent\textsc{Extremities} ($\mathcal{F}$): 
	Where $(y,s^2)\notin\mathcal{D}(\epsilon)$, we have by Lemma B.9 that
	\begin{equation}
		\phi(y | \mu,\tfrac{\sigma}{\sqrt{J}}) \cdot\gamma(s^2 | k,\theta) = o(\epsilon) 
	\end{equation}
	and thus $f_G(y,s^2) = o(\epsilon)$ uniformly over $\mathcal{G}(\epsilon)$.
	
	\noindent\textsc{Taylor Remainder} ($\mathcal{F}$): 
	Note that in $\mathcal{D}(\epsilon)$, 
	\begin{equation}
	\begin{aligned}[b]
		A(y,s^2,\mu) \leq& \bar{\mu}^2 + \bar{\mu}\bar{\sigma}(\log 1/\epsilon^2)^{1/2} + \bar{\sigma}^2(\log 1/\epsilon^2)
		\leq
		[\log^{2v_1}(1/\epsilon) + \log^{2v_2+1} (1/\epsilon) ] 
		\\
		\text{ and }s^2 \leq& \log^{2v_2+1}(1/\epsilon)
	\end{aligned}
	\end{equation}
	and the remainder term of \eqref{eq:taylor} is bounded by
	\begin{equation}
		C
		\exp\left\{
		-l \bigg[\log l -\log ( \log^{2v_1}[1/\epsilon] + \log^{2v_2+1}[1/\epsilon] ) - \tfrac{J-3}{2l}\log (\log^{2v_2+1}[1/\epsilon]) + C\bigg]
		\right\},
	\end{equation}
	which is of the order $\exp\{ -l^*\}$ if $l^* \asymp \log^{2[v_1+v_2]}[1/\epsilon]$. Thus with $l=l^*$, the remainder is $\exp( -\log^{2(v_1+v_2)} 1/\epsilon ) \lesssim \epsilon$ since $v_1\geq1/2$.
	
	\noindent\textsc{Taylor Moment} ($\mathcal{F}$): Note that
	\begin{equation}
		\frac{1}{\sigma^J} \left[\frac{A}{\sigma}^2\right]^j =
		\frac{1}{\sigma^{J+2j}}(J[y-\mu]^2+[J-1]s^2) 
		=
		\frac{1}{\sigma^{J+2j}} \sum_{k=0}^{2j} a_k\mu^k 
	\end{equation}
	where $a_k$ for $k=0,\dots,2j$ are functions of $(y,s^2)$ and not of $(\mu,\sigma)$. As a result, in the moment term of \eqref{eq:taylor}, for each $j$ there are at most $(2j+1)$ powers of $(\mu,\sigma)$, and thus the entire moment term has at most $(l-1)^2$ powers of $(\mu,\sigma)$. With $l^*\asymp \log^{2(v_1+v_2)}[1/\epsilon]$, by Lemma A.1 of Ghosal and van der Vaart (2001), we can find a discrete distribution with $(l^*-1)^2$ number of support points such that the moment term of \eqref{eq:taylor} is zero. 
	
\end{proof}

\section{Results for Empirical Illustration}
\label{sec:app.supp}

\subsection{Microdata Normality}
\label{subsec:app.normality}

\begin{figure}[t!]
	\caption{Density and QQ Plots}
	\label{fig:app.qq}
	\begin{center}
		\begin{tabular}{cc}
			\includegraphics[width=.33\textwidth]{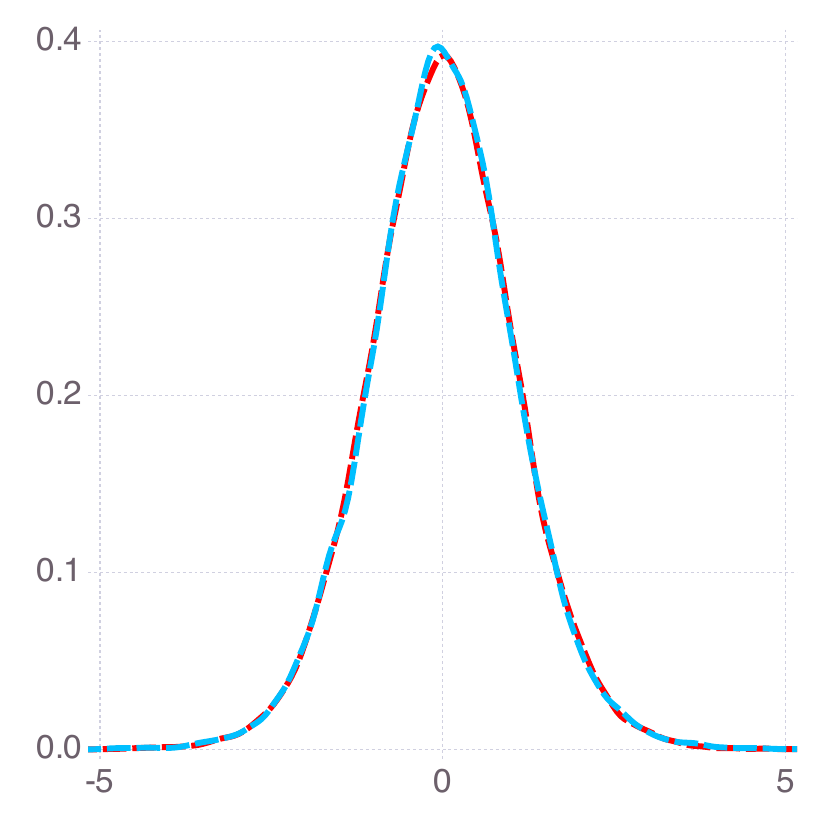}
			&
			\includegraphics[width=.33\textwidth]{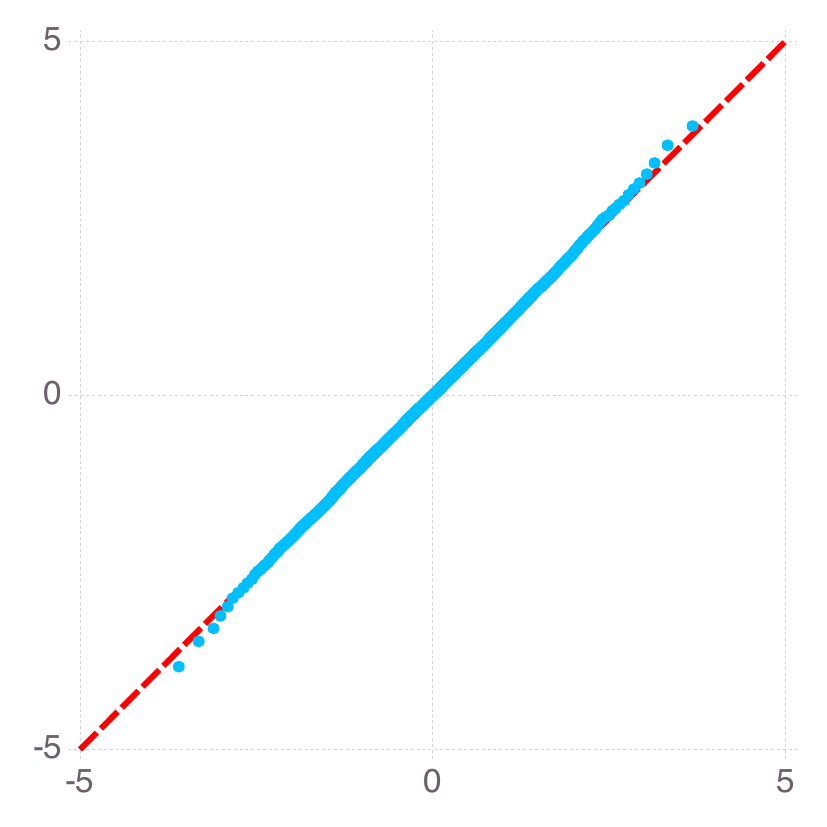}
		\end{tabular}
	\end{center}
	{\footnotesize {\em Notes:} Left panel plots the empirical density of the residuals \eqref{eq:ext.resid} (in blue) and the density of the $t$-mixture \eqref{eq:t.mixture} (in red). Right panel plots the residuals' empirical quantiles against the quantiles of the $t$-mixture, where the minimum and maximum quantiles are that of 0.001 and 0.999. }\setlength{\baselineskip}{4mm}
\end{figure}
We examine the microdata normality assumption: under \eqref{eq:app.model} of the paper, each studentized residual\footnote{See e.g. (4.13) of Montgomery, Peck \& Vining (2012).}
\begin{equation}
	r_j := \frac{y_j-y_{i(j)}}{\sqrt{\tilde{s}_{i(j)}^2 (1-1/J_{i(j)})  }} \sim t_{J_{i(j)}-2}
	\label{eq:ext.resid}
\end{equation}
where $i(j)$ is student $j$'s teacher identity and $\tilde{s}_{i(j)}^2$ is the leave-$j$-out sample variance for teacher $i(j)$. Thus under the null of \eqref{eq:app.model}, $r_j$ (marginalizing over $j$) follows a mixture of $t$-distributions:
\begin{equation}
	r_j \sim \sum_{i=1}^{n} \frac{J_i}{\sum_{i=1}^{n}J_i} \cdot t_{J_i-2}.
	\label{eq:t.mixture}
\end{equation}
A specification test based on the parametric bootstrap and either the Kolmogorov-Smirnov or Cramer-von Mises test statistics yield $p$-values of $0.34$ and $0.22$ respectively. Figure~\ref{fig:app.qq} displays the empirical density of the residuals against the density of \eqref{eq:t.mixture}, and also a QQ-plot. The results suggest that the microdata normality assumption is reasonable for the application, with only minimal differences in the tails.

\subsection{Unbiased Risk Estimators for the Validation Exercise}
\label{subsec:app.risk.est}
\begin{align}
	\hat{R}(\hat{\mu}_{\text{tr}},\mu)
	:=&
	\frac{1}{n}\sum_{i=1}^n
	[ \hat{\mu}_{\text{tr},i} - y_{\text{te},i} ]^2 - \tfrac{s_i^2}{J_{\text{te},i}}
	\nonumber\\
	\hat{R}(\hat{q}_{0.1,\text{tr}},q_{0.1})
	:=& 
	\frac{1}{n}\sum_{i=1}^n
	[ \hat{q}_{0.1,\text{tr},i} - \hat{q}_{0.1,\text{te},i}^{\text{ub}} ]^2 - \tfrac{s_i^2}{J_{\text{te},i}} - 1.28^2s_i^2 \left[
		\tfrac{J_{\text{te},i}-1}{2}
		\left(
		\tfrac{\Gamma\!\left(\frac{J_{\text{te},i}-1}{2}\right)}
		{\Gamma\!\left(\frac{J_{\text{te},i}}{2}\right)}
		\right)^{2}
		- 1
	\right]
\end{align}
where ``te'' subscript indicates construction based on $\mathcal{Y}^{\text{te}}$, and
$\hat{q}_{0.1,\text{te},i}^{\text{ub}} := y_{\text{te},i} - 1.28\hat{\sigma}_{\text{te},i}^{\text{ub}}$ where
\begin{equation}
		\hat{\sigma}_{\text{te},i}^{\text{ub}}
		:=
		s_{\text{te},i}
		\left(
		\tfrac{J_{\text{te},i}-1}{2}
		\tfrac{\Gamma\left(\frac{J_{\text{te},i}-1}{2}\right)}{\Gamma\left(\frac{J_{\text{te},i}}{2}\right)}
		\right).
	\end{equation}
\begin{lemma}
	\label{lm:unbiased.risk.1}
	Suppose that Assumption~\ref{ass:disagg.model} holds. Then,
	$\mathbb{E}_{G_0}[ \hat{R}(\hat{\mu}_{\text{tr}},\mu) ] = R_{G_0}(\hat{\mu}_{\text{tr}},\mu)$ and $\mathbb{E}_{G_0}[ \hat{R}(\hat{q}_{0.1,\text{tr}},q_{0.1}) ] = R_{G_0}(\hat{q}_{0.1,\text{tr}},q_{0.1})$.
\end{lemma}
\begin{proof}
	Conditional on $(\mu_i,\sigma_i)$, 
	\begin{align}
		&\mathbb{E}_{\mu,\sigma}[ \hat{\mu}_{\text{tr},i} - y_{\text{te},i} ]^2\nonumber\\
		=&
		\mathbb{E}_{\mu,\sigma}[\hat{\mu}_{\text{tr},i} - \mu_i]^2 
		+ 
		\mathbb{E}_{\mu,\sigma}[\mu_i - y_{\text{te},i} ]^2
		+
		\mathbb{E}_{\mu,\sigma}[\hat{\mu}_{\text{tr},i} - \mu_i][\mu_i - y_{\text{te},i} ] 
		\nonumber\\
		=&
		\mathbb{E}_{\mu,\sigma}[\hat{\mu}_{\text{tr},i} - \mu_i]^2 
		+ 
		\mathbb{E}_{\mu,\sigma}[\mu_i - y_{\text{te},i} ]^2
		+
		\mathbb{E}_{\mu,\sigma}[\hat{\mu}_{\text{tr},i} - \mu_i]
		\mathbb{E}_{\mu,\sigma}[\mu_i - y_{\text{te},i} ] 
		\nonumber\\
		=&
		\mathbb{E}_{\mu,\sigma}[\hat{\mu}_{\text{tr},i} - \mu_i]^2 
		+ 
		\tfrac{\sigma_i^2}{J_{\text{te},i}}
		+
		0
	\end{align}
	and an unbiased estimator of $\frac{\sigma_i^2}{J_{\text{te},i}}$ is $\frac{s_i^2}{J_{\text{te},i}}$. Integrating out $(\mu_i,\sigma_i)$ yields
	\begin{equation}
		\mathbb{E}_{G_0}\left([ \hat{\mu}_{\text{tr},i} - y_{\text{te},i} ]^2 - \tfrac{s_i^2}{J_{\text{te},i}}\right)
		=
		\mathbb{E}_{G_0}[ \hat{\mu}_{\text{tr},i} - \mu_i]^2.
	\end{equation}
	Average over $i$ yields the first part of the lemma. 

	Next, note that 
	$\hat{\sigma}_{\text{te},i}^{\text{ub}}$
	unbiasedly estimates $\sigma_i$, so 
	$\mathbb{E}_{\mu,\sigma}[\hat{q}_{0.1,\text{te},i}^{\text{ub}}] = q_{0.1,i}$.
	Following as above, we have
	\begin{align}
		&\mathbb{E}_{\mu,\sigma}[ \hat{q}_{0.1,\text{tr},i} - \hat{q}_{0.1,\text{te},i}^{\text{ub}} ]^2\nonumber\\
		=&
		\mathbb{E}_{\mu,\sigma}[\hat{q}_{0.1,\text{tr},i} - q_{0.1,i}]^2 
		+ 
		\mathbb{E}_{\mu,\sigma}[q_{0.1,i} - \hat{q}_{0.1,\text{te},i}^{\text{ub}}]^2
		\nonumber\\
		=&
		\mathbb{E}_{\mu,\sigma}[\hat{q}_{0.1,\text{tr},i} - q_{0.1,i}]^2 
		+ 
		\mathbb{E}_{\mu,\sigma}[\mu_i - y_{\text{te},i} ]^2
		+
		1.28^2\mathbb{E}_{\mu,\sigma}[\sigma_i-\hat{\sigma}_{\text{te},i}^{\text{ub}}]^2 
		\nonumber\\
		=&
		\mathbb{E}_{\mu,\sigma}[\hat{q}_{0.1,\text{tr},i} - q_{0.1,i}]^2  
		+ 
		\tfrac{\sigma_i^2}{J_{\text{te},i}}
		+
		1.28^2 \mathbb{V}_{\mu,\sigma}[\hat{\sigma}_{\text{te},i}^{\text{ub}}].
	\end{align}
	An unbiased estimator of $\frac{\sigma_i^2}{J_{\text{te},i}}$ is given above, and that of  $\mathbb{V}_{\mu,\sigma}[\hat{\sigma}_{\text{te},i}^{\text{ub}}]$ 
	is 
	\begin{equation}
		s_i^2 \left[
		\frac{J_{\text{te},i}-1}{2}
		\left(
		\frac{\Gamma\!\left(\frac{J_{\text{te},i}-1}{2}\right)}
		{\Gamma\!\left(\frac{J_{\text{te},i}}{2}\right)}
		\right)^{2}
		- 1
		\right].
	\end{equation}
	As a result, an unbiased estimator of 
	$\mathbb{E}_{\mu,\sigma}[\hat{q}_{0.1,\text{tr},i} - q_{0.1,i}]^2 $
	is
	\begin{equation}
		[ \hat{q}_{0.1,\text{tr},i} - \hat{q}_{0.1,\text{te},i}^{\text{ub}} ]^2 - \frac{s_i^2}{J_{\text{te},i}} - 1.28^2s_i^2 \left[
		\frac{J_{\text{te},i}-1}{2}
		\left(
		\frac{\Gamma\!\left(\frac{J_{\text{te},i}-1}{2}\right)}
		{\Gamma\!\left(\frac{J_{\text{te},i}}{2}\right)}
		\right)^{2}
		- 1
		\right].
	\end{equation}
	Integrating out $(\mu_i,\sigma_i)$ yields the above as an unbiased estimator of $\mathbb{E}_{G_0}[\hat{q}_{0.1,\text{tr},i} - q_{0.1,i}]^2 $, and averaging over $i$ yields $\hat{R}_{G_0}(\hat{q}_{0.1,\text{tr}},q_{0.1})$ as an unbiased estimator of $R(\hat{q}_{0.1,\text{tr}},q_{0.1})$.
\end{proof}

    \end{appendix}

\end{document}